\renewcommand{\hat}[1]{\widehat{#1}} 
\newcommand{\oV}{\overline{V}}
\newcommand{\oE}{\overline{E}}
\newcommand{\orr}{\overline{r}}
\newcommand{\opp}{\overline{p}}
\newcommand{\bu}{\mathbf{u}}
\newcommand{\bv}{\mathbf{v}}
\newcommand{\bj}{\mathbf{j}}
\newcommand{\be}{\mathbf{e}}
\newcommand{\bc}{\mathbf{c}}
\newcommand{\Z}{\mathbb{Z}}
\newcommand{\U}{\mathcal{U}}
\renewcommand{\P}{\mathcal{P}}
\renewcommand{\S}{\mathcal{S}}
\newcommand{\XOR}{3\mathrm{XOR}}
\newcommand{\CSP}[1]{{#1}\textit{-}\mathrm{CSP}}
\newcommand{\LC}{\mathrm{LC}}
\newcommand{\ULC}{\mathrm{ULC}}
\newcommand{\UGC}{\mathrm{UGC}}
\newcommand{\Lin}[1]{{#1}\textit{-}\mathrm{Lin}}
\newcommand{\cut}{\mathrm{MaxCut}}
\newcommand{\wq}{\mathrm{wq}}
\newcommand{\sq}{\mathrm{sq}}
\newcommand{\nc}{\mathrm{nc}}
\newcommand{\POVM}{\mathrm{POVM}}
\newcommand{\PVM}{\mathrm{PVM}}
\newcommand{\calA}{\mathcal{A}}
\numberwithin{equation}{section}
\newtheorem{theorem}{Theorem}
\newtheorem{lemma}[theorem]{Lemma}
\newtheorem*{claim*}{Claim}
\newtheorem{conjecture}[theorem]{Conjecture}
\newtheorem*{conjecture*}{Conjecture}
\newtheorem*{definition*}{Definition}
\newtheorem*{theorem*}{Theorem}
\newtheorem*{remark*}{Remark}
\newtheorem*{lemma*}{Lemma}
\newtheorem*{question*}{Question}
\newtheorem{definition}[theorem]{Definition}
\newtheorem*{example*}{Example}
\newcommand{\R}{\mathbb{R}}
\newcommand{\C}{\mathbb{C}}
\newcommand{\N}{\mathbb{N}}
\newcommand{\tr}{\operatorname{tr}}
\newcommand{\abs}[1]{\lvert #1 \rvert}
\newcommand{\Bigabs}[1]{\Bigl\lvert #1 \Bigr\rvert}
\newcommand{\M}{\mathcal{M}}
\newcommand{\expect}{\mathbb{E}}
\newcommand{\Obs}{\mathrm{Obs}}
\newcommand{\Paren}[1]{\left(#1\right)}
\newcommand{\bigparen}[1]{\big(#1\big)}
\newcommand{\Bigparen}[1]{\Big(#1\Big)}
\newcommand{\NP}{\mathrm{NP}} %
\newcommand{\Pp}{\mathrm{P}} %
\newcommand{\QMA}{\mathrm{QMA}} %
\newcommand{\MIP}{\mathrm{MIP}} %
\newcommand{\RE}{\mathrm{RE}} %
\newcommand{\poly}{\mathrm{poly}}
\newcommand{\hilb}{\cal{H}}
\let\epsilon=\varepsilon %
\newcommand{\Brac}[1]{\left[#1\right]}
\newcommand{\bigbrac}[1]{\big[#1\big]}
\newcommand{\Bigbrac}[1]{\Big[#1\Big]}
\newcommand{\Infl}{\mathrm{Inf}}
\begin{document}

\title{A Quantum Unique Games Conjecture}

\author[1]{Hamoon Mousavi}
\author[2]{Taro Spirig}

\affil[1]{Simons Institute for Theoretical Computer Science, University of California, Berkeley\\
hmousavi@berkeley.edu}
\affil[2]{QMATH, Department of Mathematical Sciences, University of Copenhagen\\
tasp@math.ku.dk}

\maketitle
    
\begingroup
        \makeatletter
    \renewcommand{\@makefnmark}{}
    \makeatother
    
    \renewcommand\thefootnote{}
    \footnotetext{This material is based upon work supported by the U.S. Department of Energy, Office of Science, National Quantum Information Science Research Centers, Quantum Systems Accelerator. TS is supported by the European Union under the Grant Agreement No 101078107, QInteract and VILLUM FONDEN via Villum Young Investigator grant (No 37532) and the QMATH Centre of Excellence (Grant No 10059).}
\endgroup

\begin{abstract}
After the NP-hardness of computational problems such as $3$SAT and MaxCut was established, a natural next step was to explore whether these problems remain hard to approximate. While the quantum extensions of some of these problems are known to be hard—indeed undecidable—their inapproximability remains largely unresolved. In this work, we introduce definitions for the quantum extensions of Label-Cover and Unique-Label-Cover. We show that these problems play a similarly crucial role in studying the inapproximability of quantum constraint satisfaction problems as they do in the classical setting.
\end{abstract}

\tableofcontents

\section{Introduction}\label{sec:introduction}
\subsection{Motivations}\label{sec:motivations}
The Label-Cover problem~\cite{ben-or,feige-kilian,arora0}, a well-known constraint satisfaction problem (CSP), has been central in the study of hardness of approximation. According to the PCP theorem~\cite{ben-or,babai,babai2,fortnow,feige,arora1,arora2,raz}, it is NP-hard to approximate Label-Cover within any additive constant. This result, combined with gap-preserving reductions from Label-Cover, has led to optimal inapproximability results for several other key problems, including $3$XOR and $3$SAT, due largely to the work of H\r{a}stad~\cite{hastad1,hastad2}.

An instance of Label-Cover consists of a bipartite graph with left vertex set $U$ and right vertex set $V$, a large label set (also called an alphabet) $\mathcal{L}$, and maps $\pi_{u,v}:\mathcal{L} \to \mathcal{L}$ associated with each edge $(u,v)$. These maps are usually called projections. A \emph{labelling} $a:U\cup V \to \mathcal{L}$ is said to satisfy an edge $(u,v)$ if it respects the associated projection map, meaning if $\pi_{u,v}(a(u)) = a(v)$. The \emph{value of the instance} is the maximum fraction of satisfied edges over all possible labelings. 

In quantum information theory, the quantum extension of Label-Cover, known as \emph{entangled projection games}~\cite{parallelreptition-Vidick}, is a special case of \emph{nonlocal games}~\cite{feige-lovasz,cleve_consequences_and_limits}.\footnote{We primarily adopt the CSP terminology, while referencing certain naming conventions from the nonlocal games literature for the convenience of readers familiar with that field. See Section \ref{sec:intro-quantum-value} for a discussion of our terminology.} The primary distinction lies in the definition of labeling. A \emph{quantum labeling} generalizes probabilistic labeling. In classical probabilistic labeling, each vertex is assigned a probability distribution over labels. In quantum labeling, each edge is assigned a distribution over pairs of labels—one for each vertex incident on the edge. 

More formally, a quantum labeling assigns quantum measurements to each vertex, with the measurement outcomes corresponding to labels. When both vertices of an edge are measured, the resulting probability distribution determines the likelihood that the edge is satisfied. The \emph{quantum value} of the instance is the maximum expected fraction of satisfied edges over all possible quantum labelings. A formal definition is given in Section \ref{sec:general-csps}.

In the quantum setting, the role of the PCP theorem is played by the $\MIP^* = \RE$ theorem of Ji et al.~\cite{ji_mip_re}: it is $\RE$-hard to approximate the quantum value of Label-Cover within any additive constant.\footnote{Recall that $\RE$ is the class of recursively enumerable languages. Stating that a problem is $\RE$-hard is another way of saying that it is undecidable.} Now, as with the classical case, one might expect to derive optimal inapproximability results for other quantum constraint satisfaction problems. O’Donnell and Yuen~\cite{odonnell-yuen}, for instance, extended H\r{a}stad’s results to the optimal inapproximability of quantum $3$XOR. So far, the classical and quantum theories of inapproximability have followed parallel trajectories.

In the classical development, after the initial success of H\r{a}stad, researchers discovered some of the limitations of Label-Cover. To address these, Khot~\cite{khot_original} introduced a modification to the Label-Cover problem, imposing a uniqueness condition on the projection maps. This modified problem is known as the Unique-Label-Cover and it remains an open question whether it is hard to approximate. However, assuming that Unique-Label-Cover is hard to approximate, researchers showed optimal inapproximability results for a variety of other problems, including MaxCut~\cite{kkmo}. This assumption is known as the Unique Games Conjecture (UGC)~\cite{khot_original}. 

A striking divergence arises between the classical and quantum theories of inapproximability with the introduction of Unique-Label-Cover. Kempe et al.~\cite{kempe} showed that the quantum value of Unique-Label-Cover is efficiently approximable. This implies that a \emph{direct} quantum analogue of the UGC does not hold. As a result, none of the classical consequences of the UGC can be straightforwardly extended to the quantum domain. This presents a significant challenge to the development of a comprehensive quantum theory for hardness of approximation. 

For instance, understanding the complexity of the anti-ferromagnetic Heisenberg model—a quantum generalization of MaxCut and a problem of great significance in physics and quantum information—remains an open question. While this problem is known to be $\QMA$-hard, the complexity of approximating it is still unresolved.\footnote{Recall that $\QMA$ is the quantum analogue of $\NP$.} Hwang et al.~\cite{quantum-max-cut-integrality-gap} made significant progress on this question, but their results rely on the classical UGC, which limits them to showing $\NP$-hardness rather than $\QMA$-hardness.

\begin{center}\emph{Finding the right candidate for a quantum extension of UGC is key for advancing the theory of inapproximability in the quantum setting. This work proposes such a candidate.}
\end{center}

We begin by noting that Kempe et al.’s algorithm for approximating the quantum value of Unique-Label-Cover applies only to some of many possible quantum values that are defined in the literature on nonlocal games. Examples of quantum values include, but are not limited to, the tensor-product, commuting-operator, synchronous, and oracularizable-synchronous values. These arise from different models for quantum entanglement. Label-Cover remains hard across all these models~\cite{slofstra_tsirelsons_problem_and_an_embedding_theorem,slofstra_set_of_quantum_correlations,ji_mip_re}, whereas Unique-Label-Cover is efficiently approximable in only some of them~\cite{kempe}. 

We choose a model where Unique-Label-Cover might still be hard to approximate. This choice forms the basis of our \emph{quantum Unique Games Conjecture} (qUGC). We give an informal overview in Section \ref{sec:intro-quantum-unique-games-conjecture}. Formal definitions can be found in Section~\ref{sec:label-cover}.

We then show how classical reductions from Unique-Label-Cover can be ``quantized'' to prove inapproximability results for quantum CSPs. We illustrate this approach with $2$-Lin~\cite{khot_original} and MaxCut~\cite{kkmo}, and we believe the same tools can be applied to Raghavendra’s seminal work on CSPs~\cite{prasad}. An informal overview is given in Section~\ref{sec:result-on-quantum-maxcut}. The proof ideas are sketched in Section~\ref{sec:proof-ideas}. Detailed statements and proofs are provided in Sections~\ref{sec:2lin} and~\ref{sec:maxcut}. 

We conclude this section with a few remarks.

\textbf{Relationship between UGC and qUGC.} The cases of $3$XOR, $2$-Lin, and MaxCut could be instances of a broader principle: any reduction from Label-Cover (or its variants such as Unique-Label-Cover, Smooth-Label-Cover, and $2$-to-$2$ Label-Cover) might be adaptable to the quantum setting. If true, this would imply that the classical UGC implies qUGC. We explore this possibility and related open questions in Section~\ref{sec:future}.

\textbf{Variants of qUGC.} Sections~\ref{sec:intro-quantum-unique-games-conjecture} and~\ref{sec:label-cover} present several variants of the quantum Unique Games Conjecture. These variants may help in proving hardness for other models, such as the anti-ferromagnetic Heisenberg model or CSPs investigated in~\cite{original}. These topics are further detailed as Problems 4, 7, and 8 in Section~\ref{sec:future}.

\textbf{Alternative Approaches to UGC.} Some of the variants of quantum UGC proposed in this paper may be easier to prove hardness for than the classical UGC itself. Proving hardness for any of these variants could strengthen confidence in the UGC. Conversely, some of our UGC variants may be easier to disprove, and doing so could offer insights into disproving UGC. Figure \ref{fig:unique-label-cover-final-complexity-landscape}, illustrates the reductions between variants of Unique-Label-Cover. 

\textbf{Universality.} Label-Cover is exhibiting a remarkable \emph{universality}, remaining hard across all the different quantum and classical models, see Figure \ref{fig:label-cover-complexity-landscape}. In contrast, the brittleness of Unique-Label-Cover aptly mirrors the difficulty in resolving the UGC. From another perspective, however, the constraints encountered in the search for a quantum analogue of the UGC could serve as a guiding principle in identifying the most appropriate model for quantum value from a computational standpoint. 

This paper builds on our previous work on noncommutative CSPs~\cite{original}, exploring how noncommutativity reshapes the landscape of CSPs. One of our motivations is the direct applications of this research to quantum information. We are also equally interested in whether a broader perspective on CSPs might uncover unifying principles that could provide new insights into classical CSPs. Our remarks on alternative approaches to UGC and universality are examples of the insights we are pursuing.

\subsection{Quantum Value}\label{sec:intro-quantum-value}
We illustrate the concept of \emph{quantum value} of a CSP instance using the example of MaxCut. For formal definitions refer to Section \ref{sec:csps}. Before proceeding, we briefly outline the connection to \emph{nonlocal games}; however, no prior knowledge in quantum information is required to follow this work. 

\textbf{Nonlocal Games.} The types of values we introduce for CSPs correspond directly to the values for \emph{nonlocal games} previously studied in quantum information. Every CSP instance can be viewed as a nonlocal game and vice versa and the correspondence between their values is straightforward. This relationship is detailed in Section 10.2 of~\cite{original}. The physical, mathematical, and computational considerations for the definitions of the various types of values are extensively explored in the literature on nonlocal games~\cite{bell,chsh,tsirelson1,tsirelson2,scholz_tsirelsons_problem,scholz2008tsirelson,ozawa_connes_embedding,junge_connes_embedding_problem,fritz_tsirelsons_problem_and_kirchbergs_conjecture,cleve_consequences_and_limits,paulsen2016estimating,helton2017algebras,kim2018synchronous,slofstra_set_of_quantum_correlations,slofstra_tsirelsons_problem_and_an_embedding_theorem,ji_mip_re,pi2}. For readers familiar with this area, Table \ref{tab:csp-two-provers} summarizes the correspondences relevant to this paper. Please also refer to the discussion on previous work at the end of this section for our choice of terminology.

\begin{table}[h!]
\centering
\begin{tabular}{ c|c } 
 CSP terminology & Nonlocal game terminology \\ 
 \hline
 \hline
 classical (Definition \ref{def:classical-value}) & classical synchronous \\ 
 \hline
 noncommutative (Definition \ref{def:noncommutative-value} and~\cite{original}) & quantum synchronous~\cite{paulsen2016estimating,helton2017algebras,kim2018synchronous,pi2} \\ 
 \hline
 quantum (Definition \ref{def:quantum-value}) & oracularizable quantum synchronous~\cite{helton2017algebras,ji_mip_re,pi2} \\
\end{tabular}
\caption{Correspondence between types of values for CSPs and nonlocal games.}
\label{tab:csp-two-provers}
\end{table}

\noindent\textbf{MaxCut.} The objective of MaxCut is to partition the vertices of a given graph $G=(V,E)$ into two subsets, such that the number of edges crossing the partition is maximized (as illustrated in Figure \ref{fig:maxcut-illustration}). 
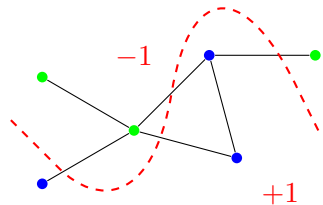
\begin{figure}[ht]
    \centering
    \begin{tikzpicture}
        \node[circle,fill=green, inner sep=0.05cm] (1) at (0,0) {};
        \node[circle,fill=blue, inner sep=0.05cm] (2) at (1,1) {};
        \node[circle,fill=blue, inner sep=0.05cm] (3) at (1.366,-0.366) {};
        \node[circle,fill=green, inner sep=0.05cm] (4) at (2.41,1) {};
        \node[circle,fill=green, inner sep=0.05cm] (5) at (-1.22,0.71) {};
        \node[circle,fill=blue, inner sep=0.05cm] (6) at (-1.22,-0.71) {};
        \draw[] (1) -- (2);
        \draw[] (1) -- (3);
        \draw[] (2) -- (3);
        \draw[] (2) -- (4);
        \draw[] (1) -- (5);
        \draw[] (1) -- (6);
        \draw[thick, red, dashed] (2.45,0) -- (2.2,0.5) .. controls (1.5,2) and (0.75,2)  .. (0.5,0.5) .. controls (0.25,-1) and (-0.5,-1) .. (-1,-0.5) -- (-1.7,0.2);
        
        \node at (1) [above=0.7cm,red]{$-1$};
        \node at (3) [below right=0.2cm,red]{$+1$};
    \end{tikzpicture}
    
    \caption{An instance of Max-Cut and an example of a partition (cut).}
    \label{fig:maxcut-illustration}
\end{figure}
\,\\

\noindent\textbf{Classical Value.} A (classical) partition can be represented as an assignment $x:V\to \{-1,+1\}$. The value of this partition, meaning the number of edges that cross the partition, is given by $\sum_{(i,j) \in E} \frac{1-x_ix_j}{2}$. The (classical) value of the instance, denoted $\omega_c(G)$, is the maximum partition value over all possible (classical) partitions:
\begin{equation}\label{eq:classical-value-max-cut}
    \openup\jot
    \begin{aligned}[t]
    \text{ maximize:}\quad &\sum_{(i,j) \in E} \frac{1-x_ix_j}{2} \\
            \text{subject to:}\quad & x_i \in \{-1,+1\}.
    \end{aligned}
\end{equation}
\,\\

\noindent\textbf{Noncommutative Value.} A \emph{noncommutative partition} assigns an observable to each vertex, i.e.\ $X:V \to \Obs(\mathcal{H})$, where $\mathcal{H}$ is some finite-dimensional Hilbert space. Observables are unitary operators with eigenvalues in the set $\{-1,+1\}$ (refer to Section \ref{sec:measurements} for background on quantum measurements and observables). If we substitute observables for $\pm 1$ in \eqref{eq:classical-value-max-cut}, we obtain the noncommutative value of the instance, denoted $\omega_\nc(G)$:
\begin{equation}\label{eq:noncommutative-value-max-cut}
\openup\jot
\begin{aligned}[t]
\text{ maximize:}\quad &\sum_{(i,j)\in E} \frac{1-\tr(X_iX_j)}{2} \\
        \text{subject to:}\quad & X_i \in \Obs(\mathcal{H}),
\end{aligned}
\end{equation}
where $\tr$ is the dimension-normalized trace on $\mathcal{H}$. Crucially, the optimization is over all finite-dimensional $\mathcal{H}$. When $\mathcal{H}$ is one-dimensional, we recover the classical value, therefore $\omega_\nc(G) \geq \omega_c(G)$. 

The noncommutative value has been the focus of our previous work on CSPs~\cite{original}. The main focus of this paper, however, is the \emph{quantum value} of CSPs, which we define next.
\,\\

\noindent\textbf{Quantum Value.} A \emph{quantum partition} lies somewhere in between a classical and noncommutative partition: A noncommutative partition $X$ is \emph{quantum} if $X_i$ and $X_j$ commute whenever $(i,j)$ is an edge. In quantum mechanics, commuting observables can be simultaneously measured, emphasizing the physical motivation for this additional commutation relation. 

The \emph{quantum value}, denoted $\omega_q(G)$, is therefore obtained from \eqref{eq:noncommutative-value-max-cut} by imposing the extra commutation relations: 
\begin{equation}\label{eq:quantum-value-max-cut}
\openup\jot
\begin{aligned}[t]
\text{ maximize:}\quad &\sum_{(i,j)\in E} \frac{1-\tr(X_iX_j)}{2} \\
        \text{subject to:}\quad & X_i \in \Obs(\mathcal{H}),\\
                          \quad & X_iX_j=X_jX_i \text{ for all } (i,j) \in E.
\end{aligned}
\end{equation}
Since every classical assignment is a quantum assignment and every quantum assignment is a noncommutative assignment, we obtain the chain of inequalities $\omega_\nc(G) \geq \omega_q(G) \geq \omega_c(G)$. 

Using the same approach outlined here, we can define the classical and quantum values for all CSPs (refer to Section \ref{sec:csps} for the relevant definitions). However, the noncommutative value is defined only for CSPs where each constraint involves only two variables.\footnote{This restriction is related to a similar issue for synchronous value of nonlocal games: with current definitions in the literature, the quantum synchronous value of a three-player nonlocal game reduces to the classical synchronous value. See also the remark after Definition \ref{def:noncommutative-value}.} 

We assume $\omega_c,\omega_q,\omega_\nc$ are normalized appropriately so that they are between $0$ and $1$. So for example in \eqref{eq:classical-value-max-cut},\eqref{eq:noncommutative-value-max-cut}, and \eqref{eq:quantum-value-max-cut}, the normalization amounts to dividing the objective function by the number of edges.

\textbf{Remark.} It is important to note that the commutativity requirement in a quantum assignment does not imply that all observables in the assignment commute with each other, nor does it mean that all observables can be diagonalized in the same basis. Commutativity is not a transitive property: every operator commutes with the identity operator, but not every pair of operators commute with each other. For example, in MaxCut, if two vertices are at a distance of two or more, their observables in a quantum assignment need not commute.

\,

\noindent\textbf{Previous Work.} To the best of our knowledge, the first article introducing a concept closely related to quantum assignment, under the name \emph{oracularizable strategies} and in the context of entangled nonlocal games, is by Ito et al.~\cite{ito-oracularization}. Oracularization is a widely used technique in the study of classical multiprover interactive proofs (see the references in \cite{ito-oracularization}). Helton et al.~\cite{helton2017algebras} approached \emph{oracularizable strategies} from a mathematical standpoint, using the term \emph{locally commuting algebras} (see Sections 7 and 8 of their paper). 

Our definition of quantum assignment is the direct translation (for CSPs) of the \emph{oracularizable quantum synchronous strategies} in~\cite{pi2}. The definition in~\cite{pi2} itself dates back to the breakthrough results~\cite{natarajan_neexp,ji_mip_re}. In this work, we refer to these assignments as \emph{quantum} because the requirement of \emph{commutativity} stems from \emph{simultaneous measurability in quantum mechanics}. In contrast, we refer to assignments as \emph{noncommutative} when there is no commutation requirement. This terminology originates from our earlier work on noncommutative CSPs~\cite{original}.\footnote{Our decision to use \emph{quantum synchronous strategies}, both here and in~\cite{original}, is driven by the syntactical simplicity they provide compared to tensor-product strategies. While we believe that all of our results can be extended to the tensor-product model, we leave a more detailed exploration of this for future work.}

\subsection{Quantum Unique Games Conjecture}\label{sec:intro-quantum-unique-games-conjecture}
Before moving on to our conjecture, we first briefly review decision problems and reductions between them. Readers familiar with the terminology can skip the next section. 

\subsubsection{Reductions} 
In the context of hardness of approximation, it is often easier to work with a version of a computational problem known as the \emph{decision problem with a promise}. Let $\P$ be any computational problem for which $\omega_c,\omega_q,\omega_\nc$ are defined. For example $\P$ could be MaxCut, Label-Cover, or Unique-Label-Cover. For every $s,t \in \{c,q,\nc\}$ and $\zeta,\delta > 0$, the \emph{decision problem} $\P_{s,t}(1-\zeta,\delta)$ is defined as follows: given an instance $\phi\in \P$ that is promised to be either
\begin{itemize}
    \item $\omega_s(\phi) \geq 1 - \zeta$, called a ``yes'' instance, or
    \item $\omega_t(\phi) < \delta$, called a ``no'' instance,
\end{itemize} 
decide whether the instance is a ``yes'' or ``no'' instance. When $s=t$, we simply write $\P_{s}(1-\zeta,\delta)$ to denote $\P_{s,s}(1-\zeta,\delta)$.

A \emph{reduction} from $\P_{s,t}(\gamma,\delta)$ to $\P'_{s',t'}(\gamma',\delta')$ is any efficient map $\P \to \P'$ that is
\begin{itemize}
\item complete: it maps ``yes'' instances to ``yes'' instances, and
\item sound: it maps ``no'' instances to ``no'' instances.
\end{itemize}
As a simple example, for every choice of $\zeta,\delta > 0$, the decision problem $\P_{q,\nc}(1-\zeta,\delta)$ reduces to $\P_{\nc,\nc}(1-\zeta,\delta)$ via the identity map. Soundness is trivial, and completeness holds because $\omega_\nc(\phi) \geq \omega_q(\phi)$ for all instances $\phi$. This is denoted by $\P_{q,\nc} \longrightarrow \P_{\nc,\nc}$ (we often drop the completeness and soundness parameters in our informal discussions in this introduction). There are several other trivial reductions between decision versions of $\P$, and they are illustrated in Figure \ref{fig:trivial-reductions}. 
\begin{figure}[ht]
\centering
	\begin{tikzpicture}

        \node[] (c_nc) at (0,4) {$\P_{c,\nc}$};   \node[] (c_q) at (2,4) {$\P_{c,q}$};    \node[] (c_c) at (4,4) {$\P_{c,c}$};
        \node[] (q_nc) at (0,2) {$\P_{q,\nc}$};   \node[] (q_q) at (2,2) {$\P_{q,q}$};    \node[] (q_c) at (4,2) {$\P_{q,c}$};
        \node[] (nc_nc) at (0,0) {$\P_{\nc,\nc}$};\node[] (nc_q) at (2,0) {$\P_{\nc,q}$}; \node[] (nc_c) at (4,0) {$\P_{\nc,c}$};
        
        \draw[ ->] (c_nc) to (c_q);\draw[ ->] (c_q) to (c_c);
        \draw[ ->] (q_nc) to (q_q);\draw[ ->] (q_q) to (q_c);
        \draw[ ->] (nc_nc) to (nc_q);\draw[ ->] (nc_q) to (nc_c);

        \draw[ ->] (c_nc) to (q_nc);\draw[ ->] (c_q) to (q_q);\draw[ ->] (c_c) to (q_c);
        \draw[ ->] (q_nc) to (nc_nc);\draw[ ->] (q_q) to (nc_q);\draw[ -> ] (q_c) to (nc_c);
    \end{tikzpicture}
    \caption{Trivial Reductions: if a problem in this diagram is hard (for some complexity class), then every problem reachable from it is at least as hard.}
\label{fig:trivial-reductions}
\end{figure}
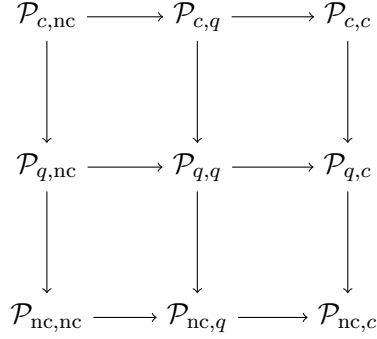 

Lastly, we use the abbreviations LC for Label-Cover and ULC for Unique-Label-Cover.

\subsubsection{Label-Cover}
The PCP theorem states that the classical value of Label-Cover is $\NP$-hard to approximate, or more precisely:
\begin{theorem*}[PCP Theorem] For every $\delta> 0$, there is a large enough alphabet over which $\LC_{c}(1,\delta)$ is $\NP$-hard.
\end{theorem*}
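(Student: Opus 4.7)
The plan is to prove the PCP theorem for Label-Cover in two stages: first obtain a basic gap version of some NP-hard CSP, then amplify the soundness gap to arbitrary $\delta$ via parallel repetition. I would begin by invoking the \emph{basic} PCP theorem, which yields absolute constants $\epsilon_0 > 0$ and $k_0 \in \N$ such that it is $\NP$-hard to distinguish satisfiable $k_0$-CSP instances from those of classical value at most $1 - \epsilon_0$. Two standard routes establish this: the algebraic route of Arora--Lund--Motwani--Sudan--Szegedy (arithmetization of $\SAT$, low-degree testing, and verifier composition), or Dinur's combinatorial gap amplification on expander graphs. Either way, the result is a constant-gap hard CSP.

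Next I would reduce from the gap-$k_0$-CSP to a Label-Cover instance with constant gap via the standard \emph{clauses-versus-variables} construction: make each constraint $C$ a left vertex with alphabet equal to its satisfying assignments, each variable $x$ a right vertex with alphabet $\{0,1\}$ (padded to match sizes), and for each edge $(C,x)$ let $\pi_{C,x}$ project an assignment of $C$'s variables onto the $x$-coordinate. Completeness is immediate and soundness loses only a constant factor, so this yields $\LC_{c}(1, 1 - \eta)$ being $\NP$-hard over some finite alphabet $\calA_0$ for a constant $\eta > 0$.

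To push the soundness $1 - \eta$ all the way down to $\delta$, I would apply Raz's parallel repetition theorem: if $\Phi$ is a Label-Cover instance with $\omega_c(\Phi) \leq 1 - \eta$, then the $k$-fold repeated instance $\Phi^{\otimes k}$, whose alphabet is $\calA_0^k$ and whose projection on each edge is the coordinatewise product of the original projections, satisfies
\begin{equation*}
\omega_c(\Phi^{\otimes k}) \leq (1 - \eta^c)^{\Omega(k / \log \lvert \calA_0 \rvert)}
\end{equation*}
for an absolute constant $c$. Completeness is trivial since satisfying assignments lift coordinatewise. Choosing $k = \Theta(\log(1/\delta)/\eta^c)$ drives the soundness below $\delta$, proving $\NP$-hardness of $\LC_c(1,\delta)$ over the alphabet $\calA_0^k$.

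The main obstacle is Raz's parallel repetition theorem itself; its proof rests on an intricate information-theoretic analysis of the conditional distributions induced by a hypothetical cheating strategy across the repeated coordinates, and even with subsequent simplifications by Holenstein and Rao the argument remains the technical heart of the theorem. For the present purposes, however, I would treat parallel repetition as a black box: the overall proof is then a clean composition of the basic PCP theorem, the combinatorial reduction to Label-Cover, and Raz's amplification.
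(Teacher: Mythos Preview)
Your proposal is a correct and standard proof outline for the strong form of the PCP theorem for Label-Cover. However, there is nothing to compare against: the paper does not prove this statement at all. It is stated as a known result from the literature (with citations to the original PCP papers and Raz's parallel repetition theorem), used purely as background to motivate the analogous $\MIP^* = \RE$ statement for the quantum value. Your sketch---basic PCP, clauses-versus-variables reduction to Label-Cover, then Raz's parallel repetition to drive soundness below $\delta$---is exactly the textbook route and would be a fine proof if one were required.
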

Since we have the trivial reductions $\LC_{c,c} \longrightarrow \LC_{q,c} \longrightarrow \LC_{\nc,c}$, we conclude that both $\LC_{q,c}$ and $\LC_{\nc,c}$ are at least $\NP$-hard. This is illustrated in Figure \ref{fig:label-cover-pcp-implication}.
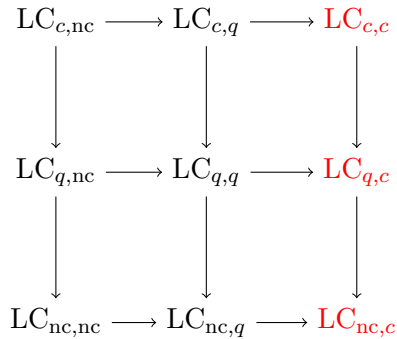
\begin{figure}[ht]
\centering
	\begin{tikzpicture}

        \node[] (c_nc) at (0,4) {$\LC_{c,\nc}$};   \node[] (c_q) at (2,4) {$\LC_{c,q}$};    \node[red] (c_c) at (4,4) {$\LC_{c,c}$};
        \node[] (q_nc) at (0,2) {$\LC_{q,\nc}$};   \node[] (q_q) at (2,2) {$\LC_{q,q}$};    \node[red] (q_c) at (4,2) {$\LC_{q,c}$};
        \node[] (nc_nc) at (0,0) {$\LC_{\nc,\nc}$};\node[] (nc_q) at (2,0) {$\LC_{\nc,q}$}; \node[red] (nc_c) at (4,0) {$\LC_{\nc,c}$};
        
        \draw[ ->] (c_nc) to (c_q);\draw[ ->] (c_q) to (c_c);
        \draw[ ->] (q_nc) to (q_q);\draw[ ->] (q_q) to (q_c);
        \draw[ ->] (nc_nc) to (nc_q);\draw[ ->] (nc_q) to (nc_c);

        \draw[ ->] (c_nc) to (q_nc);\draw[ ->] (c_q) to (q_q);\draw[ ->] (c_c) to (q_c);
        \draw[ ->] (q_nc) to (nc_nc);\draw[ ->] (q_q) to (nc_q);\draw[ -> ] (q_c) to (nc_c);
    \end{tikzpicture}
    \caption{Implication of the PCP theorem: problems in the third column are $\NP$-hard.}
\label{fig:label-cover-pcp-implication}
\end{figure}

What can we say about the other variants of Label-Cover in Figure \ref{fig:label-cover-pcp-implication}? The noncommutative analogue of the PCP theorem is the $\MIP^*=\RE$ theorem. This theorem states that the noncommutative value of Label-Cover is $\RE$-hard to approximate. In fact~\cite{ji_mip_re} proves a stronger theorem:
\begin{theorem*}[$\MIP^*=\RE$, informal] For every $\delta> 0$, the decision problem $\LC_{q,\nc}(1,\delta)$ is $\RE$-hard.
\end{theorem*}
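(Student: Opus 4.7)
The plan is to follow the recursive compression framework behind the $\MIP^*=\RE$ theorem and specialize it to Label-Cover. I would reduce from the Halting problem: given a Turing machine $M$, construct in polynomial time a Label-Cover instance $\phi_M$ such that if $M$ halts then $\omega_q(\phi_M)=1$, and if $M$ does not halt then $\omega_\nc(\phi_M)<\delta$. Because the Halting problem is $\RE$-complete and $\omega_\nc\geq \omega_q$ on every instance, this suffices for $\RE$-hardness of $\LC_{q,\nc}(1,\delta)$.

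The technical core is a \emph{compression theorem}: a polynomial-time algorithm that, given the description of a two-prover Label-Cover verifier running in time $T$, outputs another Label-Cover instance of description size $\mathrm{polylog}(T)$ which (a) retains perfect completeness in $\omega_q$ and (b) retains soundness in $\omega_\nc$ up to some absolute constant. To build such an algorithm I would assemble three ingredients: a self-test for an exponentially large measurement family via a quantum low-degree / Pauli-braiding test, robust to noncommutative (synchronous) strategies, so that any successful prover is forced into a near-canonical algebra; an introspection gadget in which the provers themselves sample and commit to the verifier's original questions, replacing linear-size communication by logarithmic-size communication; and an oracularization plus two-variable projection reduction that puts the compressed constraint system in Label-Cover form with a large alphabet.

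With a compression theorem in hand, I would combine it with the universal verifier $\mathcal{U}$ that on input $\langle M,t\rangle$ simulates $M$ for $t$ steps and accepts iff $M$ has not halted by time $t$, and apply Kleene's recursion theorem to produce a fixed point: a Label-Cover instance $\phi_M$ of constant description size whose compressed form is itself with $t$ replaced by $t+1$. If $M$ halts in $T$ steps, a finite-dimensional quantum strategy assembled from the halting transcript satisfies every constraint, giving $\omega_q(\phi_M)=1$. If $M$ never halts, then any hypothetical noncommutative strategy of value $\geq\delta$ would, by iterated compression soundness, yield strategies of non-negligible value on instances that effectively simulate unboundedly long computations of $M$, contradicting the finite dimensionality permitted by $\omega_\nc$; hence $\omega_\nc(\phi_M)<\delta$.

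The principal obstacle is the compression theorem itself, in particular producing a low-degree self-test that is simultaneously sound against noncommutative synchronous strategies and preserves perfect completeness under finite-dimensional oracularizable synchronous strategies; this is where most of the $\MIP^*=\RE$ machinery lives. A secondary subtlety is translating the standard tensor-product formulation of Ji et al.\ into the synchronous values $\omega_q,\omega_\nc$ defined here, which requires a synchronizing reduction on the prover side and careful accounting so that the completeness and soundness gaps do not decay across the recursion; the choice of $\delta>0$ in the final statement is then dictated by this gap budget, possibly after an initial hardness amplification step applied to Label-Cover.
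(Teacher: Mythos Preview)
The paper does not prove this theorem; it is stated as an informal theorem in the introduction and attributed entirely to Ji et al.~\cite{ji_mip_re}. The only ``proof'' appearing in the paper (for the closely related Theorem~\ref{thm:our-label-cover-is-hard}) is a one-line citation: it records that \cite{ji_mip_re} already establishes $\RE$-hardness of $\LC_{q,\nc}(1,\delta)$, and then invokes the trivial reduction $\LC_{q,\nc}\to\LC_q$. So your proposal is not to be compared against an argument in this paper, but against the original $\MIP^*=\RE$ proof.

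Viewed that way, your high-level plan---Halting reduction, compression via self-testing plus introspection plus oracularization, fixed point via Kleene recursion---is the right skeleton of the Ji et al.\ argument. Two points in your sketch are off, however. First, your universal verifier is described as accepting when $M$ has \emph{not} halted by time $t$; this is inverted relative to what you need (if $M$ halts you want completeness $\omega_q=1$, so acceptance must be tied to $M$ halting). Second, your soundness paragraph attributes the contradiction to ``the finite dimensionality permitted by $\omega_\nc$''; that is not where the contradiction lives. Soundness in the compression recursion is purely a value-preservation statement: if the level-$n$ game had value $\geq\delta$, then so would the level-$(n{+}1)$ game, and unrolling forces high value on a game that directly checks a halting transcript of $M$, which is impossible when $M$ never halts. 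Finite-dimensionality of the strategies does not enter that step. Finally, as you note yourself, the compression theorem is essentially the entire content of \cite{ji_mip_re}; your outline names the ingredients but does not supply any of the rigidity or soundness analysis, so as a proof it remains a pointer to the literature---which is, in the end, exactly how the paper treats the statement too.
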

This immediately implies that the problems in the last two rows in Figure \ref{fig:label-cover-pcp-implication} are $\RE$-hard. Our full knowledge of the complexity landscape of Label-Cover is summarized in Figure \ref{fig:label-cover-complexity-landscape}.
\begin{figure}[ht]
\centering
	\begin{tikzpicture}

        \node[] (c_nc) at (0,4) {$\LC_{c,\nc}$};   \node[] (c_q) at (2,4) {$\LC_{c,q}$};    \node[] (c_c) at (4,4) {$\LC_{c,c}$};
        \node[] (q_nc) at (0,2) {$\LC_{q,\nc}$};   \node[] (q_q) at (2,2) {$\LC_{q,q}$};    \node[] (q_c) at (4,2) {$\LC_{q,c}$};
        \node[] (nc_nc) at (0,0) {$\LC_{\nc,\nc}$};\node[] (nc_q) at (2,0) {$\LC_{\nc,q}$}; \node[] (nc_c) at (4,0) {$\LC_{\nc,c}$};

        \draw[dotted,thick] (-2.5,3) -- (6.8,3);
        \draw[dotted,thick] (3,5.5) -- (3,3);
        \node[] at (6,5) {\bf NP-hard};
        \node[] at (-2,5) {\bf NP};
        \node[] at (6,1) {\bf RE-hard};
        
        \draw[ ->] (c_nc) to (c_q);\draw[ ->] (c_q) to (c_c);
        \draw[ ->] (q_nc) to (q_q);\draw[ ->] (q_q) to (q_c);
        \draw[ ->] (nc_nc) to (nc_q);\draw[ ->] (nc_q) to (nc_c);

        \draw[ ->] (c_nc) to (q_nc);\draw[ ->] (c_q) to (q_q);\draw[ ->] (c_c) to (q_c);
        \draw[ ->] (q_nc) to (nc_nc);\draw[ ->] (q_q) to (nc_q);\draw[ -> ] (q_c) to (nc_c);
    \end{tikzpicture}
    \caption{The complexity landscape of Label-Cover.}
\label{fig:label-cover-complexity-landscape}
\end{figure}
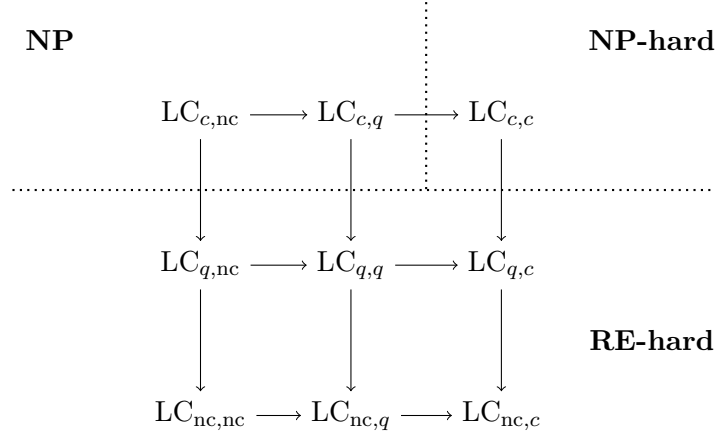 

\subsubsection{Unique-Label-Cover}
A Unique-Label-Cover instance is a special case of Label-Cover instance where every projection map is a bijection (i.e.\ a permutation), see Section \ref{sec:label-cover} for further details. The well-known Unique Games Conjecture asserts that it is hard to approximate the classical value of Unique-Label-Cover:
\begin{conjecture*}[UGC~\cite{khot_original}, informal]
    For all $\zeta,\delta>0$, the decision problem $\ULC_c(1-\zeta,\delta)$ is $\NP$-hard.
\end{conjecture*}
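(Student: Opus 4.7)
The statement is the classical Unique Games Conjecture, which is a famous open problem; no proof is currently known, so I would treat this as an attack plan rather than a complete proof. My plan starts from the $\NP$-hardness of Label-Cover guaranteed by the PCP theorem and attempts a gap-preserving reduction to $\ULC$. The central difficulty is that generic Label-Cover projections are many-to-one, whereas $\ULC$ requires bijective projections on the alphabet; any reduction must therefore enlarge the alphabet and impose bijectivity without collapsing the soundness gap below $\delta$.

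The concrete approach I would pursue, following the $2$-to-$2$ Games line of Khot, Minzer and Safra, is first to boost Label-Cover to a structured variant (smooth Label-Cover, or a $d$-to-$1$ instance obtained via parallel repetition), and then to compose it with an inner PCP whose alphabet is encoded by a highly symmetric object such as the long code or a Grassmann code. The inner tests are to be designed so that, after decoding, the composed edge constraints become permutations of the enlarged alphabet. Completeness follows from a canonical encoding of a good Label-Cover labeling. Soundness is proved by a dictatorship-style analysis: an arbitrary labeling of the $\ULC$ instance of value at least $\delta$ is decoded, via an influence-based procedure, back into a labeling of the Label-Cover instance of non-negligible value, contradicting PCP hardness. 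The analytic core of this step is a hypercontractivity or small-set expansion inequality on the encoding graph.

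The main obstacle, and the reason the conjecture remains open, is the last step: passing from $2$-to-$2$ or $2$-to-$1$ projections down to genuine bijections. Known techniques, including the Grassmann small-set expansion behind the $2$-to-$2$ theorem, inevitably produce projections with non-trivial fibers. Eliminating those fibers seems to require either an encoding whose symmetry group enforces uniqueness directly, or a hypercontractive inequality for sets of measure as small as $\delta$ on the relevant Cayley-like graph. Both ingredients are currently beyond reach, and I expect any serious attempt at this proof to stall exactly at this point.
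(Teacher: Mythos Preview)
Your assessment is correct: the paper does not prove this statement either. It is stated in the paper purely as a conjecture (Conjecture~\ref{conj:ugc}, attributed to Khot), with no proof or proof sketch offered; the paper simply assumes it (and its quantum variants) as hypotheses for downstream hardness results. Your recognition that the Unique Games Conjecture remains open, and that any ``proof'' can only be an attack plan, is exactly in line with the paper's own treatment.
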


Kempe et al.~\cite{kempe} provided an approximation algorithm for the noncommutative value of Unique-Label-Cover (see Theorem~\ref{thm:kempe} for details), but this result does not address the quantum value. Thus, we propose the following quantum analogue of the Unique Games Conjecture (see Conjecture \ref{conj:qugc} for the precise statement):
\begin{conjecture*}[qUGC, informal]
    For all $\zeta,\delta>0$, the decision problem $\ULC_q(1-\zeta,\delta)$ is $\RE$-hard.
\end{conjecture*}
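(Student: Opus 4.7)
The overarching strategy I would pursue is to mirror the classical ambition of proving UGC, but starting from the $\RE$-hardness of $\LC_q$ (supplied by $\MIP^* = \RE$) rather than the $\NP$-hardness of $\LC_c$ (supplied by the PCP theorem). Concretely, the plan is to search for a gap-preserving reduction from $\LC_q(1,\delta)$ to $\ULC_q(1-\zeta,\delta')$ that respects the commutativity structure of quantum assignments. The paper's own thesis is that many classical reductions from $\LabCov$ can be ``quantized;'' taking this seriously, the most natural route is to show that any classical gap-preserving reduction from $\LC$ to $\ULC$ of the kind conjectured to exist under UGC can be adapted to preserve the oracularizability condition, so that qUGC would follow from the classical UGC combined with $\MIP^* = \RE$.

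The concrete steps in order would be: first, fix a candidate classical reduction from $\LC$ to $\ULC$, either through a direct CSP construction or through an intermediate family such as $2$-to-$2$ games; second, verify \emph{quantum completeness}---a quantum assignment of value $\geq 1-\zeta$ for the input $\LC$ instance maps to one of value $\geq 1-\zeta'$ for the output $\ULC$ instance---which typically amounts to tensoring observables and checking that the commutation relations demanded by adjacency in the output instance are implied by those of the input; third, and most delicately, prove \emph{quantum soundness}: any quantum assignment for the output $\ULC$ of value $\geq \delta'$ can be rounded back into a quantum assignment for the input $\LC$ of value $\geq \delta$, with the commutativity structure on incident edges preserved throughout the rounding.

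The main obstacle is quantum soundness, and it is sharpened by Kempe et al.'s efficient algorithm for $\omega_\nc$ of $\ULC$: their SDP relaxation does not exploit commutation at all, so any proof of qUGC must output instances in which the commutativity constraints between observables on adjacent vertices are strictly necessary for achieving high value---equivalently, instances with a substantial gap between $\omega_q$ and $\omega_\nc$. The $\MIP^*=\RE$ construction delivers exactly such a gap at the $\LC$ level, but preserving it across any gap-amplification layer needed to reach $\ULC$ appears subtle, since standard amplification primitives (like parallel repetition) are typically analyzed without a commutation hypothesis. A promising sub-direction is to start from the synchronous group-theoretic games at the heart of $\MIP^*=\RE$, where unique permutation constraints already arise from group relations, and to analyze whether the oracularization requirement alone rules out efficient strategies in a way that survives the passage to $\ULC$---effectively producing qUGC-hard instances directly, without routing through any classical reduction from $\LC$ to $\ULC$.
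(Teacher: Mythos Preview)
The statement you are addressing is a \emph{conjecture}, not a theorem: the paper does not prove qUGC, it proposes it. There is no ``paper's own proof'' to compare against. What you have written is a research plan for attacking an open problem, not a proof, and should be read as such.

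That said, your plan has a structural gap worth naming. Your first concrete step is to ``fix a candidate classical reduction from $\LC$ to $\ULC$,'' but no such gap-preserving reduction is known---its existence is precisely the content of the classical UGC, which remains open. So the route ``classical UGC reduction $+$ quantization $+$ $\MIP^*=\RE$'' does not yield a proof of qUGC; at best it yields the implication ``UGC $\Rightarrow$ qUGC,'' and even that implication is nontrivial (the paper raises exactly this as open Problem~2 in Section~\ref{sec:future}). Your proposal acknowledges this only obliquely. The alternative sub-direction you sketch---building qUGC-hard $\ULC$ instances directly from the group-theoretic synchronous games inside $\MIP^*=\RE$---is more honest about the difficulty, but it is a research suggestion rather than an argument: you would need to exhibit a concrete family of unique games whose quantum value is hard while the noncommutative value remains easy (per Kempe et al.), and nothing in the proposal explains how the commutation constraints would survive the passage from projection games to unique games. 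The paper itself flags this quantization-of-reductions question (Problem~1) and the UGC--qUGC relationship (Problem~2) as open, which is the correct status.
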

Assuming both UGC and qUGC, along with the algorithm from Kempe et al., the complexity landscape of Unique-Label-Cover can be summarized as shown in Figure \ref{fig:unique-label-cover-final-complexity-landscape}.
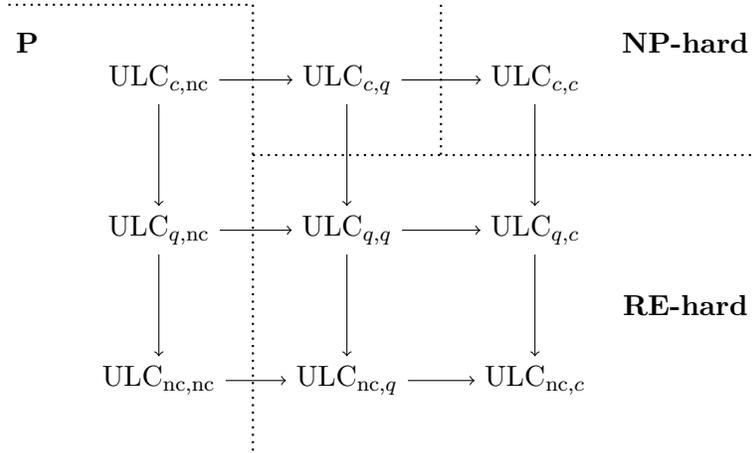
\begin{figure}
\centering
	\begin{tikzpicture}

        \node[] (c_nc) at (0,4) {$\ULC_{c,\nc}$};   \node[] (c_q) at (2.5,4) {$\ULC_{c,q}$};    \node[] (c_c) at (5,4) {$\ULC_{c,c}$};
        \node[] (q_nc) at (0,2) {$\ULC_{q,\nc}$};   \node[] (q_q) at (2.5,2) {$\ULC_{q,q}$};    \node[] (q_c) at (5,2) {$\ULC_{q,c}$};
        \node[] (nc_nc) at (0,0) {$\ULC_{\nc,\nc}$};\node[] (nc_q) at (2.5,0) {$\ULC_{\nc,q}$}; \node[] (nc_c) at (5,0) {$\ULC_{\nc,c}$};
        
        \draw[ ->] (c_nc) to (c_q);\draw[ ->] (c_q) to (c_c);
        \draw[ ->] (q_nc) to (q_q);\draw[ ->] (q_q) to (q_c);
        \draw[ ->] (nc_nc) to (nc_q);\draw[ ->] (nc_q) to (nc_c);

        \draw[ ->] (c_nc) to (q_nc);\draw[ ->] (c_q) to (q_q);\draw[ ->] (c_c) to (q_c);
        \draw[ ->] (q_nc) to (nc_nc);\draw[ ->] (q_q) to (nc_q);\draw[ -> ] (q_c) to (nc_c);

        \draw[dotted,thick] (-2,5) -- (1.25,5);
        \draw[dotted,thick] (1.25,5) -- (1.25,-1);
        \draw[dotted,thick] (1.25,3) -- (8,3);
        \draw[dotted,thick] (3.75,5) -- (3.75,3);
        \node[] at (7,4.5) {\bf NP-hard};
        \node[] at (7,1) {\bf RE-hard};
        \node[] at (-1.75,4.5) {\bf P};
    \end{tikzpicture}
    \caption{The complexity landscape of Unique-Label-Cover assuming both UGC and qUGC.}
\label{fig:unique-label-cover-final-complexity-landscape}
\end{figure} 

Assuming that any one of $\ULC_{q,q},\ULC_{q,c},\ULC_{\nc,q},\ULC_{\nc,c}$ is $\RE$-hard constitutes a variant of the qUGC. The strongest among these variants is the conjecture concerning $\ULC_{q,q}$. As discussed in Section~\ref{sec:label-cover}, there are many (indeed, infinitely many) variants of qUGC. When we refer to qUGC without additional qualifiers, we are specifically referring to the conjecture stated above. 

Given the reductions shown in Figure \ref{fig:unique-label-cover-final-complexity-landscape}, it is likely easier to prove hardness for $\ULC_{q,c}$ or $\ULC_{\nc,c}$ than to prove either UGC or qUGC. Conversely, devising a polynomial-time approximation algorithm for $\ULC_{c,q}$ seems more achievable than disproving either UGC or qUGC. Note that $\ULC_{c,q}$ is in $\NP$ (since $\ULC_{c,c}$ is in $\NP$).

\subsubsection{Applications to Hardness of Approximation}
The result by O’Donnell and Yuen~\cite{odonnell-yuen} on the quantum value of $3$XOR is stated in the next theorem.
\begin{theorem*}
For every $\delta > 0$, the decision problem $\XOR_{q}(1-\delta,\frac{1}{2}+\delta)$ is $\RE$-hard.
\end{theorem*}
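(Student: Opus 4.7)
The plan is to give a gap-preserving reduction from $\LC_{q,\nc}(1,\epsilon)$---which is $\RE$-hard by the $\MIP^*=\RE$ theorem stated above---to $\XOR_q(1-\delta,\tfrac{1}{2}+\delta)$, by porting Håstad's long-code based $3$-query PCP to the quantum value.

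\textbf{Reduction and completeness.} Given a Label-Cover instance $\phi$ with vertex sets $U,V$, alphabet $\mathcal{L}$, and projections $\pi_{u,v}$, we produce a 3XOR instance $\psi$ whose variables are pairs $(v,x)$ with $v\in U\cup V$ and $x\in\{-1,+1\}^{\mathcal{L}}$ (to be thought of as entries of a long code at $v$). The clauses of $\psi$ are generated by Håstad's noisy long-code test: sample a uniform edge $(u,v)$, uniform strings at $u$ and $v$, and a coordinate-wise $\epsilon$-biased noise string, and output the resulting 3XOR relation on the three long-code bits. For completeness, if $\omega_q(\phi)=1$ then, using that perfect quantum value of a projection game implies a perfectly satisfying classical labeling $a$, the classical assignment $z_{v,x}=x_{a(v)}$ satisfies each clause whenever the noise bit equals $+1$, so $\omega_q(\psi)\geq\omega_c(\psi)\geq 1-\epsilon$.

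\textbf{Soundness.} Suppose $\omega_q(\psi)\geq\tfrac{1}{2}+\delta$. The quantum assignment provides $\pm 1$-valued observables $A^{v,x}$ such that the three observables in any single clause pairwise commute and, thanks to the oracularizable synchronous structure built into the quantum value of a CSP, the full family $\{A^{v,x}\}_x$ at any fixed vertex $v$ is jointly commuting. Taking a joint spectral decomposition yields a Fourier expansion
\[
A^{v,x}\;=\;\sum_{S\subseteq\mathcal{L}}\hat{A}^{v}(S)\,\chi_S(x),
\]
with operator-valued Hermitian coefficients satisfying the tracial Plancherel identity $\sum_S \tr\bigl((\hat{A}^v(S))^2\bigr)=1$. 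Substituting into the expected test acceptance, the $\epsilon$-biased noise contracts each level $S$ by $(1-2\epsilon)^{|S|}$, and the problem reduces to bounding, for a random edge, the correlation of low-degree Fourier mass at $u$ and at $v$ under $\pi_{u,v}$. Following Håstad, one decodes the Fourier coefficients into a noncommutative labeling of $\phi$ with value bounded below by a function of $\delta$, contradicting $\omega_\nc(\phi)<\epsilon$ once $\epsilon$ is chosen small enough relative to $\delta$.

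\textbf{Main obstacle.} The core difficulty is executing Håstad's Fourier-analytic soundness in the operator setting. Scalar orthogonality of characters is replaced by the tracial inner product $\langle X,Y\rangle=\tr(X^*Y)$, and one must manipulate operator products carefully since Fourier coefficients at different vertices need not commute. The commutation between $A^{v,x}$ and $A^{v,y}$ inside a single clause---together with the same-vertex commutation provided by oracularizability---is precisely what makes the expansion well defined; this is also what would fail in the purely noncommutative model for three-variable constraints, explaining why the theorem is stated for $\omega_q$ rather than $\omega_\nc$. Assuming the standard noncommutative Bonami-type hypercontractive inequalities (available in the tracial setting), Håstad's scalar bounds translate step by step and yield the claimed $\RE$-hardness.
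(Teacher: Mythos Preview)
There are two genuine gaps in your proposal.

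\textbf{Completeness.} Your claim that ``perfect quantum value of a projection game implies a perfectly satisfying classical labeling'' is false: there exist projection games with $\omega_q=1$ and $\omega_c<1$, and indeed this gap is exactly why $\LC_q$ can be $\RE$-hard while $\LC_c$ is in $\NP$. The correct completeness argument never extracts a classical labeling. Given a quantum PVM assignment $\{\Pi_w^a\}$ to $\phi$ with value $1$, one defines the operator-valued long code $\alpha_w(x)=\sum_a x_a\,\Pi_w^a$ and checks that (i) each $\alpha_w(x)$ is an observable, (ii) the three observables in any single 3XOR clause commute because the underlying PVMs do, and (iii) the expected test value on these observables is $1-\epsilon$. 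This is precisely the pattern the paper executes for $2$-Lin in Section~\ref{sec:2lin}; step (ii) is the only nontrivial one and is where the commutation hypothesis of $\omega_q$ is actually used.

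\textbf{Soundness.} Your assertion that ``the full family $\{A^{v,x}\}_x$ at any fixed vertex $v$ is jointly commuting'' does not follow from the definition of $\omega_q$: only observables sharing a clause must commute. In H\r{a}stad's test each clause contains one $U$-side variable and two $V$-side variables, so while any two $V$-side observables at the same $v$ do share some clause (and hence commute), the $U$-side observables $A^{u,x}$ and $A^{u,x'}$ never share a clause and need not commute. You therefore cannot take a joint spectral decomposition at every vertex. The actual argument---sketched in Section~\ref{sec:proof-ideas} and carried out for $2$-Lin in Section~\ref{sec:2lin}---diagonalizes on a \emph{per-edge} basis using only the commutation guaranteed for that edge, reduces to scalar Fourier analysis in that basis, and applies H\r{a}stad's classical estimates coordinatewise. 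No Bonami-type hypercontractivity is needed for 3XOR; the classical proof is a direct Fourier expansion with the damping factor $(1-2\epsilon)^{|S|}$, and that is what is ported. Finally, the decoded measurements $P_w^a=\sum_{S\ni a}\tfrac{1}{|S|}\hat{\alpha}_w(S)^2$ are only POVMs, so one still needs the projectivization step of Section~\ref{sec:projectivization}; Naimark dilation alone does not suffice because it does not preserve the tracial state or the required commutation.
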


This result was originally proven via a reduction from $\LC_{\nc}$. However, a reduction from $\LC_q$ suffices with some straightforward modifications. The hardness of $\LC_q$ follows directly from the $\MIP^*=\RE$ theorem (as shown in Figure~\ref{fig:label-cover-complexity-landscape}). Since a direct proof of $\LC_q$ is likely easier than the full proof of $\MIP^* = \RE$~\cite{ji_mip_re}, we propose Problem 5 in Section \ref{sec:future}.

This theorem mirrors H\r{a}stad’s celebrated result on the classical value of $3$XOR~\cite{hastad2}. Our current understanding of approximability of $3$XOR is summarized in Figure~\ref{fig:transition-3xor}. 

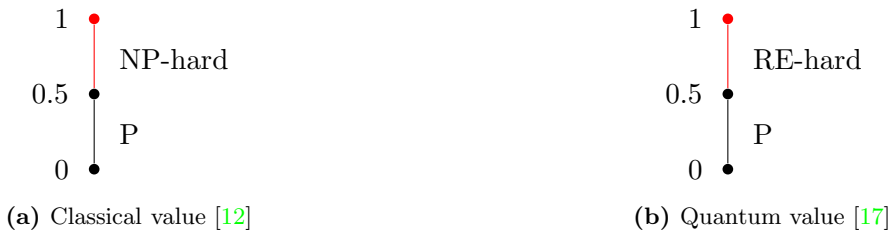
\begin{figure}[ht]
\centering
\begin{subfigure}{0.49\textwidth}
    \centering
    \begin{tikzpicture}
        \node[circle,fill=black, inner sep=0.05cm] (1) at (0,0) {};
        \node[circle,fill=black, inner sep=0.05cm] (2) at (0,1) {};
        \node[circle,fill=red, inner sep=0.05cm] (3) at  (0,2)  {};
        \draw[] (1) -- (2);
        \draw[red] (2) -- (3);
        
        \node at (1) [left=0.2cm]{$0$};
        \node at (2) [left=0.2cm]{$0.5$};
        \node at (3) [left=0.2cm]{$1$};

        \node at (1) [above right =0.2cm]{$\Pp$};
        \node at (2) [above right =0.2cm]{$\NP$-hard};        
    \end{tikzpicture}
    \caption{Classical value~\cite{hastad2}}
\end{subfigure}
\hfill
\begin{subfigure}{0.49\textwidth}
    \centering
    \begin{tikzpicture}
        \node[circle,fill=black, inner sep=0.05cm] (1) at (0,0) {};
        \node[circle,fill=black, inner sep=0.05cm] (2) at (0,1) {};
        \node[circle,fill=red, inner sep=0.05cm] (3) at  (0,2)  {};
        \draw[] (1) -- (2);
        \draw[red] (2) -- (3);
        
        \node at (1) [left=0.2cm]{$0$};
        \node at (2) [left=0.2cm]{$0.5$};
        \node at (3) [left=0.2cm]{$1$};

        \node at (1) [above right =0.2cm]{$\Pp$};
        \node at (2) [above right =0.2cm]{$\RE$-hard};        
    \end{tikzpicture}
    
    \caption{Quantum value~\cite{odonnell-yuen}}
\end{subfigure}
\caption{Approximability of classical and quantum values of $3$XOR: the vertical axis denotes approximation ratios between $0$ and $1$, with points of sharp transition in hardness ($0.5$ in both cases) indicated.}
\label{fig:transition-3xor}
\end{figure}

In Sections \ref{sec:2lin} and \ref{sec:maxcut}, we prove similar results for $2$-Lin and MaxCut. Our reductions are from $\ULC_q$ rather than $\LC_q$. An informal overview of our MaxCut result is provided in the next section.

\subsection{Result on MaxCut}\label{sec:result-on-quantum-maxcut}
How well can the classical, quantum, and noncommutative values of MaxCut be approximated? The current knowledge, including our result, is summarized in Figure \ref{fig:transition-max-cut}. 
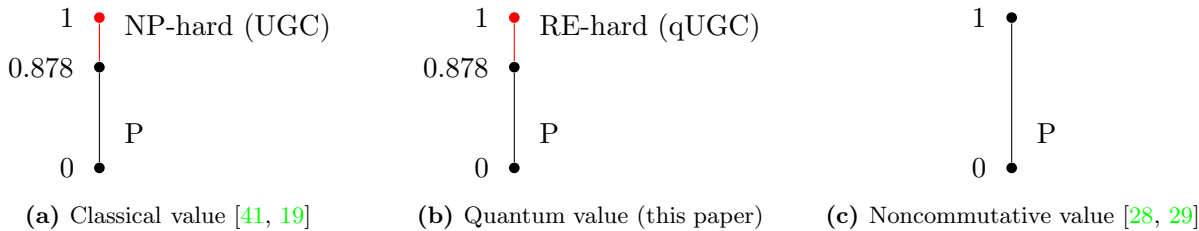
\begin{figure}[H]
\centering
\begin{subfigure}{0.32\textwidth}
    \centering
	\begin{tikzpicture}
        \node[circle,fill=black, inner sep=0.05cm] (1) at (0,0) {};
        \node[circle,fill=black, inner sep=0.05cm] (2) at (0,2*0.67) {};
        \node[circle,fill=red, inner sep=0.05cm] (3) at  (0,2)  {};
        \draw[black] (1) -- (2);
        \draw[red] (2) -- (3);
        
        \node at (1) [left=0.2cm]{$0$};
        \node at (2) [left=0.2cm]{$0.878$};
        \node at (3) [left=0.2cm]{$1$};

        \node at (1) [above right =0.2cm]{$\Pp$};
        \node at (2) [above right =0.2cm]{$\NP$-hard (UGC)};        
    \end{tikzpicture}
    \caption{Classical value~\cite{goemansmaxcut,kkmo}}
\end{subfigure}
\hfill
\begin{subfigure}{0.32\textwidth}
    \centering
	\begin{tikzpicture}
        \node[circle,fill=black, inner sep=0.05cm] (1) at (0,0) {};
        \node[circle,fill=black, inner sep=0.05cm] (2) at (0,2*0.67) {};
        \node[circle,fill=red, inner sep=0.05cm] (3) at  (0,2)  {};
        \draw[black] (1) -- (2);
        \draw[red] (2) -- (3);
        
        \node at (1) [left=0.2cm]{$0$};
        \node at (2) [left=0.2cm]{$0.878$};
        \node at (3) [left=0.2cm]{$1$};

        \node at (1) [above right =0.2cm]{$\Pp$};
        \node at (2) [above right =0.2cm]{$\RE$-hard (qUGC)};        
    \end{tikzpicture}
    \caption{Quantum value (this paper)}
\end{subfigure}
\hfill
\begin{subfigure}{0.32\textwidth}
    \centering
	\begin{tikzpicture}
        \node[circle,fill=black, inner sep=0.05cm] (1) at (0,0) {};
        \node[circle,fill=black, inner sep=0.05cm] (2) at  (0,2)  {};
        \draw[black] (1) -- (2);
        
        \node at (1) [left=0.2cm]{$0$};
        \node at (2) [left=0.2cm]{$1$};

        \node at (1) [above right =0.2cm]{$\Pp$};
    \end{tikzpicture}
    
    \caption{Noncommutative value~\cite{tsirelson1,tsirelson2}}
\end{subfigure}
\caption{Approximability of MaxCut across different types of values.}
\label{fig:transition-max-cut}
\end{figure}

The well-known Goemans-Williamson algorithm~\cite{goemansmaxcut} for the classical value of MaxCut achieves an approximation ratio of $\alpha_{\mathrm{gw}} \approx 0.878$. More specifically, they showed that $\omega_c(G) \geq \alpha_{\mathrm{gw}} \omega_{\mathrm{sdp}}(G)$, where $\omega_{\mathrm{sdp}}(G)$ is the optimal value of a certain semidefinite programming (SDP) relaxation for MaxCut. Meanwhile, Tsirelson~\cite{tsirelson1,tsirelson2} showed that $\omega_\nc(G) = \omega_{\mathrm{sdp}}(G)$. Combining these results yields the following chain of inequalities:
 \begin{equation}\label{eq:chain-of-inequalities}
\alpha_{\mathrm{gw}}\omega_{\mathrm{sdp}}(G) \leq \omega_c(G)  \leq  \omega_q(G) \leq \omega_\nc(G) = \omega_{\mathrm{sdp}}(G).
 \end{equation}
 \begin{figure}[H]
\centering
	\begin{tikzpicture}
        \node[circle,fill=black, inner sep=0.05cm] (1) at (-0.5,0) {};
        \node[circle,fill=black, inner sep=0.05cm] (2) at (1.2,0) {};
        \node[circle,fill=black, inner sep=0.05cm] (3) at (3,0) {};
        \node[circle,fill=black, inner sep=0.05cm] (4) at (4.6,0) {};
        \node[circle,fill=black, inner sep=0.05cm] (5) at  (8,0)  {};
        \draw[thick] (1) -- (2);
        \draw[thick] (2) -- (3);
        \draw[thick] (3) -- (4);
        \draw[thick] (4) -- (5);
        
        \node at (1) [below=0.2cm]{$0$};
        \node at (2) [below=0.2cm]{$\alpha_{\mathrm{gw}}\omega_{\mathrm{sdp}}(G)$};
        \node at (3) [above=0.2cm]{$\omega_c(G)$};
        \node at (4) [above=0.2cm]{$\omega_q(G)$};
        \node at (5) [above=0.2cm]{$\omega_\nc(G)$};
        \node at (5) [below=0.2cm]{$\omega_{\mathrm{sdp}}(G)$};
        
    \end{tikzpicture}

    \caption{The MaxCut interval}
\label{fig:values-on-the-interval}
\end{figure}
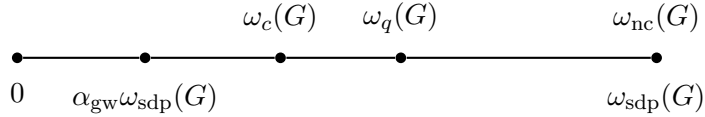 
From this chain of inequalities (see also Figure \ref{fig:values-on-the-interval}), we infer that the Goemans-Williamson algorithm is also a $0.878$-approximation algorithm for the quantum value of MaxCut. But can we do better for the quantum value? 

Under the assumption of the quantum Unique Games Conjecture introduced in this paper, we show:

\begin{center}
{\it The classical Goemans-Williamson algorithm is not only the best efficient approximation algorithm for the quantum value; it is the best approximation algorithm for the quantum value--regardless of efficiency. }
\end{center}

\begin{theorem*}[Theorem \ref{thm:maxcuthardness}, informal] Assuming the quantum Unique Games Conjecture, it is $\RE$-hard to approximate the quantum value of MaxCut to any ratio better than $\alpha_{\mathrm{gw}}$.
\end{theorem*}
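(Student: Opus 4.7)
The plan is to quantize the classical Khot--Kindler--Mossel--O'Donnell (KKMO) reduction from $\ULC$ to $\cut$. Given a $\ULC$ instance $\phi$ with bipartite graph $(U,V,E)$, alphabet $\mathcal{L}$ of size $L$, and projections $\{\pi_e\}$, I form the MaxCut graph $G_\phi$ whose vertex set is $U\times\{-1,+1\}^L$ (one Boolean hypercube per left vertex). An edge is sampled by picking $v\in V$ with two neighbors $u_1,u_2$ (and associated projections $\pi_1,\pi_2$), drawing $x\in\{-1,+1\}^L$ uniformly, forming $y$ as the $\rho$-noisy copy of $x$, and connecting $(u_1,x\circ\pi_1)$ to $(u_2,y\circ\pi_2)$. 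The noise parameter $\rho<0$ is tuned so that the classical noise-stability threshold matches $\alpha_{\mathrm{gw}}$. Since the construction is polynomial-time, it transfers $\RE$-hardness from $\ULC_q$ (assumed under qUGC) to $\cut_q$, provided one establishes the appropriate completeness/soundness gap in the quantum setting.

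For completeness, assume $\omega_q(\phi)\geq 1-\zeta$ is witnessed by an oracularizable synchronous labeling with PVMs $\{A_u^i\}$ on a Hilbert space $\mathcal{H}$, so that $A_{u_1}^i$ commutes with $A_{u_2}^j$ whenever $u_1,u_2$ share a right neighbor in $\phi$. For each $(u,x)$, define the observable $X_{u,x}=\sum_i x_i\, A_u^i$. On every edge of $G_\phi$ the assumed cross-commutation of the $\{A_u^i\}$ yields $[X_{u_1,z_1},X_{u_2,z_2}]=0$, so the $\{X_{u,x}\}$ constitute a legitimate quantum assignment. A short Fourier-analytic computation, identical to its classical counterpart once $\tr$ is applied, shows that the expected cut value matches the KKMO completeness bound up to $O(\zeta)$.

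For soundness, suppose $\omega_q(G_\phi)\geq\alpha_{\mathrm{gw}}+\epsilon$. For each $u\in U$, the operator-valued map $x\mapsto X_{u,x}$ admits a Fourier expansion $X_{u,x}=\sum_S \hat{X}_u(S)\chi_S(x)$ with Hermitian coefficients $\hat{X}_u(S)\in\Herm(\mathcal{H})$, and the contribution of a typical edge becomes the noncommutative noise-stability quantity $\sum_S \rho^{|S|}\tr\!\bigl(\hat{X}_{u_1}(\pi_1^{-1}(S))\,\hat{X}_{u_2}(\pi_2^{-1}(S))\bigr)$. The goal is then to invoke an operator-valued Majority is Stablest: either this quantity is at most $\alpha_{\mathrm{gw}}$ on average (contradicting our assumption), or for a significant fraction of vertices $u$ some $\hat{X}_u(S)$ carries large Frobenius weight on a low-degree $S$, identifying ``influential coordinates'' that can be decoded into a quantum labeling of $\phi$ of value at least $\delta$, contradicting $\omega_q(\phi)<\delta$ under qUGC.

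The technical crux is this operator-valued Majority is Stablest together with a quantum influence-decoding lemma compatible with the oracularizable synchronous model. The commutativity condition supplies useful structure: along every sampled edge the two relevant observables commute and can be jointly diagonalized, so one can hope that a suitable noncommutative invariance principle reduces each edge contribution to a classical Gaussian integral governed by the Goemans--Williamson $\arccos$ kernel. Stitching these joint diagonalizations together coherently across all edges of $G_\phi$, controlling the high-influence vertices via a noncommutative hypercontractivity bound, and ensuring that the extracted decoding respects the oracularizable synchronous structure of $\phi$, is where I expect the heaviest analytic work to lie.
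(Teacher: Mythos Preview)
Your completeness step contains a genuine gap. You assume that a quantum (oracularizable synchronous) assignment to $\phi$ yields PVMs $\{A_u^i\}$ with the property that $A_{u_1}^i$ commutes with $A_{u_2}^j$ whenever $u_1,u_2$ share a right neighbor. But the quantum value only imposes commutation \emph{along edges} of $\phi$; it says nothing about vertices at distance two. Without that extra commutation, the observables $X_{u_1,z_1}$ and $X_{u_2,z_2}$ you assign to the endpoints of a MaxCut edge need not commute, so $\{X_{u,x}\}$ is not a valid quantum assignment to $G_\phi$ and the completeness argument fails. This is precisely why the paper does \emph{not} reduce from $\ULC_q$ but instead introduces the weak-quantum value $\omega_{\wq}$ (Definition~\ref{def:weak-quantum-value}), in which neighbors-of-neighbors are required to commute, and assumes the corresponding conjecture wqUGC (Conjecture~\ref{conj:wqugc}). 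The formal Theorem~\ref{thm:maxcuthardness} is stated under wqUGC, not qUGC.

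On soundness, you anticipate needing an operator-valued Majority-Is-Stablest and a noncommutative invariance principle, and flag ``stitching these joint diagonalizations together coherently across all edges'' as the heaviest work. The paper sidesteps all of this. For a fixed left vertex $\bu$, every pair of MaxCut vertices $(v,z),(v',z')$ with $v,v'\in N_\bu$ is joined by an edge of $G_\phi$, so in a quantum assignment to $G_\phi$ the entire family $\{\alpha_v(z):v\in N_\bu,\ z\in\{\pm1\}^m\}$ is simultaneously diagonalizable in one basis. One then reads off scalar functions $\alpha_{v,j},\beta_{\bu,j}:\{\pm1\}^m\to[-1,1]$ from the diagonal and applies the \emph{classical} MIS theorem coordinate-by-coordinate; no operator-valued invariance principle is needed. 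A further point you omit: the decoded labeling built from the squared Fourier coefficients is a self-commuting POVM assignment, not a PVM assignment, and Naimark dilation does not preserve the required commutation relations. The paper handles this with a separate projectivization lemma (Section~\ref{sec:projectivization}) showing that extreme points of self-commuting POVMs are PVMs lying in the same commutative algebra.
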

Compare this with the following optimal inapproximability result from Khot et al.~\cite{kkmo}:
\begin{theorem*}[KKMO, informal] Assuming the Unique Games Conjecture, it is $\NP$-hard to approximate the classical value of MaxCut to any ratio better than $\alpha_{\mathrm{gw}}$.
\end{theorem*}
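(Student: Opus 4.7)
The plan is to follow the original KKMO reduction from Unique-Label-Cover to MaxCut, built on the Majority Is Stablest theorem of Mossel-O'Donnell-Oleszkiewicz together with Borell's Gaussian isoperimetric inequality. Given $\epsilon > 0$, start from a $\ULC$ instance $\Phi$ over alphabet $[L]$ that, under UGC, is NP-hard to decide between $\omega_c(\Phi) \geq 1-\zeta$ and $\omega_c(\Phi) < \delta$, for parameters $\zeta, \delta$ to be fixed. Fix a noise parameter $\rho \in (-1,0)$ close to $-1$. Construct a weighted MaxCut instance $G$ as follows: for each ULC vertex $v$ introduce $2^L$ MaxCut vertices indexed by $x \in \{-1,+1\}^L$, so that any cut of $G$ corresponds to a Boolean function $f_v : \{-1,+1\}^L \to \{-1,+1\}$ for each ULC vertex $v$. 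For each ULC edge $(u,v)$ with bijection $\pi_{u,v}$, add weighted MaxCut edges between the blocks of $u$ and $v$ whose distribution is the $\rho$-noisy hypercube distribution twisted by $\pi_{u,v}$: sample $x$ uniformly, then $y$ with $y_{\pi_{u,v}(i)}$ independently equal to $x_i$ with probability $(1+\rho)/2$ and to $-x_i$ otherwise.

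For completeness, if $a : U \cup V \to [L]$ satisfies a $(1-\zeta)$ fraction of ULC edges, take each $f_v$ to be the dictator $x \mapsto x_{a(v)}$. On each satisfied ULC edge the probability that $f_u(x) \neq f_v(y)$ equals $(1-\rho)/2$, so the MaxCut value of $G$ is at least $(1-\zeta)(1-\rho)/2$, which tends to $1$ as $\zeta \to 0$ and $\rho \to -1$.

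For soundness, suppose the MaxCut value of $G$ is at least $s$. An averaging argument yields, for most ULC edges $(u,v)$, cuts $f_u, f_v$ whose expected disagreement under the twisted $\rho$-noisy distribution is at least $s - o(1)$, equivalently $\expect[f_u(x) f_v(y)] \leq 1 - 2s + o(1)$. Here the crux enters via Majority Is Stablest: if every low-degree influence $\Infl_i^{\leq k}(f_v)$ is below a small threshold $\tau$ (with $k, \tau$ chosen in terms of $\rho, \delta$), then $\expect[f_u(x) f_v(y)]$ is at least the Gaussian noise stability of two halfspaces with matching biases, a quantity that by Borell's theorem is, in the limit $\rho \to -1$, exactly what caps the disagreement at $\alpha_{\mathrm{gw}} \cdot (1-\rho)/2 + o(1)$. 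In the complementary regime, some coordinate of $f_u$ (respectively $f_v$) has low-degree influence above $\tau$; the standard \emph{influence decoding} step defines a randomized labeling by sampling $a(v)$ from the (small) set of coordinates of high low-degree influence, and bijectivity of $\pi_{u,v}$ guarantees that with probability depending only on $\tau$, the labels of $u$ and $v$ are related by $\pi_{u,v}$, so a nontrivial fraction of ULC edges are satisfied, contradicting $\omega_c(\Phi) < \delta$.

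Letting $\rho \to -1^+$ and driving $\zeta, \tau, \delta \to 0$ at appropriate rates produces, for every $\epsilon > 0$, a polynomial-time reduction with completeness at least $1-\epsilon$ and soundness at most $\alpha_{\mathrm{gw}} + \epsilon$, establishing $\NP$-hardness of approximating $\omega_c(\cut)$ to any factor exceeding $\alpha_{\mathrm{gw}}$. The main obstacle is the soundness analysis: the exact matching of the Majority Is Stablest bound with the Goemans-Williamson constant requires Borell's Gaussian isoperimetric theorem, and the influence-decoding step is precisely where the bijectivity of $\pi_{u,v}$, i.e.\ the \emph{unique} games property rather than general Label-Cover projections, is indispensable.
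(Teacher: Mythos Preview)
The paper does not prove this theorem; it is stated as a known result of Khot--Kimmel--Mossel--O'Donnell and simply cited. Your sketch is a faithful outline of the KKMO argument and is correct in its essentials.

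One point of comparison worth noting: the reduction you describe places MaxCut edges directly between the long-code blocks of the two endpoints $u,v$ of a ULC edge. The construction used in the paper's quantum analogue (Section~\ref{sec:maxcut}), which follows KKMO more closely, instead samples a left vertex $u$ and two independent right neighbours $v,v'\in N_u$, and places the MaxCut edge between $(v,x\circ\pi_{v,u})$ and $(v',(x\circ\pi_{v',u})\mu)$. The advantage of the two-neighbour version is that the soundness analysis reduces to the \emph{single-function} Majority-Is-Stablest theorem applied to the averaged function $\beta_u(x)=\expect_{v\mid u}[f_v(x\circ\pi_{v,u})]$, exactly as in Theorem~\ref{thm:MIS}. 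Your direct-edge version instead needs a two-function noise-stability bound for $\expect[f_u(x)f_v(y)]$; this is available (and follows from the same Gaussian/Borell machinery), but it is a slightly different tool than the one the paper quotes. Both routes yield the same $\alpha_{\mathrm{gw}}$ threshold.
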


\textbf{Integrality Gap.} Similar to the Goemans-Williamson SDP relaxation, the quantum value--though undecidable--serves as a relaxation to the classical value. A natural question arises: how close is this quantum relaxation to the classical value? Can the quantum value outperform the SDP value in approximating the classical value?

Under the assumption of the quantum Unique Games Conjecture, we show that the SDP and quantum relaxations are incomparable. This is illustrated in Figure \ref{fig:extremes} and formally established in Theorem \ref{thm:integrality-gap}. Constructing examples that exhibit each of the extremes shown in Figure~\ref{fig:extremes} remains an open problem. 
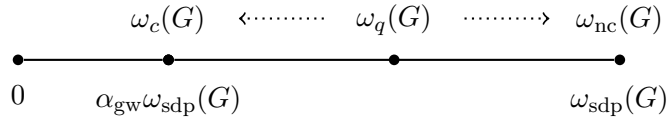
\begin{figure}[H]
\centering
	\begin{tikzpicture}
        \node[circle,fill=black, inner sep=0.05cm] (1) at (0,0) {};
        \node[circle,fill=black, inner sep=0.05cm] (2) at (2,0) {};
        \node[circle,fill=black, inner sep=0.05cm] (3) at (2,0) {};
        \node[circle,fill=black, inner sep=0.05cm] (4) at (2,0) {};
        \node[circle,fill=black, inner sep=0.05cm] (6) at (5,0) {};
        \node[circle,fill=black, inner sep=0.05cm] (5) at  (8,0)  {};
        \draw[thick] (1) -- (2);
        \draw[thick] (2) -- (3);
        \draw[thick] (3) -- (4);
        \draw[thick] (4) -- (5);

        \node[] (3above) at (2,0.8) {};
        \node[] (6above) at (5,0.8) {};
        \node[] (5above) at (8,0.8) {};

        \node[] (aux3above) at (2.8,0.55) {};
        \node[] (aux6above) at (4.2,0.55) {};
        \node[] (aux6above2) at (5.8,0.55) {};
        \node[] (aux5above) at (7.2,0.55) {};

        \node at (1) [below=0.2cm]{$0$};
        \node at (2) [below=0.2cm]{$\alpha_{\mathrm{gw}}\omega_{\mathrm{sdp}}(G)$};
        \node at (3) [above=0.2cm]{$\omega_c(G)$};
        \node at (6) [above=0.2cm]{$\omega_q(G)$};
        \node at (5) [above=0.2cm]{$\omega_\nc(G)$};
        \node at (5) [below=0.2cm]{$\omega_{\mathrm{sdp}}(G)$};   

        \draw[thick,dotted, ->] (aux6above) to (aux3above); 
        \draw[thick,dotted, ->] (aux6above2) to (aux5above); 
    \end{tikzpicture}
    \caption{There exist instances realizing both extremes depicted in this picture. That is, among the integrality gap instances of MaxCut (see Feige and Schechtman~\cite{feige_integrality}), i.e.\ those instances with $\omega_c(G)=\alpha_{\mathrm{gw}}\omega_{\mathrm{sdp}}(G)$, there are instances for which $\omega_q(G) = \omega_c(G)$ and instances for which $\omega_q(G) = \omega_{\mathrm{sdp}}(G)$.}
\label{fig:extremes}
\end{figure}

\subsection{Proof Ideas}\label{sec:proof-ideas}
In the conventional $\NP$ proof system for Label-Cover, the certificate is simply the list of labels for the vertices. This is what we called \emph{classical assignment} in Section \ref{sec:intro-quantum-value}. The verifier’s task in this proof system is to check that the assigned labels satisfy the edge constraints.

Håstad’s gap-preserving reduction from Label-Cover (used, for instance, in proving the hardness of $3$SAT and $3$XOR) is best viewed as the construction of a more efficient version of this $\NP$ proof system, known as a probabilistically checkable proof (PCP). The PCP certificate for Label-Cover is an encoding of the classical assignment in binary using an error-correcting code, commonly referred to as the Long-Code, pioneered in~\cite{bellare}. In addition to verifying that the labels in the encoding satisfy the edge constraints, the PCP verifier must also check that the proof consists of valid codewords. The soundness of this PCP relies on Fourier analysis. This was one of the key insights of Håstad's celebrated work.

In our case, we are reducing Unique-Label-Cover to $2$-Lin (Section \ref{sec:2lin}) and MaxCut (Section \ref{sec:maxcut}), but the core principles remain the same. Our reductions are based on the classical reductions by Khot~\cite{khot_original} and KKMO~\cite{kkmo}, respectively. There are, however, several key differences, which we discuss in the rest of this section. For concreteness, we focus on the reduction from Unique-Label-Cover to MaxCut.

\textbf{Completeness.} This step requires a generalization of the Long-Code to encode quantum assignments. This is a straightforward generalization of the Long-Code to the operator setting and it is due to O'Donnell and Yuen \cite{odonnell-yuen}. The main challenge in proving completeness is ensuring that the encoded quantum assignment satisfies the commutation requirements of a quantum MaxCut assignment.

\textbf{Soundness.} The proof of soundness, as is customary, is done contrapositively. We begin with a quantum assignment for the MaxCut instance that achieves a large value, and from this, we construct a quantum assignment that has a sufficiently large value in the Unique-Label-Cover instance.

The classical proof constructs a probabilistic labeling based on the Fourier coefficients of the MaxCut assignment. Given a MaxCut assignment $f$, the labeling samples a subset of labels $S$ with probability proportional to the squared Fourier coefficients, $\hat{f}(S)^2$. It then probabilistically selects a label $a \in S$ for each vertex. In the quantum setting, these squared Fourier weights are positive semidefinite operators. Defining $P^a \coloneqq \sum_{S:a \in S}\frac{1}{|S|} \hat{f}(S)^2$ for each label $a$ results in a \emph{POVM measurement} for each vertex in the Unique-Label-Cover instance (see Section \ref{sec:measurements} for the definition of POVMs). This is also due to~\cite{odonnell-yuen}.

These measurements satisfy the commutation relations required for a quantum assignment. However, they are not projective measurements, as required by definition of quantum assignment. Naimark’s dilation theorem (see, for example, Watrous~\cite{watrous-book}) is typically used in such cases to convert POVMs into projective measurements. However, due to the commutation requirements (and the use of tracial states), Naimark's theorem is not applicable. Therefore, we need to projectivize these POVMs while preserving the commutation relations, which we accomplish using the projectivization lemma proved in Section \ref{sec:projectivization}.

Classically, the proof of soundness relies on a deep result from Boolean Fourier analysis, known as the Majority-Is-Stablest~\cite{MOO}, which relates Boolean functions to their Fourier weights. If the operators in our assignments were simultaneously diagonalizable, we could apply this theorem directly, as in the classical proof. However, the inherent noncommutativity of quantum assignments prevents us from doing so, forcing us to take a new approach. 

To explain the problem and our resolution at a high level, recall that our goal is to construct a quantum labeling for an instance of Unique-Label-Cover. Let $u$ be a vertex from the left-vertex set, and let $N_u$ denote its neighbors in the right-vertex set. In the operator setting, while each neighbor's operator individually commutes with $u$'s, the operators assigned to the neighbors may not commute with each other. This noncommutativity is a fundamental obstacle.

We first encountered this issue earlier, when attempting to view quantum assignments as extensions of probabilistic assignments.  In a classical probabilistic assignment, each vertex $u$ is assigned a single probability measure $P_u$ over the set of labels. In contrast, a quantum assignment associates two probability measures, $P_{u,e}$ and $P_{v,e}$, with each edge $e = (u,v)$. The crux of the issue is that different edges incident on $u$—say, $e$ and $e'$—can impose vastly different probability measures on $u$, which may conflict with each other. 

This same phenomenon arises in the proof of soundness, except it manifests not in the probability measures, but in the Fourier coefficients of the quantum assignment. Classical soundness proofs organize around sets of neighbors, but this approach is too coarse to be applicable to the operator setting. The key to resolving this issue is to structure the proof around individual edges; this allows us to manage the noncommutative nature of the operator assignments without the sort of conflicts we mentioned earlier.

\subsection{Future Directions}\label{sec:future}
There are many open problems concerning the quantum value of CSPs and the conjectures we have proposed regarding Unique-Label-Cover:
\begin{enumerate}
    \item \textbf{Quantizing Classical Reductions: } The works of O'Donnell and Yuen~\cite{odonnell-yuen}, as well as this paper, establish straightforward translations of some of the most well-known classical inapproximability results to the quantum setting. Could this indicate a more general phenomenon? More precisely, can any classical gap-preserving reduction from Label-Cover (or its variants like Unique-Label-Cover or Smooth-Label-Cover) be extended to a reduction for the quantum value?
    
    \item \textbf{UGC versus qUGC: } What is the relationship between the classical UGC and the quantum UGC? If UGC is proven via a reduction from Label-Cover, can this reduction be ``quantized" to prove qUGC? A positive answer to the previous question would also resolve this one affirmatively.
    
    \item \textbf{Variants of qUGC: } In the classical setting, several formulations of UGC have been shown to be equivalent (see Section 3 in~\cite{khot_survey2} and references therein). Could similar equivalences hold for the quantum variants discussed in this paper? See Section \ref{sec:label-cover} for further discussion of these qUGC variants.

    \item \textbf{Noncommutative CSPs: } This work focuses primarily on the quantum value of CSPs. Could variants of qUGC help resolve hardness of approximation questions for the noncommutative value? The noncommutative value, which also plays an important role in quantum information, exhibits some intriguing behavior. As shown in Figure \ref{fig:transition-max-cut}, unlike the classical or quantum value, the noncommutative value of MaxCut can be computed in polynomial time. However, it is known that the noncommutative value of Max-$3$-Cut is undecidable \cite{ji2013binary,ji_mip_re}. \cite{original} provides a $0.864$-approximation algorithm for the noncommutative value of Max-$3$-Cut. We suspect that a variant of qUGC could prove useful in establishing the optimality of this algorithm. 

    \item \textbf{Weaker Version of $\mathbf{MIP^*=RE}$:} Is there a direct way to prove the hardness of approximating the quantum value of Label-Cover without relying on $\MIP^* = \RE$? One potential approach is to quantize Dinur’s proof of the classical PCP theorem~\cite{arora1,arora2,raz,DinurPCP}. The commutation relations in $\LC_q$ could make this quantization feasible. Could this weaker result also have some of the implications of $\MIP^* = \RE$ such as the resolution of the Connes Embedding Problem?
    
    \item \textbf{Integrality Gap Instances for MaxCut:} What is the smallest ratio between the classical and quantum values across all MaxCut instances? Assuming quantum UGC, we show this ratio is the Goemans-Williamson constant $\alpha_{\mathrm{gw}} \approx 0.878$ (Theorem \ref{thm:integrality-gap}). Can we construct an instance attaining this ratio?

    \item \textbf{Label-Cover for QMA: } Approximating the classical value of Label-Cover is $\NP$-complete (PCP theorem). Similarly, approximating the noncommutative and quantum values of Label-Cover is $\RE$-complete ($\MIP^* = \RE$ theorem). What version of Label-Cover or Unique-Label-Cover naturally captures $\QMA$? Could further constraints on the variations of Label-Cover and Unique-Label-Cover explored in this paper lead to suitable candidates? A positive resolution to this problem would answer the Quantum Games PCP Conjecture~\cite{anand-games-pcp}.
        
    \item \textbf{Hardness of Approximation in Hamiltonian Complexity: } In quantum information theory, quantum MaxCut often refers to the anti-ferromagnetic Heisenberg model, a special case of the local Hamiltonian problem. Although~\cite{quantum-max-cut-integrality-gap} attempts to prove UGC hardness for the anti-ferromagnetic Heisenberg model, reductions from UGC only establish $\NP$-hardness. Could variations of qUGC yield stronger results? 
    
    While the inapproximability of the anti-ferromagnetic Heisenberg model is one of our motivations for proposing the quantum UGC, our current methods do not yet apply to the inapproximability of local Hamiltonian problems.
    
    \item \textbf{Gadget Reductions: } In classical complexity theory, gadgets~\cite{TrevisanGadgets,hastad2} are widely used to demonstrate hardness of approximation. Can this method be ``quantized"?~\cite{ji2013binary,harris2023universality} are examples of gadget reductions in the quantum setting.
\end{enumerate}

\section*{Organization of the Paper} 
In Section \ref{sec:preliminaries}, we present notations, definitions, and 
some Fourier analytic results that underpin this work. Section \ref{sec:notations} specifically addresses the notations used throughout the paper. In Section \ref{sec:measurements}, we introduce the concept of quantum measurement and discuss several definitions related to projective measurements. Section \ref{sec:computational-problems} reviews concepts such as reductions and decision problems, and Section \ref{sec:mis-bourgain} summarizes results from Fourier analysis.

Section \ref{sec:general-csps} offers the most general definition of CSPs necessary for this work. It also covers the classical, quantum, and noncommutative values of a CSP instance, providing definitions which are crucial for the remainder of the paper. In Section \ref{sec:label-cover}, we discuss variants of quantum UGC. Section \ref{sec:2lin} contains the proof of our theorem on the quantum value of $2$-Lin, and Section \ref{sec:maxcut} presents the proof of our theorem on the quantum value of MaxCut. Finally, two technical lemmas are given in Sections \ref{sec:projectivization} and \ref{sec:folding-lemma}.

\textbf{Acknowledgement.} We thank Henry Yuen for valuable discussions on the topics explored in this paper. We thank Eric Culf for the observation leading to the proof of Theorem \ref{thm:integrality-gap}.

\section{Preliminaries}\label{sec:preliminaries}
\subsection{Notations}\label{sec:notations}
For every $n \in \N$ we let $[n] = \{1,\ldots,n\}$. When $x$ is a function from $[n]$, we may treat it as a vector and write $x_i \coloneqq x(i)$. For $\pi\in S_n$, a permutation, we let $x\circ \pi$ be the vector for which $(x\circ \pi)_i = x_{\pi(i)}$, for all $i$. 

Hilbert spaces in this work are always assumed to be finite dimensional vector spaces $\C^d$ for some $d\in \N$. We denote the set of operators on the Hilbert space $\mathcal{H}$ by $\mathcal{M}(\mathcal{H})$. The unitary group on $\mathcal{H}$ is denoted by $\U(\mathcal{H})$. Given a Hilbert space, we use $\tr$ to denote its dimension-normalized trace. Given two operators $A$ and $B$, their commutator is denoted by $[A,B]\coloneqq AB - BA$.

\subsection{Measurements}\label{sec:measurements}
For $m \in \N$, an $m$-outcome POVM $P$ acting on a Hilbert space $\mathcal{H}$ is a set of positive semidefinite (PSD) operators $\{P^1,\ldots,P^m\}$ in $\M(\mathcal{H})$ that sum to the identity operator. POVM stands for \emph{positive operator-valued measure}, which captures the most general notion of measurement in quantum mechanics. We need two definitions concerning POVMs.
\begin{definition}[Simultaneous measurability]\label{def:simultaneous-measurability}
A set of POVMs $\{P_i^a\}_{a\in [m]}$, for $i \in [n]$, is said to be simultaneously measurable, or commuting for short, if $[P_i^a,P_j^b] = P_i^a P_j^b - P_j^b P_i^a = 0$ for all $a,b\in [m]$ and $i\neq j$. 
\end{definition}
\begin{definition}[Self-commuting measurements]\label{def:self-commuting}
A POVM is said to be self-commuting if all of its operators commute.
\end{definition}
Throughout this paper, all POVMs are self-commuting. We denote the set of $m$-outcome self-commuting POVMs on $\mathcal{H}$ by $\POVM_m(\mathcal{H})$. Operators of simultaneously measurable self-commuting POVMs can all be diagonalized in the same basis. 

PVMs, short for projective-valued measures, are a special case of POVMs, and indeed a special case of self-commuting POVMs. A POVM $\Pi = \{\Pi^1,\ldots,\Pi^m\}$ is said to be a PVM if each $\Pi^a$ is a projection operator, i.e.\ $(\Pi^a)^* = \Pi^a$ and $(\Pi^a)^2 = \Pi^a$. We let $\PVM_m(\mathcal{H})$ denote the set of $m$-outcome PVMs on $\mathcal{H}$. Almost all POVMs appearing in this paper are PVMs with the exception of a few places where self-commuting POVMs appear.  

A unitary $X$ that is also Hermitian is called an \emph{observable}. The set of all observables is denoted by $\Obs(\mathcal{H})$. Observables and binary-outcome PVMs are in one-to-one correspondence: for every observable $X$, operators $\Pi^{+1} \coloneqq (I+X)/2$ and $\Pi^{-1} \coloneqq (I-X)/2$ form a PVM. A set of binary-outcome PVMs are simultaneously measurable if and only if their corresponding observables commute. 

\textbf{Indexing measurement operators.} As suggested, the binary set $\{-1,+1\}$ (the set of eigenvalues of observables) is used to index operators in a binary-outcome PVM. When $m \geq 3$, the set $[m]$ is used to index operators in an $m$-outcome PVM. This choice helps to make our equations easier to parse. See also Section \ref{sec:definitions-lin-maxcut}.

\subsection{Computational Problems}\label{sec:computational-problems}
For us, a \emph{computational problem} is a pair $(\P,\omega)$ where $\P$ is the set of instances and $\omega:\P \to [0,1]$ is the function to be computed. We often write $\P_\omega$ for $(\P,\omega)$. In this paper we call $\omega$ the \emph{value} (function) and when it is clear from the context we write $\P$ to refer to the computational problem itself.  

We now define the \emph{decision version} of a computational problem.
\begin{definition}\label{def:decision-problem}
For every $0 < \delta < \gamma \leq 1$, the decision problem of $(\P,\omega)$, denoted by $\P_\omega(\gamma,\delta)$, is the problem: given an instance $\phi \in \P$ with the promise that either
\begin{itemize}
    \item $\omega(\phi) \geq \gamma$, called a ``yes'' instance, or
    \item $\omega(\phi) < \delta$, called a ``no'' instance,
\end{itemize} 
decide whether the instance is a yes or no instance. We refer to $\gamma$ as the completeness parameter and $\delta$ as the soundness parameter. 

If distinct values $\omega_1$ and $\omega_2$ are used for yes and no instances, that is if
\begin{itemize}
    \item $\omega_1(\phi) \geq \gamma$, is a yes instance, and
    \item $\omega_2(\phi) < \delta$, is a no instance,
\end{itemize} 
then the corresponding decision problem is denoted $\P_{\omega_1,\omega_2}(\gamma,\delta)$. Thus $\P_{\omega}(\gamma,\delta)$ is $\P_{\omega,\omega}(\gamma,\delta)$.
\end{definition}

\begin{definition}\label{def:reduction}
A reduction from $\P_{\omega_1,\omega_2}(\gamma,\delta)$ to $\P'_{\omega'_1,\omega'_2}(\gamma',\delta')$ is a deterministic polynomial time map $r$ that sends every instance in $\P$ to an instance in $\P'$. Furthermore $r$ needs to be complete and sound. Completeness is the property that a yes instance is sent to a yes instance. The soundness is the property that a no instance is sent to a no instance. When a reduction exists we may write $\P_{\omega_1,\omega_2}(\gamma,\delta) \longrightarrow \P'_{\omega'_1,\omega'_2}(\gamma',\delta')$.
\end{definition}

\subsection{Fourier Analysis}\label{sec:mis-bourgain}
We quickly review the basic definitions and notations from Fourier analysis on hypercube. For a proper introduction, refer to the textbook \cite{odonnell_book}. Let $g: \{\pm 1\}^m \rightarrow \R$ be a function and let $\{\pm 1\}^m$ be equipped with the uniform probability measure. For every $S\subseteq [m]$, the \emph{characteristic function} $\chi_S: \{\pm 1\}^m \rightarrow \{\pm 1\}$ is given by $\chi_S(x)=\prod_{a \in S} x_a$. This set forms an orthonormal basis, that is for every $S,T \subseteq [m]$
\begin{align*}
    \expect_x \Brac{ \chi_S(x) \chi_T(x)} = \delta_{S,T},
\end{align*}
where $\delta_{S,T}$ is the Kronecker delta. For all $S\subseteq [m]$ and $x\in \{\pm 1\}^m$, the Fourier coefficient of $g$ at $S$ is
\begin{align*}
    \hat{g}(S) = \expect_x \Brac{\chi_S(x) g(x)}.
\end{align*}
The Fourier inversion formula is
\begin{align*}
    g(x)=\sum_{S\subseteq [m]} \chi_S(x) \hat{g}(S).
\end{align*}

\begin{definition}\label{def:influence}
    For $a\in [m]$, the influence of the $a$'th index on $g$, denoted $\Infl_a(g)$, is 
\begin{align*}
    \Infl_a(g) = \expect_x \Brac{\mathrm{Var}_{x_a}\Brac{g(x)}}= \sum_{S: a \in S} \hat{g}(S)^2.
\end{align*}
The $k$-degree influence of the index $a$ on $g$ is
\begin{align*}
    \Infl_a^{\leq k}(g) = \sum_{S: a\in S, \abs{S}\leq k} \hat{g}(S)^2.
\end{align*}
\end{definition}

\begin{definition}\label{def:noisestab}
    Let $-1<\rho\leq 0$. The noise stability of $g$ at $\rho$ is
    $$S_\rho(g)=\expect_{x,\mu}\Brac{g(x) g(x\mu)},$$
    where $x$ is sampled from the uniform distribution, and $\mu$ is sampled independently such that $\mu_a=1$ with probability $\frac{1+\rho}{2}$ and $\mu_a=-1$ with probability $\frac{1-\rho}{2}$.
\end{definition}

We can now state two Fourier analytic theorems used in our proofs.
\begin{theorem}[Bourgain's Junta Theorem~\cite{Bourgain02}]\label{thm:Bourgain}
    Let $g: \{\pm 1\}^m \to \{\pm 1\}$ be any Boolean function and $k>0$ an integer. Then for every $\frac{1}{2}<t<1$, there exists a constant $c_t>0$ such that
    \begin{align*}
        \textit{if} \quad \sum_{S: \abs{S}>k} \hat{g}(S)^2 <c_t k^{-t} \quad \textit{then} \quad \sum_{S: \abs{\hat{g}(S)} \leq \frac{1}{10}4^{-k^2}} \hat{g}(S)^2 < \frac{1}{100}.
    \end{align*}
\end{theorem}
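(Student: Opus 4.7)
The plan is to argue the contrapositive, combining hypercontractivity with the Boolean constraint $g^2 \equiv 1$. Set the threshold $\theta := \frac{1}{10}4^{-k^2}$ and partition the Fourier support into three pieces: the ``large'' low-degree coefficients $L = \{S : \abs{S} \leq k,\; \abs{\hat{g}(S)} > \theta\}$, the ``small'' low-degree coefficients $\Lambda = \{S : \abs{S} \leq k,\; \abs{\hat{g}(S)} \leq \theta\}$, and the high-degree coefficients $H = \{S : \abs{S} > k\}$. The hypothesis gives $\sum_{S \in H} \hat{g}(S)^2 < c_t k^{-t}$, which already handles the high-degree contribution to the conclusion's sum. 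It therefore suffices to establish $\sum_{S \in \Lambda} \hat{g}(S)^2 < \frac{1}{100} - c_t k^{-t}$, a positive quantity once $c_t$ is chosen small.

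First I would isolate the piece $g_\Lambda := \sum_{S \in \Lambda} \hat{g}(S)\chi_S$, a multilinear polynomial of degree at most $k$. By Parseval and orthogonality, $\expect[g_\Lambda \cdot g] = \sum_{S \in \Lambda}\hat{g}(S)^2$, so the task reduces to bounding this inner product. The natural tools are the Bonami-Beckner hypercontractive inequality, which controls higher-moment norms of $g_\Lambda$ by its $L^2$-norm at the price of an exponential-in-$k$ factor, together with the Boolean identity $\abs{g} \equiv 1$, which gives $\expect[g_\Lambda \cdot g] \leq \expect\Brac{\abs{g_\Lambda}}$. Combining these with the coefficientwise bound $\abs{\hat{g}(S)} \leq \theta$ on $\Lambda$ produces an estimate in which the $4^{-k^2}$ threshold is precisely calibrated to absorb the hypercontractive cost.

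The main obstacle is sharpening the resulting estimate to the precise polynomial decay $c_t k^{-t}$ for every $t < 1$: a direct hypercontractive argument yields only a bound that decays logarithmically in $k$, which is too weak for the stated conclusion. Bourgain's key device is a random restriction argument in which a uniformly random subset of coordinates is fixed and the Fourier structure of the restricted function is re-examined. Under a typical restriction, $g_\Lambda$ fragments into many low-magnitude, independent-looking pieces that cannot cohere with a $\pm 1$-valued function unless the total $\ell^2$-mass on $\Lambda$ is small; averaging the local estimate over restrictions amplifies it to the polynomial rate $k^{-t}$. Tracking the interplay between the measure on restrictions, the Bonami-Beckner constants, and the Parseval identity for $g$ quantitatively --- in particular verifying that the $4^{-k^2}$ cutoff is small enough for the amplification step to close --- is the technically demanding heart of the proof, and is the reason one cites Bourgain's theorem as a black box rather than re-deriving it.
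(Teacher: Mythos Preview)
The paper does not prove this theorem: it is stated as a citation to Bourgain's original work and used as a black box in the soundness analysis for $\Lin{2}$. There is therefore no ``paper's own proof'' to compare your proposal against.

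That said, your sketch is a fair high-level outline of the ideas behind Bourgain's argument---hypercontractivity, the Boolean constraint, and random restrictions to boost a logarithmic rate to a polynomial one---and you correctly identify that the amplification step is where the real difficulty lies. Your own final sentence already concedes the point: this is a theorem one cites rather than re-derives, and that is exactly what the paper does. For the purposes of this paper, no proof is expected here; the statement is imported wholesale from the reference.
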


\begin{theorem}[Majority is Stablest (MIS)~\cite{MOO}]\label{thm:MIS}
    For every $\rho \in (-1,0]$ and $\varepsilon>0$, there is a small enough $\delta=\delta(\varepsilon,\rho)>0$ and a large enough $k=k(\varepsilon,\rho)$ such that if $g:\{\pm 1\}^m \rightarrow [-1,1]$ is any function with $\Infl_a^{\leq k}(g) \leq \delta,$
    for all $a\in [m]$, then 
    $$S_\rho(g) \geq 1 - \frac{2}{\pi} \arccos{\rho} - \varepsilon.$$
\end{theorem}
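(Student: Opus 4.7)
The plan is to follow the Mossel-O'Donnell-Oleszkiewicz strategy, which proceeds in three stages: reduce to a low-degree multilinear polynomial with small influences, invoke the \emph{invariance principle} to transport the computation to Gaussian space, and then conclude via Borell's Gaussian isoperimetric inequality.

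First, I would smooth and truncate. Applying the noise operator $T_{1-\eta}$ to $g$ for a small tuning parameter $\eta>0$ concentrates its Fourier mass on low-degree levels, so $\tilde g = T_{1-\eta}g$ is $L^2$-approximated by its degree-$k$ truncation $h$. The identity $S_\rho(T_{1-\eta} g) = S_{(1-\eta)^2 \rho}(g)$, together with the continuity of the target bound $1-\tfrac{2}{\pi}\arccos\rho$ in $\rho$, lets us compare $S_\rho(h)$ to $S_\rho(g)$ up to an error that can be absorbed into $\varepsilon$. Smoothing only shrinks Fourier coefficients, so the hypothesis $\Infl_a^{\leq k}(g)\leq \delta$ also gives small coordinate influences for $h$.

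Next, I would apply the invariance principle to $h$. Because $h$ is multilinear of degree at most $k$ with every coordinate influence bounded by $\delta$, a Lindeberg-style one-at-a-time replacement of each Boolean input $x_a\in\{\pm 1\}$ by an independent standard Gaussian changes the distribution of $h$ by at most $O(\delta^{\Omega(1/k)})$ in a suitable CDF-type metric. The replacement errors are controlled via the Bonami-Beckner hypercontractive inequality for low-degree polynomials. Rounding the Gaussian version of $h$ back into $[-1,1]$ by a smooth truncation, we obtain a function $h^G$ on $\R^m$ whose Gaussian noise stability with correlation $\rho$ approximates $S_\rho(h)$. Finally, Borell's isoperimetric inequality states that for $\rho\in(-1,0]$ the Gaussian noise stability $\expect_{\bu,\bv}\Brac{f(\bu)f(\bv)}$ of any $[-1,1]$-valued function, with $\bu,\bv$ standard Gaussian and $\expect[\bu_i\bv_i]=\rho$, is \emph{minimized} by half-space indicators, and Sheppard's formula evaluates this minimum at exactly $1-\tfrac{2}{\pi}\arccos\rho$. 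Choosing $\eta, k, \delta$ small enough in terms of $\varepsilon$ closes the chain of approximations.

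The hard part is the invariance principle: one must obtain sharp hypercontractive moment bounds and track their dependence on $k$, balancing the need for large $k$ (to make the Fourier tail of $\tilde g$ negligible) against the deteriorating replacement error (which forces $k$ to remain moderate relative to $\delta$). A secondary technical subtlety is extending Borell's conclusion from genuine half-space indicators to arbitrary $[-1,1]$-valued functions, which requires a mollification argument and a careful handling of the boundary behavior of the sign function at the threshold.
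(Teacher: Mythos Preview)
The paper does not prove this theorem; it is quoted verbatim from the Mossel--O'Donnell--Oleszkiewicz paper \cite{MOO} as a black-box input to the MaxCut soundness analysis, with no proof or sketch given. Your outline is the standard MOO argument (smooth, truncate, apply the invariance principle via Lindeberg replacement and hypercontractivity, then invoke Borell's Gaussian noise-stability inequality and Sheppard's formula), and it is correct at the level of detail you have written. There is nothing to compare against in the paper itself.
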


Even though the Fourier analysis of functions on the Boolean hypercube does not depend on the range, we frequently encounter functions $\alpha: \{\pm 1\}^m \to \Obs(\mathcal{H})$ so it is useful to address a few minor issues that arise. Let $\alpha: \{\pm 1\}^m \to \Obs(\mathcal{H})$ where $\mathcal{H}$ is a finite-dimensional Hilbert space. For all $S\subseteq [m]$ and $x\in \{\pm 1\}^m$, the Fourier transform and its inverse are defined exactly as before
\begin{align*}
    \hat{\alpha}(S) = \expect_x \Brac{\chi_S(x) \alpha(x)} \quad \text{and} \quad \alpha(x)=\sum_{S\subseteq [m]} \chi_S(x) \hat{\alpha}(S).
\end{align*}
Moreover, the Parseval identity also holds as in the scalar case
\[\sum_{S\subseteq [m]} \hat{\alpha}(S)^2 = I.\]
It is clear that since $\alpha(x)$ are observables, the operators $\hat{\alpha}(S)$ are Hermitian, and thus $\hat{\alpha}(S)^2$ are PSD. Consequently, the set of squared Fourier coefficients $\{\hat{\alpha}(S)^2\}_{S\subseteq [m]}$ defines a POVM.

We say $\alpha: \{\pm 1\}^m \to \Obs(\hilb)$ is odd if for every $x\in \{\pm 1\}^m$, $\alpha(x)=-\alpha(-x)$. A simple calculation shows that $\hat{\alpha}(\emptyset) = 0$ whenever $\alpha$ is odd.

\begin{lemma}\label{lem:povm_from_obs}
    Let $\alpha: \{\pm 1\}^m \to \Obs(\mathcal{H})$ be a function defined on a Hilbert space $\mathcal{H}$. Consider the operators 
    \begin{align*}
        P^a = \sum_{S: a \in S} \frac{1}{\abs{S}} \hat{\alpha}(S)^2,
    \end{align*}
    for all $a\in [m]$. Then the set $P=\{P^a\}_{a\in [m]} \cup \{Q\}$ where $Q= I -\sum_a P^a$ is a POVM. If in addition $\alpha$ is odd, then $Q=0$. If observables $\alpha(x)$ commute for all $x$, then $P$ is a self-commuting POVM. 
\end{lemma}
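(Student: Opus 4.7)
The plan is to verify the three claims in order, using only Parseval, the definition of odd functions, and basic linear algebra. The computation is short and the main ingredients are already available in the preliminaries.

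For the POVM property, I would first note that each $P^a$ is a nonnegative linear combination of operators of the form $\hat{\alpha}(S)^2$, which are PSD because $\hat{\alpha}(S)$ is Hermitian (being an average of Hermitian observables $\alpha(x)$ weighted by $\pm 1$). So each $P^a$ is PSD. Next I would compute the sum $\sum_a P^a$ by swapping the order of summation:
\begin{align*}
\sum_{a\in[m]} P^a = \sum_{a\in[m]} \sum_{S\ni a} \frac{1}{|S|} \hat{\alpha}(S)^2 = \sum_{\emptyset \neq S\subseteq [m]} \hat{\alpha}(S)^2 \sum_{a\in S}\frac{1}{|S|} = \sum_{\emptyset \neq S\subseteq [m]} \hat{\alpha}(S)^2.
\end{align*}
Combining with Parseval's identity $\sum_{S} \hat{\alpha}(S)^2 = I$ gives $Q = I - \sum_a P^a = \hat{\alpha}(\emptyset)^2$, which is also PSD. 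By construction $\sum_a P^a + Q = I$, so the set $P$ is a POVM.

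For the second claim, if $\alpha$ is odd then $\hat{\alpha}(\emptyset) = \expect_x[\alpha(x)] = 0$ by the remark preceding the lemma, hence $Q = \hat{\alpha}(\emptyset)^2 = 0$.

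For the third claim, assume the observables $\{\alpha(x)\}_{x \in \{\pm 1\}^m}$ pairwise commute. Each Fourier coefficient $\hat{\alpha}(S) = \expect_x[\chi_S(x)\alpha(x)]$ is a real linear combination of these commuting observables, so any two coefficients $\hat{\alpha}(S), \hat{\alpha}(T)$ commute. Consequently all products $\hat{\alpha}(S)^2\hat{\alpha}(T)^2$ commute, and since each $P^a$ and $Q$ is a nonnegative linear combination of such squared coefficients, the whole family $\{P^a\}_{a\in[m]}\cup\{Q\}$ is pairwise commuting, making $P$ a self-commuting POVM. No step looks like a genuine obstacle; the only thing that requires a moment's care is the swap of summations used to identify $Q$ with $\hat{\alpha}(\emptyset)^2$, which is what makes both the PSD-ness of $Q$ and the oddness reduction transparent.
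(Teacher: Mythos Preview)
Your proof is correct and follows essentially the same approach as the paper: swap the order of summation to get $\sum_a P^a = \sum_{S\neq\emptyset}\hat{\alpha}(S)^2$, apply Parseval, and observe that Fourier coefficients inherit commutation from the $\alpha(x)$'s. Your explicit identification $Q=\hat{\alpha}(\emptyset)^2$ is in fact slightly cleaner than the paper's version, which only notes $\sum_a P^a \leq I$.
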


\begin{proof}
    By Parseval, we have
    \begin{align*}
        \sum_a P^a = \sum_a \sum_{S: a \in S} \frac{1}{\abs{S}} \hat{\alpha}(S)^2 = \sum_{S : S \neq \emptyset}\hat{\alpha}(S)^2 \leq I,
    \end{align*}
    Therefore $Q= I - \sum_a P^a$ is a PSD matrix, and we have a POVM. When $\alpha$ is odd, we have $\hat{\alpha}(\emptyset) = 0$ and therefore $Q=I -\sum_{S : S \neq \emptyset}\hat{\alpha}(S)^2 = 0$ by Parseval's identity. Since Fourier coefficients are linear combinations of $\alpha(x)$'s, the self-commutation of the POVM follows trivially. 
\end{proof}

\section{CSPs}\label{sec:csps}
\subsection{CSPs, Assignments, and Values}\label{sec:general-csps}
All computational problems in this paper are examples of constraint satisfaction problems (CSPs). We refer to $\CSP{k}(m)$ as the $k$-ary CSP over an alphabet of size $m$. We now define an instance of $\CSP{k}(m)$.
\begin{definition}\label{def:general-csp}
For $k,m\in \N$, an instance in $\CSP{k}(m)$ consists of
\begin{itemize}
\item a pair $(V,E)$, where $V$ is a set of variables and $E$ is a multiset of $k$-tuples of $V$ representing constraints, 
\item a set of \emph{predicates} $f_e:[m]^k \to \{0,1\}$ for every $e \in E$, and
\item a probability distribution $p$ on $E$. We refer to $p_e \coloneqq p(e)$ as the weight of the constraint $e$.
\end{itemize}
We denote this instance with the tuple $(V,E,f,p)$. The set $[m]$ is referred to as the alphabet or the set of labels. We let $p_e = 0$ whenever $e \notin E$. We always let $n \coloneqq |V|$.
\end{definition}
\textbf{Remark.} Note that $E$ is a multiset. This means that the same $k$-tuple of $V$, say $e$, may appear multiple times in $E$ each time with a different predicate. When $k=2$, the pair $(V,E)$ can be viewed as a directed graph (possibly with parallel edges since $E$ is a multiset).

The arity $k$ is always fixed, but in the context of Label-Cover, as we will soon see, it is customary to let the alphabet size $m$ vary. We denote by $\CSP{k}$ the union of $\CSP{k}(m)$ for all $m$. We assume a natural representation of instances is fixed. The size of the instance is the length of such a representation, and it is clearly a $\poly(n,m)$.

We next define the \emph{classical and quantum values} of instances in $\CSP{k}(m)$.
\begin{definition}[Classical assignment and classical value]\label{def:classical-value}
Given an instance $\phi$ in $\CSP{k}(m)$, a (classical) assignment is a function $a:V \to [m]$. 

The value of this assignment on the constraint $e=(i_1,\ldots,i_k)$ is $f_e(a_{i_1},\ldots,a_{i_k})$. 

The value of this assignment on the instance, denoted $\omega(\phi,a)$, is the weighted sum $$\sum_{e=(i_1,\ldots,i_k)\in E} p_e f_e(a_{i_1},\ldots,a_{i_k}).$$ 

Finally, the (classical) value of the instance, denoted $\omega_c(\phi)$, is the maximum of $\omega(\phi,a)$ over all classical assignments.
\end{definition}

\begin{definition}[Quantum assignment and quantum value]\label{def:quantum-value}
A quantum assignment consists of a Hilbert space $\mathcal{H}$ and a function $\Pi:V \to \PVM_m(\mathcal{H})$ such that for every constraint $e=(i_1,\ldots,i_k) \in E$ the PVMs $\Pi_{i_1},\ldots,\Pi_{i_k}$ are simultaneously measurable.  

The value of this assignment on the constraint $e=(i_1,\ldots,i_k)$ is $$\sum_{a \in [m]^k} f_e(a_1,\ldots,a_k)\tr\Paren{\Pi_{i_1}^{a_1}\cdots \Pi_{i_k}^{a_k}}.$$ 

The value of the assignment on the instance, denoted $\omega(\phi,\Pi)$, is the weighted sum $$\sum_{e=(i_1,\ldots,i_k)\in E} p_e\sum_{a \in [m]^k} f_e(a_1,\ldots,a_k)\tr\Paren{\Pi_{i_1}^{a_1}\cdots \Pi_{i_k}^{a_k}}.$$ 

Finally, the quantum value of the instance, denoted $\omega_q(\phi)$, is the supremum of $\omega(\phi,\Pi)$ over all quantum assignments. 
\end{definition}
Note that the quantum value is well-defined since the traces $\tr\left(\Pi_{i_1}^{a_1} \cdots \Pi_{i_k}^{a_k}\right)$ are real numbers for every constraint $e = (i_1, \ldots, i_k)$ due to the commutation relations imposed in the definition of the quantum assignment. When $\mathcal{H}$ is one-dimensional, a quantum assignment reduces to a classical assignment, implying the inequality $\omega_q(\phi) \geq \omega_c(\phi)$.

We now define the \emph{noncommutative value} of an instance in $\CSP{2}(m)$. The key difference between noncommutative and quantum assignments is that the simultaneous measurability constraint is no longer required.

\begin{definition}[Noncommutative assignment and noncommutative value]\label{def:noncommutative-value}
Given an instance $\phi$ in $\CSP{2}(m)$, a noncommutative assignment consists of a Hilbert space $\mathcal{H}$ and a function $\Pi:V \to \PVM_m(\mathcal{H})$. 

The value of the assignment on the constraint $e=(i,j)$ is $$\sum_{a,b \in [m]} f_e(a,b)\tr\Paren{\Pi_{i}^{a}\Pi_{j}^{b}}$$

The value of the assignment on the instance, denoted $\omega(\phi,\Pi)$, is the weighted sum $$\sum_{e=(i,j)\in E} p_e\sum_{a,b \in [m]} f_e(a,b)\tr\Paren{\Pi_{i}^{a}\Pi_{j}^{b}}.$$ 

Finally, the noncommutative value of the instance $\omega_\nc(\phi)$ is the supremum of $\omega(\phi,\Pi)$ over all noncommutative assignments. 
\end{definition}
\textbf{Remark.} While the classical and quantum values are defined for all instances, the noncommutative value is defined only for instances in $2$-CSPs. This is because, for $k \geq 3$, the traces $\tr\left(\Pi_{i_1}^{a_1} \cdots \Pi_{i_k}^{a_k}\right)$ are not necessarily real numbers when $\Pi_{i_1}, \ldots, \Pi_{i_k}$ are not simultaneously measurable.

Every classical assignment is a quantum assignment and every quantum assignment is a noncommutative assignment. Consequently, we have the chain of inequalities $\omega_\nc(\phi) \geq \omega_q(\phi) \geq \omega_c(\phi)$. 

\, 

We conclude this section by noting that infinitely many types of quantum assignments and values can be defined.

\noindent\textbf{A Spectrum of Quantum Values.} So far, we have introduced three types of assignments for constraint satisfaction problems, based on the degree of noncommutativity allowed between measurement operators. At one extreme, classical assignments allow only fully commuting measurements, while at the other, noncommutative assignments impose no commutation relations. Quantum assignments lie in between these two extremes. More generally, we can conceive of a spectrum of quantum assignments, parameterized by the \emph{degree of noncommutativity} allowed between measurement operators:

\begin{itemize}
    \item Imposing fewer commutation relations leads to what we call strong-quantum assignments, with the corresponding value, denoted $\omega_\sq$, shifting closer to the noncommutative value.\footnote{We do not formally define the strong-quantum value here.}
    \item Conversely, imposing more commutation (e.g., requiring neighbors of neighbors to commute in MaxCut) leads to weak-quantum assignments. The corresponding value, denoted $\omega_\wq$, shifts closer to the classical value. See Definition \ref{def:weak-quantum-value} for details. 
\end{itemize}
If for an instance $\phi$ all five types of values are defined, we have\[\omega_c(\phi) \leq \omega_\wq(\phi) \leq \omega_q(\phi) \leq \omega_\sq(\phi) \leq \omega_\nc(\phi).\]
 For $s,t \in \{c,\wq, q, \sq, \nc\}$, we let $\P_{s,t}$ be a shorthand for the decision problem $\P_{\omega_s,\omega_t}$ (see Section \ref{sec:computational-problems} for this notation). When $s = t$ we use the shorthand $\P_s$.

\textbf{Trivial Reductions.} Let $\P$ be a CSP for which all five types of values are defined. Next we explore an example of a reduction between decision versions of $\P$. We can show that, for every choice of $\zeta,\delta > 0$, the decision problem $\P_{q,\nc}(1-\zeta,\delta)$ reduces to $\P_{\nc,\nc}(1-\zeta,\delta)$. The reduction is the identity map on the set of  instances in $\P$: there is nothing to prove for soundness, and the completeness follows from the fact that $\omega_\nc(\phi) \geq \omega_q(\phi)$ for every $\phi\in\P$. We say $\P_{q,\nc}$ trivially reduces to $\P_{\nc,\nc}$ and denote this by $\P_{q,\nc} \longrightarrow \P_{\nc,\nc}$.

There are many more examples of such trivial reductions between decision variants of $\P$. Suppose an abstract ordering $c < \wq < q < \sq < \nc$ is fixed on the set of symbols $\{c,\wq,q,\sq,\nc\}$. Then, for all $s,t,s',t' \in \{c,\wq, q, \sq, \nc\}$ there is a trivial reduction $\P_{s,t} \longrightarrow \P_{s',t'}$ whenever $s \leq s'$ and $t \geq t'$. This is illustrated in Figure \ref{fig:expanded-trivial-reductions}.

\begin{figure}[ht]
\centering
	\begin{tikzpicture}

        \node[] (c_nc) at (0,8) {$\P_{c,\nc}$}; \node[] (c_sq) at (2,8) {$\P_{c,\sq}$};   \node[] (c_q) at (4,8) {$\P_{c,q}$};  \node[] (c_wq) at (6,8) {$\P_{c,\wq}$};  \node[] (c_c) at (8,8) {$\P_{c,c}$};
        \node[] (wq_nc) at (0,6) {$\P_{\wq,\nc}$}; \node[] (wq_sq) at (2,6) {$\P_{\wq,\sq}$};   \node[] (wq_q) at (4,6) {$\P_{\wq,q}$};  \node[] (wq_wq) at (6,6) {$\P_{\wq,\wq}$};  \node[] (wq_c) at (8,6) {$\P_{\wq,c}$};
        \node[] (q_nc) at (0,4) {$\P_{q,\nc}$}; \node[] (q_sq) at (2,4) {$\P_{q,\sq}$};   \node[] (q_q) at (4,4) {$\P_{q,q}$};  \node[] (q_wq) at (6,4) {$\P_{q,\wq}$};  \node[] (q_c) at (8,4) {$\P_{q,c}$};
        \node[] (sq_nc) at (0,2) {$\P_{\sq,\nc}$}; \node[] (sq_sq) at (2,2) {$\P_{\sq,\sq}$};   \node[] (sq_q) at (4,2) {$\P_{\sq,q}$};  \node[] (sq_wq) at (6,2) {$\P_{\sq,\wq}$};  \node[] (sq_c) at (8,2) {$\P_{\sq,c}$};
        \node[] (nc_nc) at (0,0) {$\P_{\nc,\nc}$}; \node[] (nc_sq) at (2,0) {$\P_{\nc,\sq}$};   \node[] (nc_q) at (4,0) {$\P_{\nc,q}$};  \node[] (nc_wq) at (6,0) {$\P_{\nc,\wq}$};  \node[] (nc_c) at (8,0) {$\P_{\nc,c}$};
        
        \draw[ ->] (c_nc) to (c_sq);\draw[ ->] (c_sq) to (c_q);\draw[ ->] (c_q) to (c_wq);\draw[ ->] (c_wq) to (c_c);
        \draw[ ->] (wq_nc) to (wq_sq);\draw[ ->] (wq_sq) to (wq_q);\draw[ ->] (wq_q) to (wq_wq);\draw[ ->] (wq_wq) to (wq_c);
        \draw[ ->] (q_nc) to (q_sq);\draw[ ->] (q_sq) to (q_q);\draw[ ->] (q_q) to (q_wq);\draw[ ->] (q_wq) to (q_c);
        \draw[ ->] (sq_nc) to (sq_sq);\draw[ ->] (sq_sq) to (sq_q);\draw[ ->] (sq_q) to (sq_wq);\draw[ ->] (sq_wq) to (sq_c);
        \draw[ ->] (nc_nc) to (nc_sq);\draw[ ->] (nc_sq) to (nc_q);\draw[ ->] (nc_q) to (nc_wq);\draw[ ->] (nc_wq) to (nc_c);
        
        \draw[ ->] (c_nc) to (wq_nc);\draw[ ->] (c_sq) to (wq_sq);\draw[ ->] (c_q) to (wq_q);\draw[ ->] (c_wq) to (wq_wq);\draw[ ->] (c_c) to (wq_c);
        \draw[ ->] (wq_nc) to (q_nc);\draw[ ->] (wq_sq) to (q_sq);\draw[ ->] (wq_q) to (q_q);\draw[ ->] (wq_wq) to (q_wq);\draw[ ->] (wq_c) to (q_c);
        \draw[ ->] (q_nc) to (sq_nc);\draw[ ->] (q_sq) to (sq_sq);\draw[ ->] (q_q) to (sq_q);\draw[ ->] (q_wq) to (sq_wq);\draw[ ->] (q_c) to (sq_c);
        \draw[ ->] (sq_nc) to (nc_nc);\draw[ ->] (sq_sq) to (nc_sq);\draw[ ->] (sq_q) to (nc_q);\draw[ ->] (sq_wq) to (nc_wq);\draw[ ->] (sq_c) to (nc_c);
    \end{tikzpicture}
    \caption{All the trivial reductions}
\label{fig:expanded-trivial-reductions}
\end{figure}
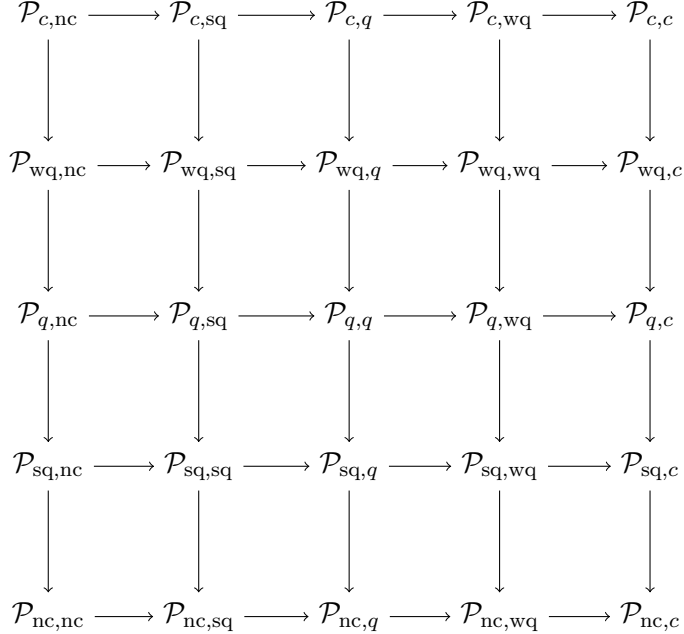

\subsection{Label-Cover and Unique-Label-Cover}\label{sec:label-cover}
The Label-Cover problem over the alphabet of size $m$, denoted $\LC(m)$, is a subset of $\CSP{2}(m)$ defined below. When it is clear from the context we drop the alphabet size and simply write $\LC$.
\begin{definition}[Label-Cover]\label{def:label-cover}
    An instance in $\LC(m)$ consists of 
\begin{itemize}
\item a bipartite graph $G = ((U,V),E)$ where $E \subseteq U \times V$,
\item a map $\pi_{u,v}:[m]\rightarrow [m]$ for every edge $(u,v)\in E$, and 
\item a weight $p_{u,v}$ for every edge $(u,v) \in E$ that defines a probability distribution.
\end{itemize}
We denote this instance with the tuple $((U,V),E,\pi,p)$.

To view this as an instance of $\CSP{2}(m)$ with variable set $U \cup V$ and constraint set $E$, as given in Definition \ref{def:general-csp}, it just remains to define the predicate $f_e:[m]^2 \rightarrow \{0,1\}$ for each edge. For $e = (u,v)\in E$, we define $f_e$ so that $f_e(a,b)= 1$ if and only if $\pi_{u,v}(a)=b$.
\end{definition}
 
The maps $\pi_{u,v}$ in the above definition are often referred to as projections. Any label assigned to the vertex $u\in U$ uniquely determines, via $\pi_{u,v}$, the label that should be assigned to its neighbor $v\in V$ for the corresponding edge predicate to evaluate to one.

We now summarize the important theorems known in the literature about Label-Cover.
\begin{theorem}[PCP theorem]\label{thm:pcp-theorem}
    For all $\delta>0$, there is a large enough alphabet over which $\LC_c(1,\delta)$ is $\NP$-hard. 
\end{theorem}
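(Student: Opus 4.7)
The plan is to establish this classical PCP theorem by composing three standard ingredients: the basic PCP theorem for $\SAT$, a combinatorial reduction from $\SAT$ to Label-Cover with a fixed small alphabet, and Raz's parallel repetition theorem to amplify the gap arbitrarily at the cost of enlarging the alphabet.

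First, I would invoke the basic PCP theorem of Arora--Safra and Arora--Lund--Motwani--Sudan--Szegedy (or Dinur's combinatorial proof), which asserts the existence of a constant $\delta_0 \in (0,1)$ such that $\SAT_c(1,\delta_0)$ is $\NP$-hard, where every clause can be taken to have exactly three literals. I treat this as a black box.

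Second, I would reduce $\SAT_c(1,\delta_0)$ to $\LC_c(1,\delta_1)$ over an alphabet of constant size (say $m_0 = 7$) for some constant $\delta_1 \in (\delta_0,1)$, via the familiar clauses-variables bipartite construction: left vertices are clauses, labeled by one of their at most seven satisfying assignments; right vertices are variables, labeled (via an embedding of $\{0,1\}$ into $[m_0]$) by a truth value; and the projection on the edge between a clause and a variable appearing in it sends each satisfying assignment of the clause to the value it induces on that variable. Completeness is immediate, and soundness follows by showing that any labeling satisfying a $\delta_1$ fraction of edges can be converted, probabilistically by sampling clause labels, into a variable assignment satisfying a $\delta_0$ fraction of clauses, contradicting the hardness of $\SAT$.

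Third, I would amplify the gap using Raz's parallel repetition theorem. Taking the $k$-fold parallel repetition of the $\LC$ instance yields a new $\LC$ instance over the alphabet $[m_0^k]$ whose classical value is at most $\delta_1^{\Omega(k/\log m_0)}$ when the original has value at most $\delta_1$, while perfectly satisfiable instances remain perfectly satisfiable. For any target $\delta > 0$, choosing $k$ large enough that $\delta_1^{\Omega(k/\log m_0)} < \delta$ yields an $\NP$-hardness reduction for $\LC_c(1,\delta)$ over the constant-size alphabet $[m_0^k]$.

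The main obstacle is Raz's parallel repetition theorem itself, whose proof requires a delicate information-theoretic analysis to establish exponential---rather than merely polynomial---decay of the value under repetition; the exponential rate is essential to keep the alphabet size of the final instance a constant depending only on $\delta$. The basic PCP theorem is also deep but is now standard, and the remaining clauses-variables reduction is routine combinatorics.
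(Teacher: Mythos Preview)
Your proof sketch is correct and follows the standard textbook route to the strong form of the PCP theorem. The paper, however, does not prove this statement at all: it is stated as a classical result and cited to the original PCP literature (Arora--Safra, Arora--Lund--Motwani--Sudan--Szegedy, Raz), serving purely as background for the analogous $\MIP^*=\RE$ statement in Theorem~\ref{thm:our-label-cover-is-hard}. So there is nothing to compare; you have simply supplied a proof where the paper chose to quote the literature.

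One minor remark on your soundness argument for the clauses--variables reduction: you do not need to ``probabilistically sample clause labels.'' The variable labels themselves already define a $\SAT$ assignment, and any clause all three of whose edges are satisfied is then satisfied by that assignment; a direct averaging shows that if the Label-Cover labeling satisfies a $1-\varepsilon$ fraction of edges, the induced variable assignment satisfies at least a $1-3\varepsilon$ fraction of clauses. This gives the constant $\delta_1 = (2+\delta_0)/3$ cleanly and deterministically.
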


\begin{theorem}[$\MIP^* = \RE$~\cite{ji_mip_re}]\label{thm:our-label-cover-is-hard}
    For all $\delta>0$, there is a large enough alphabet over which $\LC_q(1,\delta)$ is $\RE$-hard.
\end{theorem}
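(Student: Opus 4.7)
The plan is to derive the statement directly from the $\MIP^*=\RE$ theorem of Ji et al.~\cite{ji_mip_re} via the dictionary between two-prover one-round nonlocal games and two-variable CSPs (Table~\ref{tab:csp-two-provers} and Section 10.2 of~\cite{original}). Concretely, for every Turing machine $M$ and every $\delta>0$ I would exhibit an effective procedure that produces a Label-Cover instance $\phi_M$ with $\omega_q(\phi_M)=1$ when $M$ halts and $\omega_q(\phi_M)<\delta$ when $M$ does not halt.

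The first step invokes the main construction of~\cite{ji_mip_re}, which, given $M$ and $\delta$, outputs a nonlocal game $G_M$ of projection type whose oracularizable synchronous quantum value equals $1$ in the halting case and is bounded by $\delta$ in the non-halting case. The second step translates $G_M$ into a $\CSP{2}$ instance in the standard way: question pairs become edges, the question distribution becomes the edge weight $p$, and the answer alphabet becomes the label set $[m]$. Because $G_M$ is a projection game, each predicate has the form $f_{u,v}(a,b)=1 \iff \pi_{u,v}(a)=b$ for some $\pi_{u,v}:[m]\to[m]$, which matches Definition~\ref{def:label-cover} exactly. The bipartition $(U,V)$ is inherited from the bipartite support of the question distribution. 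Under the CSP/nonlocal-game correspondence, the oracularizable synchronous value of $G_M$ coincides with $\omega_q(\phi_M)$ from Definition~\ref{def:quantum-value}, so completeness and soundness transfer verbatim. The alphabet size is allowed to depend on $\delta$, exactly as the theorem permits.

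The main obstacle is entirely delegated to~\cite{ji_mip_re}: establishing the soundness bound against oracularizable synchronous (equivalently, commuting-operator) strategies for the game $G_M$. Taking that as a black box, the remaining verification in my proof is essentially bookkeeping: checking that the projection structure of the verifier survives the translation to a Label-Cover instance, that perfect completeness is preserved (since the honest strategy in~\cite{ji_mip_re} is an exact synchronous quantum strategy with commuting measurements across every edge), and that the reduction is computable in polynomial time in the size of the description of $M$ and $1/\delta$. The chain halting problem $\longrightarrow G_M \longrightarrow \phi_M$ is then a many-one reduction from an $\RE$-complete problem to $\LC_q(1,\delta)$, yielding the claimed hardness.
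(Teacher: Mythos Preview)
Your proposal is correct and arrives at the same conclusion, but the paper's argument is organized slightly differently and is worth comparing. The paper observes that \cite{ji_mip_re} actually establishes the \emph{stronger} statement that $\LC_{q,\nc}(1,\delta)$ is $\RE$-hard---that is, completeness holds for the oracularizable (quantum) value while soundness holds against the full synchronous (noncommutative) value---and then invokes the trivial reduction $\LC_{q,\nc}\longrightarrow \LC_q$ coming from $\omega_q\leq\omega_\nc$. Your route instead asserts directly that the oracularizable synchronous value of $G_M$ is below $\delta$ in the non-halting case; this is true, but only because it is implied by the stronger soundness that \cite{ji_mip_re} proves against all synchronous strategies, a point you leave implicit. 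The paper's formulation makes this inheritance explicit and avoids having to re-examine the internals of the honest strategy.

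One terminological slip: your parenthetical ``(equivalently, commuting-operator)'' is not right. Oracularizable synchronous strategies (the model underlying $\omega_q$ here) are not the same as commuting-operator strategies in the Tsirelson sense; the latter refers to the two provers' algebras commuting globally, whereas oracularizability only demands commutation edge-by-edge within a single prover's measurements. This does not affect the validity of your argument, but you should drop or correct that remark.
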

\begin{proof}
In fact~\cite{ji_mip_re} proves a stronger statement: For all $\delta>0$, there is a large enough alphabet size over which $\LC_{q,\nc}(1,\delta)$ is $\RE$-hard. The theorem now follows noting the trivial reduction $\LC_{q,\nc} \longrightarrow \LC_{q}$ (see Figure \ref{fig:expanded-trivial-reductions}).
\end{proof}

\begin{definition}[Unique-Label-Cover]\label{def:unique-label-cover}
The Unique-Label-Cover problem with alphabet size $m$, denoted $\ULC(m)$, is a subset of $\LC(m)$ where projection maps $\pi_{u,v}$ are bijective. In a Unique-Label-Cover instance, for every edge $(u,v)$, the inverse $\pi_{u,v}^{-1}$ also exists, and we often use the notation $\pi_{v,u} \coloneqq \pi_{u,v}^{-1}$.
\end{definition}
In the case of a Unique-Label-Cover any label assigned to the vertex $u\in U$ uniquely determines, via $\pi_{u,v}$, the label that should be assigned to its neighbor $v\in V$ for the corresponding edge predicate to evaluate to one, and vice versa.
\begin{conjecture}[Unique Games Conjecture (UGC)~\cite{khot_original}]\label{conj:ugc}
    For all $\zeta,\delta>0$, there is a large enough alphabet over which $\ULC_c(1-\zeta,\delta)$ is $\NP$-hard.
\end{conjecture}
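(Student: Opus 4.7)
The Unique Games Conjecture is a well-known open problem, not a theorem anyone presently knows how to prove; what follows is only a sketch of the most natural attack and of where it stalls.

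The plan is a gap-preserving reduction from Label-Cover, which Theorem~\ref{thm:pcp-theorem} furnishes in the regime $(1,\delta)$ over a large alphabet. Going directly from Label-Cover to Unique-Label-Cover is hopeless because a generic projection collapses many labels and therefore cannot be repaired into a bijection without losing perfect completeness; this is exactly why the conjecture allows $\zeta>0$ on the YES side. I would therefore reduce from a more structured variant such as Smooth-Label-Cover or $d$-to-$d$ Label-Cover, whose projections have controlled collision behaviour, and replace each Label-Cover vertex by a block of $\ULC$ variables indexed by a noise-robust encoding (the Long Code, or a Grassmann encoding). The $\ULC$ constraints would be ``consistency tests'' between these encoded blocks, set up so that an honest codeword labeling satisfies all but a $\zeta$ fraction of constraints; completeness then follows from a direct computation.

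The soundness step is the real difficulty. One would try to decode any assignment to the $\ULC$ instance of value at least $\delta$ into a labeling of the original Label-Cover instance achieving value bounded below by some function of $\delta$. In the Long Code world, the standard route runs through a dictatorship test analysed via Fourier expansion together with an invariance principle and Bourgain's junta theorem (Theorem~\ref{thm:Bourgain}), using the low-degree influences of the decoded strategy to pick out a small list of candidate labels for each vertex.

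The main obstacle — and the reason this plan does not go through — is that structured ``pseudo-dictator'' strategies with small low-degree influences can still fool such tests; ruling them out appears to require pseudorandomness statements on Grassmann-like posets, the direction in which Khot, Minzer, and Safra proved the $2$-to-$2$ Games Theorem but still fell short of full UGC. I do not see how to close this last step, and I expect this is precisely why the present paper takes UGC as a hypothesis rather than as a theorem, and why its quantum analogue qUGC is likewise proposed only as a conjecture.
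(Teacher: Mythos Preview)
Your assessment is correct: this is stated in the paper as a \emph{conjecture}, not a theorem, and the paper offers no proof whatsoever---it simply records Khot's original formulation and uses it (and its quantum analogues) as a hypothesis from which to derive inapproximability consequences. There is nothing to compare your attempt against, and your honest disclaimer that UGC remains open is exactly the right response.
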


\begin{theorem}[KRT~\cite{kempe}]\label{thm:kempe} For every $\zeta,\delta > 0$ such that $6\zeta +\delta < 1$, the decision problem $\ULC_\nc(1-\zeta,\delta)$ can be solved in polynomial time.
\end{theorem}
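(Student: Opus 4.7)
The plan is to design a polynomial-time algorithm for $\ULC_\nc(1-\zeta,\delta)$ by combining the standard vector-valued SDP relaxation for unique games with a rounding procedure due to Kempe, Regev, and Toner. The key insight is that the same SDP that relaxes the classical value of unique games also relaxes the noncommutative value; hence a rounding scheme that converts any good SDP solution into a good classical assignment handles both settings simultaneously.

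First I would write the SDP. For $\phi=((U,V),E,\pi,p)$ over alphabet $[m]$, introduce vectors $\{v_{w,a}\}_{w\in U\cup V,\,a\in[m]}$ with $\langle v_{w,a},v_{w,b}\rangle=0$ for $a\neq b$ and $\sum_a\|v_{w,a}\|^2=1$, and maximize
\[
\mathrm{SDP}(\phi)=\sum_{(u,v)\in E}p_{u,v}\sum_{a\in[m]}\langle v_{u,a},v_{v,\pi_{u,v}(a)}\rangle,
\]
which is solvable to any desired precision in polynomial time. To show $\mathrm{SDP}(\phi)\geq\omega_\nc(\phi)$, given a noncommutative assignment $\Pi$ on $\mathcal{H}$, view $\M(\mathcal{H})$ as a Hilbert space under the Hilbert--Schmidt inner product $\langle A,B\rangle=\tr(A^*B)$ and set $v_{w,a}$ to be the image of $\Pi_w^a$. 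The PVM relations $\Pi_w^a\Pi_w^b=\delta_{a,b}\Pi_w^a$ together with $\sum_a\Pi_w^a=I$ yield the SDP constraints, and each term $\langle v_{u,a},v_{v,\pi_{u,v}(a)}\rangle$ equals $\tr(\Pi_u^a\Pi_v^{\pi_{u,v}(a)})=\|\Pi_v^{\pi_{u,v}(a)}\Pi_u^a\|_{\mathrm{HS}}^2\geq 0$, so the SDP objective coincides with the noncommutative value on this assignment.

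The core step is the rounding. Given SDP vectors with objective at least $1-\zeta$, produce in polynomial time a classical assignment $x:U\cup V\to[m]$ of value at least $1-6\zeta$. The KRT argument exploits bijectivity of $\pi_{u,v}$ to show that when neighboring collections of vectors are heavily correlated through $\pi_{u,v}$, the ``mass'' at each vertex concentrates on a consistent label; a random-Gaussian (or threshold) rounding applied to the normalized vectors $\hat v_{w,a}=v_{w,a}/\|v_{w,a}\|$ then achieves the claimed bound, and derandomization via conditional expectations yields an explicit assignment.

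The algorithm outputs ``yes'' iff the rounded assignment has value exceeding $\delta$. On a yes instance, $\omega_\nc(\phi)\geq 1-\zeta$ forces $\mathrm{SDP}(\phi)\geq 1-\zeta$, so the rounded value is at least $1-6\zeta>\delta$ by the hypothesis $6\zeta+\delta<1$. On a no instance, $\omega_c(\phi)\leq\omega_\nc(\phi)<\delta$, and any classical assignment has value at most $\omega_c(\phi)$, so the output is ``no''. I expect the rounding analysis to be the main obstacle: showing that noncommutativity cannot substantially outperform classical strategies on unique games relies on the bijectivity of the projection maps in a delicate way, and is the technical heart of the Kempe--Regev--Toner result.
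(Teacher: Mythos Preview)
Your overall architecture---show that the vector SDP upper-bounds $\omega_\nc$, then round---is the right one, and your Hilbert--Schmidt embedding argument that $\mathrm{SDP}(\phi)\geq\omega_\nc(\phi)$ is fine. The paper's own proof does not reproduce the algorithm; it simply cites~\cite{kempe}, notes the translation from tensor-product/synchronous nonlocal-game value to the noncommutative CSP value, and restates the guarantee.

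However, your rounding claim is wrong in a way that matters. You assert that from an SDP solution of value $1-\zeta$ one can produce a \emph{classical} assignment $x:U\cup V\to[m]$ of value at least $1-6\zeta$. If that were true, it would refute the classical UGC: since the same SDP also relaxes $\omega_c$, any instance with $\omega_c(\phi)\geq 1-\zeta$ would yield a classical assignment of value $\geq 1-6\zeta$, solving $\ULC_c(1-\zeta,\delta)$ in polynomial time whenever $6\zeta+\delta<1$. The actual KRT rounding does \emph{not} output a classical labeling; it outputs an explicit entangled strategy---in the language of this paper, a \emph{noncommutative assignment} $\Pi$---with value at least $1-6\zeta$. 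The paper states this explicitly: ``the KRT algorithm constructs a noncommutative assignment with value at least $1-6\zeta$.'' The decision procedure then evaluates $\omega(\phi,\Pi)$ for this explicit $\Pi$ and compares it to $\delta$: on a yes instance $\omega(\phi,\Pi)\geq 1-6\zeta>\delta$, and on a no instance every noncommutative assignment has value below $\delta$.

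So the fix is simple but essential: replace ``classical assignment'' by ``noncommutative assignment'' in your rounding step, and drop the Gaussian/threshold-rounding-to-labels description, which is not what KRT do. The whole point of their result is precisely that entangled provers can achieve what classical provers (conjecturally) cannot on unique games.
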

\begin{proof} The KRT algorithm is originally stated for the tensor-product value of the nonlocal game instance associated with the Unique-Label-Cover instance. It is straightforward to modify their theorems to obtain an algorithm, with the same guarantee, for the quantum synchronous value of the nonlocal game. Finally Section 10.2 in~\cite{original}, shows the correspondence between the quantum synchronous value of a nonlocal game and the noncommutative value of a CSP. See also Table \ref{tab:csp-two-provers}. 

We now restate the KRT algorithm in our terminology. Given an instance $\phi \in \ULC$ with $\omega_\nc(\phi) \geq 1 - \zeta$, the KRT algorithm constructs a noncommutative assignment with value at least $1-6\zeta$. Thus whenever $1 - 6\zeta > \delta$, the decision problem $\ULC_\nc(1-\zeta,\delta)$ can be decided in polynomial time.
\end{proof}

KRT algorithm does not say anything about the quantum value as the assignment generated by this algorithm is highly noncommutative. So we propose the following conjecture: 
\begin{conjecture}[Quantum Unique Games Conjecture (qUGC)]\label{conj:qugc}
    For all $\zeta,\delta>0$, there is a large enough alphabet over which $\ULC_q(1-\zeta,\delta)$ is $\RE$-hard.
\end{conjecture}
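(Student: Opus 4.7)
The plan is to transfer the $\RE$-hardness of $\LC_q$ (Theorem \ref{thm:our-label-cover-is-hard}) to $\ULC_q$ via a quantization of a classical gap-preserving reduction from Label-Cover to Unique-Label-Cover. Since no such reduction is known classically---that is exactly why UGC is open---my concrete intermediate target would be the quantization of the $2$-to-$2$ Games Theorem of Khot-Minzer-Safra, which is unconditionally true classically and would already establish the ``$2$-to-$2$'' variant of qUGC discussed in Section \ref{sec:label-cover}; closing the remaining gap from $2$-to-$2$ to bijective constraints in the quantum setting would then give the full Conjecture \ref{conj:qugc}.

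\textbf{Construction and completeness.} The reduction itself I would build on the Long-Code gadget paradigm of KKMO and Khot-Minzer-Safra: replace each Label-Cover vertex by an operator-valued Long-Code cube $\{\pm 1\}^m$ and each projection edge by a family of bijective (or $2$-to-$2$) edges indexed by noise patterns on the cubes. Completeness would mirror Section \ref{sec:maxcut}: starting from a quantum assignment $\Pi$ for the $\LC$-instance, define per-cube observables from the operator-valued Long-Code of \cite{odonnell-yuen}, folded by the projections $\pi_{u,v}$, and verify that the induced PVMs on the gadget edges still commute---inherited from $[\Pi_u^a,\Pi_v^b]=0$ along the original Label-Cover edges---and satisfy the bijective constraints with high probability.

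\textbf{Soundness, and the main obstacle.} Contrapositively, from a high-value quantum assignment of $r(\phi)$ one would extract a quantum labeling of $\phi$ using an operator-valued influence decoding in the style of Sections \ref{sec:2lin}--\ref{sec:maxcut} of this paper, with the per-edge reorganization described in Section \ref{sec:proof-ideas} and the projectivization lemma of Section \ref{sec:projectivization} to convert Fourier POVMs into genuine quantum assignments. This is where I expect the main technical obstacle to sit: the classical KKMO/KMS analysis leans on an invariance principle that funnels low-influence functions toward Gaussians, and the operator-valued Majority-Is-Stablest that would play its role in the Label-Cover setting---as opposed to the pairwise MaxCut setting already handled here---is not yet available. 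Developing such an operator-valued invariance principle that is simultaneously compatible with the per-edge commutation structure of a quantum assignment and with the noncommutativity between operators at different neighbors of a shared vertex is, in my view, the central hurdle; once it is in place, the rest of the reduction should follow the template of the $2$-Lin and MaxCut arguments in this paper.
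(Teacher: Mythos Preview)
The statement you are ``proving'' is a \emph{conjecture}, not a theorem: the paper does not prove it, and indeed states it precisely as an open problem (see the discussion immediately preceding Conjecture~\ref{conj:qugc}). There is therefore no paper proof to compare against. Your proposal is not a proof but a research program, and you yourself identify the fatal gap when you write ``no such reduction is known classically---that is exactly why UGC is open.'' Quantizing a nonexistent classical reduction is not a proof strategy; it is a restatement of the open problem.

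More concretely, even the intermediate target you name---a quantum $2$-to-$2$ theorem---would require quantizing the Grassmann-graph small-set-expansion machinery of Khot--Minzer--Safra, which is far beyond the Long-Code/KKMO template you invoke and has no known operator-valued analogue. And even granting that, the step ``closing the remaining gap from $2$-to-$2$ to bijective constraints'' is itself a major open problem classically (the $2$-to-$2$ theorem does not imply UGC). So your proposal stacks two unresolved conjectures on top of each other. The honest status, consistent with the paper, is that Conjecture~\ref{conj:qugc} is open; the paper's contribution is to formulate it and to show (Sections~\ref{sec:2lin} and~\ref{sec:maxcut}) that it has the expected downstream consequences, not to prove it.
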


We now define the weak-quantum assignment for $\LC(m)$ instances. Since this is a special case of quantum assignment, we will first repeat the definition of quantum assignment specifically tailored for $\LC(m)$ for the reader's convenience.
\begin{definition}[Specialization of Definition \ref{def:quantum-value} for Label-Cover] \label{def:quantumval-specLC}
A quantum assignment to an instance $\phi = ((U,V),E,\pi,p)$ in $\LC(m)$ consists of a Hilbert space $\mathcal{H}$ and a function $\Pi:U\cup V \rightarrow \PVM_m(\mathcal{H})$ such that if $(u,v)$ is an edge, then $\Pi_u$ and $\Pi_v$ must be simultaneously measurable.
\end{definition}

\begin{definition}[Weak-quantum assignment and weak-quantum value]\label{def:weak-quantum-value}
Let $(\mathcal{H},\Pi)$ be a quantum assignment to an instance $\phi = ((U,V),E,\pi,p)$ in $\LC(m)$. We say that this is a weak-quantum assignment if it satisfies some additional simultaneous measurability constraints: whenever $v,v'\in V$ share a neighbor in $U$ their assigned PVMs $\Pi_{v}$ and $\Pi_{v'}$ must be simultaneously measurable.  

The value of the assignment on the edge $(u,v)\in E$ is
$$\sum_{a \in [m]} \tr\Paren{\Pi_{u}^{a}\Pi_{v}^{\pi_{u,v}(a)}},$$ and the value of the assignment on the instance, denoted $\omega(\phi,\Pi)$, is the weighted sum $$\sum_{(u,v)\in E} p_{u,v} \sum_{a \in [m]} \tr\Paren{\Pi_{u}^{a} \Pi_{v}^{\pi_{u,v}(a)}}.$$ Finally, the weak-quantum value of the instance $\omega_{\wq}(\phi)$ is the supremum of $\omega(\phi,\Pi)$ over all weak-quantum assignments. We have the chain of inequalities $\omega_\nc(\phi) \geq \omega_q(\phi) \geq \omega_\wq(\phi) \geq \omega_c(\phi).$ 
\end{definition}

Similar to Conjecture \ref{conj:qugc} concerning the quantum value of Unique-Label-Cover, we can propose a corresponding conjecture regarding the weak-quantum value:
\begin{conjecture}[Weak-Quantum Unique Games Conjecture (wqUGC)]\label{conj:wqugc}
    For all $\zeta,\delta>0$, there is a large enough alphabet over which $\ULC_{\wq}(1-\zeta,\delta)$ is $\RE$-hard.
\end{conjecture}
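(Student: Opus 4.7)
The plan is to establish $\RE$-hardness of $\ULC_\wq(1-\zeta,\delta)$ by a reduction from $\LC_q$, whose $\RE$-hardness is given by Theorem \ref{thm:our-label-cover-is-hard}. Because $\omega_\wq \leq \omega_q$, this hardness does not follow trivially from qUGC (Conjecture \ref{conj:qugc}); the additional commutation relations of the weak-quantum model---PVMs on right vertices sharing a common left neighbor must commute---force completeness to produce a genuine weak-quantum labeling, while soundness need only rule out weak-quantum labelings rather than arbitrary quantum ones. The two conjectures are thus formally incomparable.

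The first step would be to attempt a quantization of a classical $\LC \to \ULC$ reduction. Since no gap-preserving classical reduction of this form is currently known (this is essentially the content of the classical UGC), I would instead start from an intermediate variant such as Smooth-Label-Cover or $2$-to-$2$-Label-Cover, whose $\LC_q$ version might first need to be established, and then perform a gadget construction replacing each projection by a collection of bijections over an enlarged alphabet. For completeness, if the $\ULC$ gadget is built locally around each left vertex $u$, then a quantum labeling of the original $\LC$ instance should lift to a labeling whose PVMs on right vertices sharing $u$ are all simultaneously diagonalizable with $\Pi_u$, automatically satisfying the weak-quantum constraint. The Label-Cover structure, in which all constraints are mediated by left vertices, meshes well here with the weak-quantum condition.

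The main obstacle will be soundness. Classically, this is precisely the barrier that keeps UGC open; the argument typically relies on Fourier-analytic tools such as Majority-Is-Stablest and Bourgain's junta theorem (recalled in Section \ref{sec:mis-bourgain}). In the quantum setting one must additionally contend with noncommutativity between operators assigned to distinct edges, the core difficulty identified in Section \ref{sec:proof-ideas}. The weak-quantum constraint is a double-edged sword here: it helps by enforcing joint diagonalizability on each ``star'' centered at a left vertex, which should permit the classical Fourier-analytic machinery to be applied locally on that star, in the same edge-by-edge style used for MaxCut in Section \ref{sec:maxcut}; but the adversarial assignments we must rule out also benefit from this freedom. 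I therefore expect a successful proof to either (i) quantize a future proof of classical UGC, or (ii) directly modify the $\MIP^* = \RE$ construction so that the produced projection maps are nearly bijective and their soundness survives against weak-quantum labelings. Either route appears to require a substantial new idea beyond the techniques developed in this paper, and a successful execution would likely yield the classical UGC as a byproduct.
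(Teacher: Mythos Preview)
The statement you are attempting to prove is a \emph{conjecture}, not a theorem: the paper does not provide a proof of wqUGC, nor does it claim to. It introduces wqUGC as an open assumption, uses it as the hypothesis for the MaxCut hardness result (Theorem~\ref{thm:maxcuthardness}), and explicitly states that the relationship between qUGC and wqUGC is currently unknown---neither is known to imply the other (see Figure~\ref{fig:strongest-conjecture} and the surrounding discussion).

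Your proposal is therefore not a proof attempt in the usual sense but a sketch of possible attack routes, and you correctly identify the essential obstruction: soundness of any $\LC \to \ULC$ reduction is precisely the classical UGC barrier, and quantizing such a reduction would likely resolve the classical UGC as a byproduct. This matches the paper's own framing in Section~\ref{sec:future} (Problems~1 and~2), where the authors raise exactly the question of whether a classical $\LC \to \ULC$ reduction, should one ever be found, could be quantized. Your observation that the weak-quantum commutation structure meshes naturally with the star-shaped neighborhoods around left vertices is apt and is indeed the mechanism exploited in the completeness proof of Section~\ref{sec:maxcut}. But no proof exists here to compare against; the right response to this task is simply to note that the statement is conjectural and unproven in the paper.
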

\begin{figure}[ht]
\centering
	\begin{tikzpicture}

        \node[] (c_nc) at (0,8) {$\ULC_{c,\nc}$}; \node[] (c_sq) at (2.5,8) {$\ULC_{c,\sq}$};   \node[] (c_q) at (5,8) {$\ULC_{c,q}$};  \node[] (c_wq) at (7.5,8) {$\ULC_{c,\wq}$};  \node[red] (c_c) at (10,8) {$\ULC_{c,c}$};
        \node[] (wq_nc) at (0,6) {$\ULC_{\wq,\nc}$}; \node[] (wq_sq) at (2.5,6) {$\ULC_{\wq,\sq}$};   \node[] (wq_q) at (5,6) {$\ULC_{\wq,q}$};  \node[red] (wq_wq) at (7.5,6) {$\ULC_{\wq,\wq}$};  \node[] (wq_c) at (10,6) {$\ULC_{\wq,c}$};
        \node[] (q_nc) at (0,4) {$\ULC_{q,\nc}$}; \node[] (q_sq) at (2.5,4) {$\ULC_{q,\sq}$};   \node[red] (q_q) at (5,4) {$\ULC_{q,q}$};  \node[] (q_wq) at (7.5,4) {$\ULC_{q,\wq}$};  \node[] (q_c) at (10,4) {$\ULC_{q,c}$};
        \node[] (sq_nc) at (0,2) {$\ULC_{\sq,\nc}$}; \node[] (sq_sq) at (2.5,2) {$\ULC_{\sq,\sq}$};   \node[] (sq_q) at (5,2) {$\ULC_{\sq,q}$};  \node[] (sq_wq) at (7.5,2) {$\ULC_{\sq,\wq}$};  \node[] (sq_c) at (10,2) {$\ULC_{\sq,c}$};
        \node[blue] (nc_nc) at (0,0) {$\ULC_{\nc,\nc}$}; \node[] (nc_sq) at (2.5,0) {$\ULC_{\nc,\sq}$};   \node[] (nc_q) at (5,0) {$\ULC_{\nc,q}$};  \node[] (nc_wq) at (7.5,0) {$\ULC_{\nc,\wq}$};  \node[] (nc_c) at (10,0) {$\ULC_{\nc,c}$};
        
        \draw[ ->] (c_nc) to (c_sq);\draw[ ->] (c_sq) to (c_q);\draw[ ->] (c_q) to (c_wq);\draw[ ->] (c_wq) to (c_c);
        \draw[ ->] (wq_nc) to (wq_sq);\draw[ ->] (wq_sq) to (wq_q);\draw[ ->] (wq_q) to (wq_wq);\draw[ ->] (wq_wq) to (wq_c);
        \draw[ ->] (q_nc) to (q_sq);\draw[ ->] (q_sq) to (q_q);\draw[ ->] (q_q) to (q_wq);\draw[ ->] (q_wq) to (q_c);
        \draw[ ->] (sq_nc) to (sq_sq);\draw[ ->] (sq_sq) to (sq_q);\draw[ ->] (sq_q) to (sq_wq);\draw[ ->] (sq_wq) to (sq_c);
        \draw[ ->] (nc_nc) to (nc_sq);\draw[ ->] (nc_sq) to (nc_q);\draw[ ->] (nc_q) to (nc_wq);\draw[ ->] (nc_wq) to (nc_c);
        
        \draw[ ->] (c_nc) to (wq_nc);\draw[ ->] (c_sq) to (wq_sq);\draw[ ->] (c_q) to (wq_q);\draw[ ->] (c_wq) to (wq_wq);\draw[ ->] (c_c) to (wq_c);
        \draw[ ->] (wq_nc) to (q_nc);\draw[ ->] (wq_sq) to (q_sq);\draw[ ->] (wq_q) to (q_q);\draw[ ->] (wq_wq) to (q_wq);\draw[ ->] (wq_c) to (q_c);
        \draw[ ->] (q_nc) to (sq_nc);\draw[ ->] (q_sq) to (sq_sq);\draw[ ->] (q_q) to (sq_q);\draw[ ->] (q_wq) to (sq_wq);\draw[ ->] (q_c) to (sq_c);
        \draw[ ->] (sq_nc) to (nc_nc);\draw[ ->] (sq_sq) to (nc_sq);\draw[ ->] (sq_q) to (nc_q);\draw[ ->] (sq_wq) to (nc_wq);\draw[ ->] (sq_c) to (nc_c);
    \end{tikzpicture}
    \caption{All the trivial reductions between decision variants of Unique-Label-Cover are illustrated. The decision variants of Unique-Label-Cover that concern each of the UGC, qUGC, and wqUGC are indicated in red. By the KRT algorithm, the decision problem $\ULC_{\nc,\nc}$ is easy. See also Figure \ref{fig:unique-label-cover-final-complexity-landscape}.}
\label{fig:expanded-reductions-unique-games}
\end{figure}
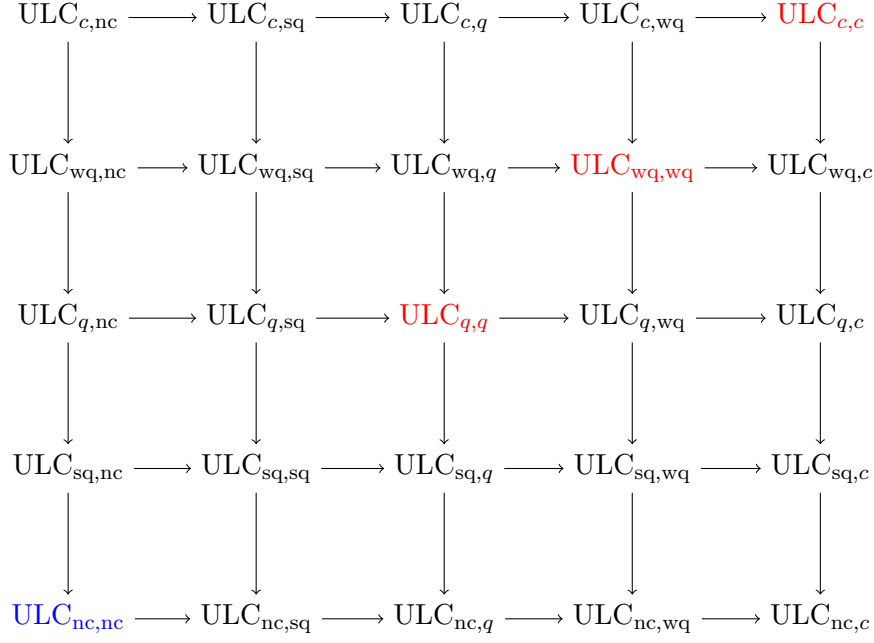

The qUGC and wqUGC are two variants of the quantum Unique Games Conjecture, but there are many others. For every $s,t \in \{c,\wq,q,\sq,\nc\}$, let UGC$_{s,t}$ denote the assumption that: ``For every $\zeta,\delta > 0$, there exists a sufficiently large alphabet over which $\ULC_{s,t}(1-\zeta,\delta)$ is $\RE$-hard.'' 

What is the relation between qUGC and wqUGC? At present, we do not know which conjecture represents a stronger assumption. As shown in Figure \ref{fig:expanded-reductions-unique-games}, no reduction is currently known between $\ULC_q$ and $\ULC_\wq$ in either direction.\footnote{Establishing a reduction from $\ULC_q$ to $\ULC_\wq$ faces challenges. Informally, proving completeness for any such reduction involves determining whether, given a set of nearly commuting measurements, there always exists a set of perfectly commuting measurements that can simulate the behavior of the original set. These questions are closely tied to problems in the theory of group stability~\cite{ioana2021commutingmatrices,chapman,delasalle}, which seem at the moment out of reach.} 

As illustrated in Figure \ref{fig:strongest-conjecture}, the UGC$_{\wq,q}$ is a stronger assumption than both qUGC and wqUGC.
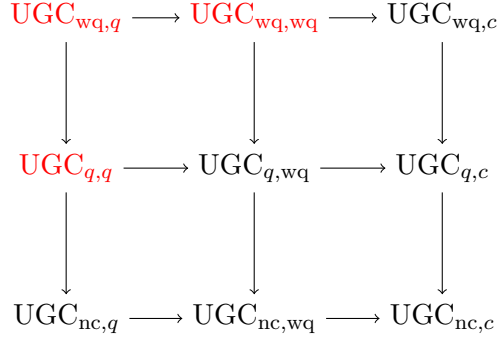
\begin{figure}[ht]
\centering
	\begin{tikzpicture}

       \node[red] (wq_q) at (5,4) {$\UGC_{\wq,q}$};  \node[red] (wq_wq) at (7.5,4) {$\UGC_{\wq,\wq}$};  \node[] (wq_c) at (10,4) {$\UGC_{\wq,c}$};
          \node[red] (q_q) at (5,2) {$\UGC_{q,q}$};  \node[] (q_wq) at (7.5,2) {$\UGC_{q,\wq}$};  \node[] (q_c) at (10,2) {$\UGC_{q,c}$};
          \node[] (nc_q) at (5,0) {$\UGC_{\nc,q}$};  \node[] (nc_wq) at (7.5,0) {$\UGC_{\nc,\wq}$};  \node[] (nc_c) at (10,0) {$\UGC_{\nc,c}$};

        \draw[ ->] (wq_q) to (wq_wq);\draw[ ->] (wq_wq) to (wq_c);
        \draw[ ->] (q_q) to (q_wq);\draw[ ->] (q_wq) to (q_c);
        \draw[ ->] (nc_q) to (nc_wq);\draw[ ->] (nc_wq) to (nc_c);

        \draw[ ->] (wq_q) to (q_q);\draw[ ->] (wq_wq) to (q_wq);\draw[ ->] (wq_c) to (q_c);
        \draw[ ->] (q_q) to (nc_q);\draw[ ->] (q_wq) to (nc_wq);\draw[ ->] (q_c) to (nc_c);
    \end{tikzpicture}
    \caption{Variants of quantum Unique Games Conjectures: $\UGC_{\wq,q}$ is the strongest conjecture and it implies both qUGC and wqUGC.}
\label{fig:strongest-conjecture}
\end{figure} 

In this paper we use both qUGC and wqUGC. The qUGC is used in proving hardness for $\Lin{2}$ in Section \ref{sec:2lin}. The wqUGC is used in proving hardness for MaxCut in Section \ref{sec:maxcut}.

\subsection{$k$-Lin and MaxCut}\label{sec:definitions-lin-maxcut}
The $\Lin{k}(2)$ is a subset of $\CSP{k}(2)$ where the constraints are linear equations over $\Z_2$. Since all our linear CSPs are defined over $\Z_2$, we simply write $\Lin{k} \coloneq \Lin{k}(2)$. We now formally define an instance in $\Lin{k}$.
\begin{definition}\label{def:linear-constraints}
    An instance in $\Lin{k}$ is a tuple $(V,E,r,p)$ where
\begin{itemize}
\item $V$ is some set and $E$ is a multiset of $k$-tuples of $V$,
\item $r_{e} \in \{0,1\}$ is a parity bit for every $e\in E$, 
\item and $p$ is a probability distribution  on $E$.
\end{itemize}

To view this as an instance of $\CSP{k}(2)$, as in Definition \ref{def:general-csp}, we define predicates $f_{e}:\{0,1\}^k \rightarrow \{0,1\}$ for every edge $e\in E$ such that for every $a \in \{0,1\}^k$, $f_{e}(a_1,\ldots,a_k)=1$ if and only if $\sum_{i\in [k]} a_i = r_{e}$, where the arithmetic is over $\Z_2$.
\end{definition}
The special case $\Lin{3}$ is sometimes referred to as $\XOR$. The set of all instances in $\Lin{2}$ where all the parity bits are $1$ is known as MaxCut. Since the parity bits are fixed for MaxCut, we denote its instances by triples $(V,E,p)$.

\begin{example*}
Consider the following weighted linear system over $\Z_2$ 
\begin{align*}
    &x_1 + x_2 = 0,\\
    &x_1 + x_2 = 0,\\
    &x_1 + x_2 = 1,\\
    &x_3 + x_4 = 1,
\end{align*}
with weights $1/2,1/4,1/8,1/8$. An instance $\psi=(V,E,r,p)$ of $\Lin{2}$ that represents this system is given by $V = \{1,2,3,4\}, E = \{e_1,e_2,e_3,e_4\}$, where $e_1 = (1,2), e_2 = (1,2), e_3 = (1,2), e_4 = (3,4)$, with parity bits $r_{e_1} = 0,r_{e_2} = 0,r_{e_3} = 1,r_{e_4} = 1$ and edge weights $p_{e_1} = 1/2,p_{e_2} = 1/4,p_{e_3} = 1/8,p_{e_4} = 1/8$. Note that there is some flexibility when converting a linear system to an instance of $\Lin{2}$; for instance, one could let $e_4 = (4, 3)$ instead of $e_4 = (3, 4)$. 
\end{example*}

For binary-alphabet CSPs, it is often more convenient to use the multiplicative group of $\{\pm1\}$ rather than $\Z_2$ as the set of labels. 

\begin{example*}
The linear system from the previous example can also be written multiplicatively as
\begin{align*}
    &x_1 x_2 = 1,\\
    &x_1 x_2 = 1,\\
    &x_1 x_2 = -1,\\
    &x_3 x_4 = -1,
\end{align*}
where $x_1,x_2,x_3,x_4 \in \{\pm 1\}$. To reflect this in the corresponding instance $\psi=(V,E,r,p)$ we set $r_{e_1} = 1,r_{e_2} = 1,r_{e_3} = -1,r_{e_4} = -1$.
\end{example*}
From here on, we assume that $\Lin{2}$ instances are multiplicative, meaning that their parity bits are $r_e \in \{\pm1\}$.

With this change, it becomes more natural to write assignments to $\Lin{2}$ instances in multiplicative form as well. The multiplicative analogue of binary-outcome PVMs are observables. Recall from section \ref{sec:notations}, that binary-outcome PVMs correspond one-to-one with observables. For convenience, we will restate the definition of quantum assignments in terms of observables.

\begin{definition}[Quantum assignment and quantum value for $\Lin{k}$ stated in terms of observables]\label{def:quantum-observable-assignment}
A quantum assignment to a $\Lin{k}$ instance $(V,E,r,p)$ consists of a Hilbert space $\mathcal{H}$ and a function $\alpha:V \to \Obs(\mathcal{H})$ such that for every constraint $e=(i_1,\ldots,i_k)$ the observables $\alpha_{i_1},\ldots,\alpha_{i_k}$ commute. 

The value of the assignment $\alpha$ on the constraint $e=(i_1,\ldots,i_k)$ is $$\frac{1}{2} + \frac{1}{2}r_{e} \tr(\alpha_{i_1} \cdots \alpha_{i_k}).$$ 

The value of the assignment $\alpha$ on the instance, denoted $\omega(\phi,\alpha)$, is the weighted sum $$\frac{1}{2} + \frac{1}{2}\sum_{e=(i_1,\ldots,i_k)\in E} p_e\, r_{e} \tr(\alpha_{i_1} \cdots \alpha_{i_k}).$$ 

Finally, the quantum value of the instance is the supremum of $\omega(\phi,\alpha)$ over all quantum assignments $\alpha$. 
\end{definition}

\section{$2$-LIN}\label{sec:2lin}
\subsection{Statements}
In this section, we present a proof of the following theorem:

\begin{theorem}\label{thm:lin2hardness}
    Assuming qUGC, for every $\frac{1}{2}<t<1$ and sufficiently small $\varepsilon>0$, $\Lin{2}_q(1-O(\varepsilon), 1-O(\varepsilon^t))$ is $\RE$-hard. 
\end{theorem}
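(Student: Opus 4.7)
The plan is to quantize Khot's classical reduction from Unique-Label-Cover to $2$-Lin. Given a $\ULC(m)$ instance $\phi = ((U,V),E,\pi,p)$, I will produce a $\Lin{2}$ instance $\psi$ whose variables are indexed by pairs $(w,x)$ with $w\in U\cup V$ and $x\in\{\pm 1\}^m$; each block $\{(w,x)\}_{x\in\{\pm 1\}^m}$ plays the role of the Long Code of the label assigned to $w$. For each ULC edge $(u,v)$, sample $x\in\{\pm 1\}^m$ uniformly and an $\varepsilon$-noise vector $\mu$ (with $\mu_i = -1$ with probability $\varepsilon/2$), then add the $\Lin{2}$ constraint relating the variables $(u,x)$ and $(v,(x\circ\pi_{u,v})\cdot \mu)$, with parity bit folded so as to reject constant encodings (using the folding lemma from Section \ref{sec:folding-lemma}). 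I will then invoke qUGC (Conjecture \ref{conj:qugc}) on the starting ULC instance so that the hardness passes through the reduction.

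\textbf{Completeness.} Given a labelling $a:U\cup V\to[m]$ of $\phi$ satisfying at least a $1-\zeta$ fraction of ULC edges, define the quantum assignment $\alpha_{(w,x)} = x_{a(w)} \cdot I$, i.e.\ the scalar Long Code. Since these observables are all scalars, the commutation relations required by Definition \ref{def:quantum-observable-assignment} hold trivially for every constraint. A short calculation using the noise parameter $\varepsilon$ shows that each ULC edge that $a$ satisfies is mapped to a collection of $\Lin{2}$ constraints whose expected satisfaction is $1-\varepsilon/2$, yielding $\omega_q(\psi)\ge 1 - O(\varepsilon+\zeta)$.

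\textbf{Soundness.} This is the contrapositive step. Assume $\omega_q(\psi) \ge 1 - O(\varepsilon^t)$ realised by some quantum assignment $\alpha_{(w,x)}\in\Obs(\hilb)$; I will extract a quantum labelling of $\phi$ of value at least some constant $\delta'(\delta,t)$. For each $w$, treat $\alpha_w : \{\pm 1\}^m \to \Obs(\hilb)$ as an operator-valued Boolean function and take its Fourier expansion $\alpha_w(x) = \sum_S \chi_S(x)\,\hat{\alpha}_w(S)$. Folding ensures $\alpha_w$ is odd, so $\hat{\alpha}_w(\emptyset) = 0$. Expanding the $\Lin{2}$ objective in Fourier basis on a random edge $(u,v)$ yields, after averaging over noise, an expression of the form $\sum_S (1-\varepsilon)^{|S|}\tr\!\bigparen{\hat{\alpha}_u(S)\,\hat{\alpha}_v(\pi_{u,v}(S))}$. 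The soundness hypothesis forces this to be close to $1$ on many edges; combined with Parseval and Cauchy--Schwarz, this produces, for many edges, pairs of small sets $S,T$ with $\pi_{u,v}(S)=T$ on which $\hat{\alpha}_u(S)^2$ and $\hat{\alpha}_v(T)^2$ carry nontrivial mass. The operator-valued Bourgain Junta Theorem (applied entry-wise after diagonalising individual observables $\alpha_w(x)$, using that the Fourier weight bound in Theorem \ref{thm:Bourgain} can be lifted) is what supplies a ``low-degree influential'' set for each vertex with the right quantitative dependence on $t$.

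\textbf{Producing a quantum labelling.} For each vertex $w$ of $\phi$, use Lemma \ref{lem:povm_from_obs} to build the self-commuting POVM $P_w^a = \sum_{S\ni a} \frac{1}{|S|}\hat{\alpha}_w(S)^2$. The key step, and the main obstacle, is converting these POVMs for $u$ and $v$ into genuine PVMs as required by the quantum value of ULC (Definition \ref{def:quantumval-specLC}) \emph{while preserving} the simultaneous measurability of $P_u$ and $P_v$ for every edge $(u,v)$. Naimark dilation is not available here because we are working with tracial states, so I rely on the projectivization lemma developed in Section \ref{sec:projectivization}, which is tailored precisely to this situation. With the projectivized PVMs $\Pi_u, \Pi_v$ in hand, the Fourier expression above translates, via bijectivity of $\pi_{u,v}$, into a lower bound on $\sum_a \tr(\Pi_u^a \Pi_v^{\pi_{u,v}(a)})$ for a nonnegligible fraction of edges; averaging over $E$ gives $\omega_q(\phi)\ge \delta$ for the appropriate $\delta$, contradicting the qUGC hardness assumption on $\phi$.

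\textbf{Where difficulty concentrates.} The hardest part will not be the reduction itself but managing noncommutativity at two specific junctures: (i) justifying the Bourgain-style junta extraction when the Fourier coefficients are noncommuting operators rather than scalars, which forces an edge-by-edge analysis as flagged in Section \ref{sec:proof-ideas} rather than a neighbourhood-at-a-time analysis; and (ii) the simultaneous projectivization of the two POVMs $P_u$ and $P_v$ without losing their commutation, which is the step that makes the projectivization lemma of Section \ref{sec:projectivization} indispensable. Once both are in place, the remaining bookkeeping to match the completeness--soundness gap $(1-O(\varepsilon),\,1-O(\varepsilon^t))$ is a direct adaptation of Khot's original calculation.
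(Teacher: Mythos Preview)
Your overall architecture is right, but the completeness argument has a genuine gap that breaks the reduction.

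\textbf{The completeness gap.} You write ``Given a labelling $a:U\cup V\to[m]$ of $\phi$ satisfying at least a $1-\zeta$ fraction of ULC edges, define $\alpha_{(w,x)} = x_{a(w)}\cdot I$.'' But under qUGC, a yes instance of $\ULC_q(1-\zeta,\delta)$ only guarantees $\omega_q(\phi)\ge 1-\zeta$, i.e.\ a \emph{quantum} PVM assignment $\Pi:U\cup V\to\PVM_m(\hilb)$ on some possibly high-dimensional $\hilb$, not a classical labelling. Your scalar Long Code proves completeness for the reduction $\ULC_{c,q}\to\Lin{2}_q$, which corresponds to a strictly weaker conjecture than qUGC (see Figure~\ref{fig:expanded-reductions-unique-games}). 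The fix is the operator Long Code: set $\alpha_w(x)=\sum_{a\in[m]} x_a\,\Pi_w^a$. These are genuine observables, and the commutation relations required of a quantum $\Lin{2}$ assignment follow from the simultaneous measurability of $\Pi_u$ and $\Pi_v$ on ULC edges. This is the step where the quantum nature of the starting assignment is actually used, and it is not trivial---it is precisely why the reduction lands in $\Lin{2}_q$ rather than $\Lin{2}_c$.

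\textbf{A structural issue in the reduction.} Your $\psi$ contains only cross-constraints between $(u,x)$ and $(v,(x\circ\pi_{u,v})\mu)$. The paper (following Khot--H\r{a}stad) also includes ``within-$u$'' and ``within-$v$'' noise-stability constraints $((w,x),(w,x\mu))$, and these are not cosmetic. First, the within-$u$ test is what yields the inequality $\sum_S \hat\beta(S)^2(1-2\varepsilon)^{|S|}\ge 1-b_t\varepsilon^t$ for the scalar function $\beta=\alpha_{u,j}$, which is the premise of Bourgain's theorem; the cross-test alone gives a bilinear expression in $\hat\alpha_u$ and $\hat\alpha_v$ that does not directly feed into Theorem~\ref{thm:Bourgain}. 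Second, and more subtly, the within-$w$ constraints force $\alpha_w(x)$ and $\alpha_w(y)$ to commute for all $x,y$ (since every pair arises as $(x,x\mu)$ for some $\mu$). Without this, your $P_w$ need not be self-commuting and the simultaneous diagonalisation of $\alpha_u(\{\pm1\}^m)\cup\alpha_v(\{\pm1\}^m)$ on a fixed edge---the linchpin of the edge-by-edge Bourgain application---is not available.

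Your soundness sketch otherwise has the right shape (edge-wise diagonalisation, scalar Bourgain on diagonal entries, Fourier-weight POVMs, projectivization), and you correctly identify where the difficulty lies.
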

This result should be compared to the following theorem by Khot~\cite{khot_original}:
\begin{theorem}\label{thm:khot-lin2hardness}
    Assuming UGC, for every $\frac{1}{2}<t<1$ and sufficiently small $\varepsilon>0$, $\Lin{2}_c(1-O(\varepsilon), 1-O(\varepsilon^t))$ is $\NP$-hard. 
\end{theorem}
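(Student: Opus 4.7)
The plan is to prove Theorem \ref{thm:khot-lin2hardness} by a gap-preserving polynomial-time reduction from $\ULC_c(1-\zeta,\delta)$ (which is $\NP$-hard by UGC for appropriately chosen parameters) to $\Lin{2}_c(1-O(\varepsilon),1-O(\varepsilon^t))$, using a standard Long-Code gadget with noise rate $\varepsilon$. The Fourier analytic heart of the soundness argument will be Bourgain's Junta Theorem (Theorem \ref{thm:Bourgain}), which is precisely the tool that gives the $\varepsilon^t$ soundness gap.

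Given a ULC instance $\phi=((U,V),E,\pi,p)$ over alphabet $[m]$, I would construct a $\Lin{2}$ instance $\psi$ as follows. For each $v\in V$, introduce Boolean variables $\{x_{v,z}\}_{z\in\{\pm1\}^m}$, intended to encode the Long-Code $f_v(z)=z_{a(v)}$ of a label $a(v)\in[m]$; fold the Long-Code by identifying $x_{v,-z}=-x_{v,z}$ so that only odd functions are representable. To generate a constraint, sample a vertex $u\in U$ from the appropriate weighted distribution, two independent neighbors $v,v'$ of $u$ (also weighted), a uniform $z\in\{\pm 1\}^m$, and a noise string $\mu\in\{\pm 1\}^m$ where each $\mu_i=-1$ independently with probability $\varepsilon$. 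The constraint is
\[
x_{v,\,z\circ \pi_{u,v}^{-1}}\cdot x_{v',\,(z\cdot\mu)\circ \pi_{u,v'}^{-1}} = 1.
\]
For \textbf{completeness}, starting from a ULC labeling $a$ that satisfies a $(1-\zeta)$-fraction of edges, the canonical Long-Code assignment $f_v(z)=z_{a(v)}$ satisfies the above constraint whenever both edges $(u,v),(u,v')$ are satisfied by $a$ and $\mu_{a(u)}=+1$. A union bound shows this happens with probability at least $1-2\zeta-\varepsilon$, giving a $\Lin{2}$ value of at least $1-O(\varepsilon)$ for $\zeta=O(\varepsilon)$.

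For \textbf{soundness}, I would argue by contrapositive in the usual KKMO-style manner. Assume a $\Lin{2}$ assignment, viewed as functions $f_v:\{\pm 1\}^m\to\{\pm 1\}$ (odd by folding), satisfies a $(1-\eta)$-fraction of constraints with $\eta = O(\varepsilon^t)$. Rewriting the expected satisfaction using the Fourier expansion $f_v(z)=\sum_{S} \hat f_v(S)\chi_S(z)$ and the fact that a random noise-perturbation contributes a factor $(1-2\varepsilon)^{|S|}$, one obtains for an average $u\in U$ an expression of the form
\[
\expect_{v,v'\sim u}\Bigl[\sum_{S,S'} (1-2\varepsilon)^{|S|+|S'|}\,\hat f_v(S)\hat f_{v'}(S')\,[\text{collision of } \pi_{u,v}(S)\text{ and }\pi_{u,v'}(S')]\Bigr]\geq 1-O(\varepsilon^t).
\]
Specializing to the diagonal $v=v'$ (or using Cauchy--Schwarz to pass to it) yields that for most $u$, most of the Fourier mass of $f_v\circ\pi_{u,v}^{-1}$ is concentrated on sets of size $O(1/\varepsilon)$, and moreover that $\sum_{|S|>k}\hat f_v(S)^2 < c_t k^{-t}$ for a suitable $k=k(\varepsilon)$. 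Then Bourgain's Junta Theorem applies: each such $f_v$ has a constant fraction of its Fourier mass supported on sets $S$ with $|\hat f_v(S)|\geq \tfrac{1}{10}4^{-k^2}$, and hence on small sets of bounded size. One then defines a randomized ULC labeling for $v$ by sampling $S$ with probability $\hat f_v(S)^2$ and picking a uniformly random element of $S$ (and similarly for $u$ by pulling back through some random neighbor). Standard bookkeeping shows this labeling satisfies an $\Omega(1)$-fraction of the ULC edges, contradicting $\omega_c(\phi)<\delta$ once $\delta$ is chosen small enough depending only on $\varepsilon,t$.

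The \textbf{main obstacle}, as in Khot's original argument, is the invocation of Bourgain's theorem: one must carefully pass from the averaged, projection-twisted noise-stability inequality to a per-vertex statement about the tail Fourier mass of $f_v$, handling the two independent neighbors $v,v'$ of $u$ simultaneously. This is where picking the noise rate as $\varepsilon$ and the completeness gap as $O(\varepsilon)$ gets inherited by the Bourgain tail bound to yield a soundness of the form $1-O(\varepsilon^t)$ for any $t<1$; the restriction $t>\tfrac12$ comes directly from the hypothesis of Theorem \ref{thm:Bourgain}. Every other ingredient — completeness, the label extraction, and the choice of $\zeta,\delta$ in the UGC invocation — is routine once this analytic step is in place.
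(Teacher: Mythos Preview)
Your proposal is viable but takes a genuinely different route from the reduction the paper records (which, following Khot's original argument, is what the paper extends to the quantum setting in Theorem~\ref{thm:lin2reduction}). You use a KKMO-style gadget: only the right vertices $V$ carry Long-Codes, and a constraint compares $f_v$ and $f_{v'}$ for two \emph{independent} neighbors $v,v'$ of a sampled $u$. The paper instead places Long-Codes on \emph{both} sides $U\cup V$ and uses three types of constraints ($\oE_1,\oE_2,\oE_3$): a noise test inside each $u$, one inside each $v$, and a direct $u$-$v$ cross test. The advantage of the paper's version is that the within-$u$ constraints $\oE_1$ immediately give $\sum_S (1-2\varepsilon)^{|S|}\hat f_u(S)^2\ge 1-O(\varepsilon^t)$ for a genuinely Boolean $f_u$, so Bourgain applies to $f_u$ with no preprocessing; the $\oE_3$ cross term $\sum_S \hat f_u(S)\hat f_v(\pi_{u,v}(S))(1-2\varepsilon)^{|S|}$ then gives the decoding in one line (see \eqref{eq:first-implication}--\eqref{eq:final-equation}). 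In your approach the natural object is the averaged $g_u=\expect_{v|u}[f_v\circ\pi_{v,u}]$, which is $[-1,1]$-valued and hence outside Bourgain's hypothesis; you correctly observe that Jensen/Cauchy--Schwarz lets you pass to a per-$v$ tail bound for the Boolean $f_v$, and Bourgain then applies there. Where your outline is thin is the final step: labeling $u$ ``by pulling back through a random neighbor'' is not the standard bookkeeping used with Bourgain (it is more reminiscent of the MIS-based MaxCut decoding), and you must still argue that \emph{pairs} $f_v,f_w$ share heavy Fourier mass on $\pi$-consistent sets. This does go through---from the cross-bound one restricts to the at most $100\cdot 4^{2k^2}$ large coefficients of $f_v$ via Bourgain, then Cauchy--Schwarz forces $\sum_T \hat f_v(\pi_{u,v}(T))^2\hat f_w(\pi_{u,w}(T))^2=\Omega(\tau^2)$---but it is the nonroutine step, not ``standard bookkeeping,'' and it is exactly what the paper's three-type reduction is engineered to avoid.
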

Our proof is an extension of Khot's proof to the operator setting (according to Khot \cite{khot_original}, the overall template of the  proof is due to an unpublished work by H\r{a}stad).

Theorem \ref{thm:lin2hardness} follows directly as a corollary of the next theorem:
\begin{theorem}\label{thm:lin2reduction}
    For every $\frac{1}{2}<t<1$ and sufficiently small $\varepsilon>0$, there exists $\zeta,\delta>0$ such that
    $$\ULC_q(1-\zeta,\delta) \longrightarrow \Lin{2}_q(1-O(\varepsilon), 1-O(\varepsilon^t)).$$
\end{theorem}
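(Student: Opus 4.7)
The plan is to lift Khot's classical reduction from $\ULC$ to $\Lin{2}$ to the operator setting using the quantum Long Code of O'Donnell--Yuen. Given a $\ULC(m)$ instance $\phi = ((U,V), E, \pi, p)$, I construct a $\Lin{2}$ instance $\psi$ whose variables are pairs $(v, x)$ with $v \in V$ and $x \in \{\pm 1\}^m$. A random constraint of $\psi$ is produced by sampling $u \in U$, two neighbors $v, v' \in V$ of $u$, and two strings $x, x' \in \{\pm 1\}^m$ that are uniform but weakly correlated under a noise parameter of order $\varepsilon$; the constraint imposes a linear relation on the Long Code variables $(v, x \circ \pi_{u,v})$ and $(v', x' \circ \pi_{u,v'})$, with parity chosen by the standard folding. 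For completeness, I start from a quantum $\ULC$ assignment $(\mathcal{H}, \Pi)$ of value at least $1 - \zeta$ and set $\alpha_{v,x} \coloneqq \sum_{a \in [m]} x_a \, \Pi_v^a \in \Obs(\mathcal{H})$. The 2-Lin commutation requirements are inherited because both $\Pi_v$ and $\Pi_{v'}$ simultaneously commute with $\Pi_u$ in a ULC quantum assignment; a direct Fourier computation then shows the value is $1 - O(\varepsilon)$.

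For soundness, I argue contrapositively: from a quantum assignment $\alpha$ to $\psi$ of value at least $1 - O(\varepsilon^t)$, I aim to extract a quantum labeling of $\phi$ with value at least $\delta = \delta(\varepsilon)$. After applying the folding lemma of Section~\ref{sec:folding-lemma} to force each $\alpha_v : \{\pm 1\}^m \to \Obs(\mathcal{H})$ to be odd (so $\hat{\alpha}_v(\emptyset) = 0$), the 2-Lin value rewrites as a trace-weighted sum of squared Fourier coefficients $\hat{\alpha}_v(S)^2$ twisted by the projections $\pi_{u,v}$. By Lemma~\ref{lem:povm_from_obs}, the operators $P_v^a \coloneqq \sum_{S:\, a \in S} \frac{1}{\abs{S}} \hat{\alpha}_v(S)^2$ form a self-commuting POVM at each right vertex $v$. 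I then invoke the projectivization lemma of Section~\ref{sec:projectivization} to promote these into PVMs while preserving the commutation relations forced by the ULC edges, and define the PVM at each $u \in U$ by pulling back a randomly chosen neighbor's PVM through $\pi_{v,u} = \pi_{u,v}^{-1}$.

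The crux is showing that this labeling achieves quantum value at least $\delta$ in $\phi$. The classical argument at this point would invoke a Fourier-concentration theorem, but here the different $\hat{\alpha}_v(S)$ need not commute, so a direct appeal to Majority-Is-Stablest is not available. Following the $\Lin{2}$ template, I would instead use Bourgain's Junta Theorem (Theorem~\ref{thm:Bourgain}) in a trace-weighted form, which extracts for each $v$ a small ``junta'' of influential labels on which the POVM weight concentrates. As emphasized in Section~\ref{sec:proof-ideas}, it is essential to organize the argument edge-by-edge over $\phi$ rather than vertex-by-vertex, since different edges incident on the same right vertex induce different Fourier measures in the operator setting; fixing a single ULC edge $(u,v)$ lets one exploit the commutation between $\Pi_u$ and $\Pi_v$ to bypass these conflicts. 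The main obstacle is precisely this noncommutativity between Fourier coefficients $\hat{\alpha}_v(S)$ for distinct $S$: combining the edge-by-edge bookkeeping, Bourgain's theorem, and the projectivization step (needed to convert the commuting POVMs into commuting PVMs without breaking their simultaneous measurability) is the technical heart of the proof.
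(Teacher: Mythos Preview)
Your reduction is the KKMO-style ``two neighbors of a left vertex'' reduction (variables indexed only by $V$, constraints between $(v,x\circ\pi_{u,v})$ and $(v',x'\circ\pi_{u,v'})$), whereas the paper's $\Lin{2}$ reduction is the Khot--H\r{a}stad single-edge reduction with variables indexed by $U\cup V$ and three edge types: a noise test inside each $u$, a noise test inside each $v$, and a cross test along each ULC edge $(u,v)$.

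This structural difference is not cosmetic; it is forced by the commutation requirements, and your completeness argument breaks on it. You assert that ``the $2$-Lin commutation requirements are inherited because both $\Pi_v$ and $\Pi_{v'}$ simultaneously commute with $\Pi_u$ in a ULC quantum assignment.'' But commutativity is not transitive: in a quantum ULC assignment (Definition~\ref{def:quantumval-specLC}) only PVMs at \emph{adjacent} vertices are required to commute, and $v,v'$ are two right-vertices sharing a left neighbor, not an edge. So $[\Pi_v^a,\Pi_{v'}^b]$ can be nonzero, and then $[\alpha_{v,x},\alpha_{v',y}]\neq 0$, so your Long-Code encoding is not a quantum $\Lin{2}$ assignment at all. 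The extra condition you are implicitly using is exactly the weak-quantum condition of Definition~\ref{def:weak-quantum-value}; this is why the paper proves the MaxCut reduction (which does use the two-neighbor structure) from $\ULC_{\wq}$, not $\ULC_q$. In the paper's $\Lin{2}$ reduction the needed commutators are only $[\alpha_u(x),\alpha_u(y)]$ (trivial) and $[\alpha_u(x),\alpha_v(y)]$ for $(u,v)\in E$ (immediate from quantum ULC), so completeness goes through from $\ULC_q$.

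The soundness side also diverges. The paper does not use a ``trace-weighted form'' of Bourgain; instead it fixes a good ULC edge $(\bu,\bv)$, uses the commutation along that edge to simultaneously diagonalize $\alpha_\bu$ and $\alpha_\bv$, and applies the \emph{classical} Bourgain theorem to the scalar diagonal entries $\alpha_{\bu,\bj},\alpha_{\bv,\bj}$. The resulting POVMs $P_w^a=\sum_{S\ni a}\frac{1}{|S|}\hat\alpha_w(S)^2$ are built at every $w\in U\cup V$ directly from that vertex's own observables, not by pulling back a random neighbor's measurement as you propose.
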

To prove Theorem \ref{thm:lin2reduction}, we will first describe the reduction and then establish its completeness and soundness in the following sections. 

\,

\noindent\textbf{Reduction.}
    Fix $\frac{1}{2}<t<1$ and $\varepsilon>0$. Let $\zeta,\delta>0$ be constants to be determined later. 
    
    Given an instance $\phi=((U,V),E, \pi, p)$ in $\ULC$, we now describe how to construct an instance $\psi=(\oV,\oE,\orr,\opp)$ in $\Lin{2}$. Let $m$ be the alphabet size of $\phi$.

    \,
    
    \textbf{Distribution.} We first define a probability distribution. Let $\Pr:U\times V\times \{\pm1\}^m\times \{\pm1\}^m \to [0,1]$ be given by \[\Pr(u,v,x,\mu) = \frac{p_{u,v}}{2^m}\varepsilon^{|\mu|}(1-\varepsilon)^{m-|\mu|}\] where $|\mu|$, the Hamming weight, denotes the number of $-1$'s in $\mu$. This clearly defines a valid distribution. Indeed $\Pr(u,v,x,\mu)$ is the probability of independently sampling
    \begin{itemize}
        \item $(u,v)$ from the distribution $p$, 
        \item $x$ from the uniform distribution on $\{\pm1\}^m$, and 
        \item $\mu$ from the noise distribution $\varepsilon^{|\mu|}(1-\varepsilon)^{m-|\mu|}$.
    \end{itemize}  
    We also use $\Pr$ to denote the marginal distributions when no ambiguity arises. For example, for every $u \in U,x,\mu\in \{\pm1\}^m$, we write $\Pr(u,x,\mu)$ to mean $$\sum_{v\in V} \Pr(u,v,x,\mu).$$ Similarly, we have $$\Pr(u) = \sum_{v,x,\mu} \Pr(u,v,x,\mu) = \sum_{v: (u,v)\in E} p_{u,v}$$ for every $u \in U$. Lastly, whenever we write $\mathbb{E}_{u, v, x, \mu}$ (or with any other subset of the indices $u, v, x, \mu$), the expectation is taken with respect to the distribution $\Pr$.

    \,
    
    \textbf{Formula for the Value of $\boldsymbol{\phi.}$} For example, given a quantum assignment $\Pi:U\cup V \to \Obs(\mathcal{H})$ to $\phi$, its value is given by the following expectation
    \begin{equation}\label{eq:2lin-reduction-value-of-ulc}
        \omega(\phi,\Pi) = \sum_{(u,v) \in E} p_{u,v} \sum_{a\in [m]} \tr\Bigparen{\Pi_u^a \Pi_v^{\pi_{u,v}(a)}} = \expect_{u,v}\Brac{\sum_{a\in [m]} \tr\Bigparen{\Pi_u^a \Pi_v^{\pi_{u,v}(a)}}}.
    \end{equation}

    \,

    \textbf{Construction of $\boldsymbol{\psi}$.} We now construct $\psi = (\oV,\oE,\orr,\opp)$. The vertex set is defined as $\oV = (U \cup V) \times \{\pm1\}^m$. We define three types of constraints:
    \begin{itemize}
        \item $\oE_1$ consists of $((u,x),(u,x\mu))$ for all $u \in U$ and $x,\mu \in \{\pm1\}^m$,
        \item $\oE_2$ consists of $((v,x),(v,x\mu))$ for all $v \in U$ and $x,\mu \in \{\pm1\}^m$, and
        \item $\oE_3$ consists of $((u,x),(v,(x\circ \pi_{v,u})\mu))$ for all $(u,v) \in E$ and $x,\mu \in \{\pm1\}^m$,
    \end{itemize}
    and set $\oE = \oE_1\cup \oE_2 \cup \oE_3.$ All parity bits $\orr_e$ are set to $1$. For every $u,v,x,\mu$, the edge weights are defined as follows:
    \begin{align*}
        \opp_{(u,x),(u,x\mu)} = \frac{1}{4}\Pr(u,x,\mu), \quad
        \opp_{(v,x),(v,x\mu)} = \frac{1}{4}\Pr(v,x,\mu), \quad
        \opp_{(u,x),(v,(x\circ \pi_{v,u})\mu)} = \frac{1}{2}\Pr(u,v,x,\mu).
    \end{align*}
    It should be clear that this defines a probability distribution on $\oE$, and this completes the construction of $\psi$. 
    
    There is a minor issue with this construction, which will be addressed at the end of this section using a technique known as \emph{folding}. For now, as we did with the instance $\phi$, we will derive a formula for the quantum value of $\psi$.

    \,
    
    \textbf{Formula for the Value of $\boldsymbol{\psi.}$} Let $\alpha:\oV\to\Obs(\mathcal{H})$ be a quantum assignment to $\psi$. We use the shorthand notation $\alpha_u(x) \coloneqq \alpha((u,x))$ for all $(u,x)\in \oV$. Using the general formula from Definition \ref{def:quantum-observable-assignment} and the fact that $\orr_e = 1$ we obtain
    \[\omega(\psi,\alpha) = \frac{1}{2} + \frac{1}{2}\sum_{e=(w_1,w_2) \in \oE}\opp_e \tr(\alpha(w_1)\alpha(w_2)).\]
    Breaking down the sum according to the edge types and expanding the edge weights, we obtain:
    \begin{align*}
        \sum_{e=(w_1,w_2) \in \oE}\opp_e \tr(\alpha(w_1)\alpha(w_2)) &= \frac{1}{4}\sum_{u,x,\mu} \Pr(u,x,\mu) \tr\Bigparen{\alpha_u(x)\alpha_u(x\mu)}\\
                                                                  &+ \frac{1}{4}\sum_{v,x,\mu} \Pr(v,x,\mu) \tr\Bigparen{\alpha_v(x)\alpha_v(x\mu)}\\
                                                                  &+ \frac{1}{2}\sum_{u,v,x,\mu}\Pr(u,v,x,\mu) \tr\Bigparen{\alpha_u(x)\alpha_v((x\circ\pi_{v,u})\mu)}
    \end{align*}
    which can be rewritten as:
    \begin{align*}
        \expect_{u,v,x,\mu}\Brac{\tr\Bigparen{ \frac{1}{4}\alpha_u(x)\alpha_u(x\mu)
                                            +\frac{1}{4}\alpha_v(x)\alpha_v(x\mu)
                                            + \frac{1}{2}\alpha_u(x)\alpha_v((x\circ\pi_{v,u})\mu)}}.
    \end{align*}
    Therefore, the value of the assignment is
    \begin{equation}\label{eq:valueLin_reduction}
        \omega(\psi,\alpha) = \frac{1}{2} + \frac{1}{8}\expect_{u,v,x,\mu}\Brac{\tr\Bigparen{\alpha_u(x)\alpha_u(x\mu)
                                                                                        +\alpha_v(x)\alpha_v(x\mu)
                                                                                    + 2\alpha_u(x)\alpha_v((x\circ\pi_{v,u})\mu)}}.
    \end{equation}

\subsubsection{Folding}\label{sec:folding} 
As it stands, there is an issue with our construction of $\psi$. Since all the parity bits in $\psi$ are set to $1$, the instance becomes trivial: even the constant classical assignment achieves the maximum classical value of $1$. However, this is not a problem because we are actually interested in the \emph{folded value} of $\psi$ (as opposed to its ordinary value):

    \begin{definition}[Folded assignments and values] A (classical, quantum, or noncommutative) assignment $\alpha$ to the instance $\psi$ (constructed above) is folded if $\alpha_u(-x) = -\alpha_u(x)$ for all $u \in U\cup V$ and $x \in \{\pm1\}^m$. 
    
    The folded (classical, quantum, or noncommutative) value of the instance is the maximum value over all folded (classical, quantum, or noncommutative) assignments. 
    \end{definition}
    For example, the constant assignment is not folded. The \emph{folding trick}, which we discuss next, shows how to construct an instance $\psi'$ from $\psi$ such that the folded value of $\psi$ is the same as the ordinary value of $\psi'$. This is a standard trick from the PCP literature \cite{hastad1}, and the following lemma serves as the basis for its use in this paper. The proof is provided in Section \ref{sec:folding-lemma}. 
    \begin{lemma}[Folding Trick]\label{lem:folding}
    Given an instance $\psi$ as constructed above, there exists another efficiently constructed instance $\psi' \in \Lin{2}$ such that assignments to $\psi'$ one-to-one correspond with folded assignments to $\psi$. Furthermore, this correspondence preserves the value. Consequently, the folded value of $\psi$ is the same as the ordinary value of $\psi'$.
    \end{lemma}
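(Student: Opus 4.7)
The plan is to apply the standard folding trick~\cite{hastad1}: identify each antipodal pair $\{x,-x\} \subset \{\pm1\}^m$ with one canonical representative and push the residual sign information into the parity bits of the folded constraints. Concretely, fix $\mathrm{can}(x)$ to be whichever of $x$ and $-x$ has first coordinate $+1$, and let $\mathrm{sgn}(x) \in \{\pm 1\}$ be the scalar satisfying $x = \mathrm{sgn}(x)\cdot \mathrm{can}(x)$. Let $S = \{x \in \{\pm 1\}^m : x_1 = +1\}$ be the set of canonical representatives.

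First, I would construct $\psi' = (\oV',\oE',\orr',\opp')$ by setting $\oV' = (U\cup V)\times S$ and, for each edge $e = ((w_1,x),(w_2,y)) \in \oE$, inserting exactly one edge
\[e' \coloneqq \bigparen{(w_1,\mathrm{can}(x)),\,(w_2,\mathrm{can}(y))}\]
into $\oE'$ with parity $\orr'_{e'} = \mathrm{sgn}(x)\mathrm{sgn}(y)\cdot\orr_e$ and weight $\opp'_{e'} = \opp_e$. This construction runs in linear time in the size of $\psi$.

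Next, I would set up the bijection. Given a folded assignment $\alpha:\oV \to \Obs(\mathcal{H})$, restrict it by $\beta_w(\tilde x) \coloneqq \alpha_w(\tilde x)$ for $\tilde x \in S$, yielding $\beta:\oV'\to \Obs(\mathcal{H})$. Conversely, any assignment $\beta$ on $\oV'$ extends to $\alpha_w(x) \coloneqq \mathrm{sgn}(x)\,\beta_w(\mathrm{can}(x))$, which is automatically folded since $\mathrm{can}(-x) = \mathrm{can}(x)$ while $\mathrm{sgn}(-x) = -\mathrm{sgn}(x)$; the two maps are mutually inverse. Because $\alpha$ and $\beta$ differ only by $\pm 1$ scalar factors at each vertex, $\alpha$ takes values in $\Obs(\mathcal{H})$ iff $\beta$ does, and the commutation $[\beta_{w_1}(\mathrm{can}(x)),\beta_{w_2}(\mathrm{can}(y))] = 0$ is equivalent to $[\alpha_{w_1}(x),\alpha_{w_2}(y)] = 0$, so the quantum-assignment conditions correspond edge-by-edge.

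Finally, for value preservation I would compute, on each edge,
\[\orr_e \tr\bigparen{\alpha_{w_1}(x)\alpha_{w_2}(y)} = \orr_e\,\mathrm{sgn}(x)\mathrm{sgn}(y)\,\tr\bigparen{\beta_{w_1}(\mathrm{can}(x))\beta_{w_2}(\mathrm{can}(y))} = \orr'_{e'}\tr\bigparen{\beta_{w_1}(\mathrm{can}(x))\beta_{w_2}(\mathrm{can}(y))},\]
so weighting by $\opp_e = \opp'_{e'}$ and summing over $\oE$ yields $\omega(\psi,\alpha) = \omega(\psi',\beta)$ via Definition \ref{def:quantum-observable-assignment}, and taking suprema establishes equality of the folded value of $\psi$ and the value of $\psi'$. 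The only subtlety I anticipate is the degenerate case $\mathrm{can}(x) = \mathrm{can}(y)$, where $e'$ is a self-loop in $\psi'$: there $\beta_w(\mathrm{can}(x))^2 = I$ trivially, and a direct case split on whether $y = x$ or $y = -x$ (using $\alpha_w(-x) = -\alpha_w(x)$) confirms the same identity, so no separate argument is required. The main work is organizing this bookkeeping cleanly; no deeper obstacle arises because every sign adjustment is scalar.
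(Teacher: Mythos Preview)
Your proposal is correct and follows essentially the same approach as the paper: choose a canonical representative from each antipodal pair, push the sign into the parity bit, and verify the bijection and value preservation edge-by-edge. The paper's $\xi_m$, $\kappa$, $\theta$ are exactly your $S$, $\mathrm{sgn}$, $\mathrm{can}$ (with the canonical set left abstract rather than fixed to $x_1=+1$), and your explicit check that the commutation relations transfer under the scalar sign is a detail the paper leaves implicit.
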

    Therefore, for the remainder of the proof of Theorem \ref{thm:lin2reduction}, we can assume, without loss of generality, that valid assignments to $\psi$ are folded.

\subsection{Completeness}
Let $\Pi:U\cup V \to \PVM_m(\hilb)$ be a quantum assignment to $\phi$ such that $\omega(\phi,\Pi)\geq 1-\zeta$. We construct a quantum assignment $\alpha: \oV \to \Obs(\hilb)$ to $\psi$ such that $\omega(\psi,\alpha) \geq 1 - \varepsilon - \zeta/4$. Choosing $\zeta$ sufficiently small, we can ensure $\omega(\psi,\alpha) \geq 1 - 2\varepsilon$. This will complete the proof of completeness.

For every $u \in U\cup V$ and $x\in \{\pm1\}^m$, define
\begin{align*}
    \alpha_u(x) = \sum_{a\in [m]} x_a \Pi_u^a.
\end{align*}
These operators are Hermitian \[\alpha_u(x)^* = \sum_{a\in [m]} x_a^* (\Pi_u^a)^* = \sum_{a\in [m]} x_a \Pi_u^a = \alpha_u(x)\] and unitary
\[\alpha_u(x)^*\alpha_u(x) = \sum_{a,b\in [m]} x_ax_b \Pi_u^a\Pi_u^b =  \sum_{a\in [m]} \Pi_u^a = I.\]
Also functions $\alpha_u$ are odd $\alpha_u(-x) = -\alpha_u(x)$.

Furthermore, for every $u\in U\cup V$ and $x,y\in \{ \pm 1\}^m$ we have:
    \begin{align*}
    \Brac{\alpha_u(x), \alpha_u(y)} &= \sum_{a,b} x_a y_b \Brac{\Pi_u^a,\Pi_u^b}=0,
    \end{align*}
as operators in a PVM  commute. Finally for every $(u,v)\in E$ and $x,y\in \{\pm 1\}^m$ we have:
    \begin{align*}
        \Brac{\alpha_u(x), \alpha_v(y)} &= \sum_{a,b} x_a y_b \Brac{\Pi_u^a,\Pi_v^b}=0,
    \end{align*}
as $\Pi_u$ and $\Pi_v$ are simultaneously measurable when $(u,v)\in E$. Therefore $\alpha: \oV \to \Obs(\hilb)$ is a folded quantum assignment to $\psi$.

We now calculate $\omega(\psi,\alpha)$ using Equation \eqref{eq:valueLin_reduction}. First note
\begin{align*}
    \expect_{u, x,\mu}\Brac{\tr\Bigparen{\alpha_u(x) \alpha_u(x\mu)}}&=\expect_{u, x,\mu} \Brac{\sum_{a,b} x_a x_b \mu_b\tr\Bigparen{\Pi_u^a \Pi_u^b}}\\
    &= (1-2\varepsilon) \expect_{u} \Brac{\sum_{a}\tr\Bigparen{\Pi_u^a}}=1-2\varepsilon,
\end{align*}
where in the second line we used $\expect_x \Brac{x_a x_b} =\delta_{a,b}$ and $\expect_\mu \Brac{\mu_b} = 1-2\varepsilon$. Similarly note
\begin{align*}
    \expect_{u,v, x,\mu} \Brac{\tr \Bigparen{\alpha_u(x) \alpha_v((x\circ \pi_{v,u}) \mu)}} &= \expect_{u,v, x,\mu}\Brac{\sum_{a,b} x_a x_{\pi_{v,u}(b)} \mu_b \tr\Bigparen{ \Pi_u^a \Pi_v^b}}\\
    &= (1-2\varepsilon) \expect_{u,v}\Brac{ \sum_a \tr\Bigparen{ \Pi_u^a \Pi_v^{\pi_{u,v}(a)}}}\\ &\geq (1-2\varepsilon)(1-\zeta),
\end{align*}
where the inequality follows from \eqref{eq:2lin-reduction-value-of-ulc} and the assumption $\omega(\phi,\Pi)\geq 1-\zeta$. Putting everything together in \eqref{eq:valueLin_reduction}, we obtain
\begin{align*}
    \omega(\psi,\alpha) = \frac{1}{2} + \frac{1}{4}\Bigparen{(1-2\varepsilon) + (1-2\varepsilon)(1-\zeta)} \geq 1 - \varepsilon - \frac{\zeta}{4},
\end{align*}
as desired.

\subsection{Soundness}
Let $\alpha: \oV \to \Obs(\hilb)$ be a folded quantum assignment to $\psi$ such that $\omega(\psi,\alpha) \geq 1 -\frac{1}{32}b_t\varepsilon^t$, where $b_t > 0$ is a constant to be determined later. Let $d$ denote the dimension of $\hilb$. We construct a quantum assignment $\Pi:U\cup V \to \PVM_m(\hilb)$ to $\phi$ such that $\omega(\phi,\Pi) \geq \Omega(\varepsilon 4^{-2\varepsilon^{-2}})$. This will complete the proof of soundness, as we can choose $\delta$ smaller than $\Omega(\varepsilon 4^{-2\varepsilon^{-2}})$.

\,

\textbf{Good Edges.} By the assumption on the value of $\alpha$ and \eqref{eq:valueLin_reduction}, we have
\[\expect_{u,v,x,\mu}\Brac{\tr\Bigparen{\alpha_u(x)\alpha_u(x\mu)
                        +\alpha_v(x)\alpha_v(x\mu)
                        + 2\alpha_u(x)\alpha_v((x\circ\pi_{v,u})\mu)}} \geq 4 - \frac{1}{4}b_t\varepsilon^t.\]
Since $\expect_{u,v,x,\mu} \Brac{\alpha_v(x)\alpha_v(x\mu)} \leq 1$, it follows that
\[\expect_{u,v,x,\mu}\Brac{\tr\Bigparen{\alpha_u(x)\alpha_u(x\mu)
                        + 2\alpha_u(x)\alpha_v((x\circ\pi_{v,u})\mu)}} \geq 3 - \frac{1}{4}b_t\varepsilon^t.\]
Therefore, by a simple averaging argument, with probability at least half over the edges $(u,v) \in E$ (of the Unique-Label-Cover instance $\phi$), we have
\begin{equation}\label{eq:good-edge}\expect_{x,\mu}\Brac{\tr\Bigparen{\alpha_u(x)\alpha_u(x\mu)
                        + 2\alpha_u(x)\alpha_v((x\circ\pi_{v,u})\mu)}} \geq 3 - \frac{1}{2}b_t\varepsilon^t.
\end{equation}
We refer to edges satisfying \eqref{eq:good-edge} as good. 

\,

\textbf{Good Classical Assignments.} Fix a good edge $\be = (\bu,\bv)$. By the commutation relations (of quantum assignment), operators $\alpha_\bu(\{\pm1\}^m)$ and $\alpha_\bv(\{\pm1\}^m)$ are diagonalizable in the same basis. Fix such a basis and write the operators in this basis. For $j \in [d]$, let $\alpha_{\be,\bu,j}(x)$ denote the $j$th diagonal entry of $\alpha_\bu(x)$, with similar notation for $\alpha_{\be,\bv,j}$. For simplicity, we refer to them as $\alpha_{\bu,j}$ and $\alpha_{\bv,j}$ when the choice of $\be=(\bu,\bv)$ is clear from the context. 

Note that the range of $\alpha_{\bu,j}$ and $\alpha_{\bv,j}$ is $\{\pm1\}$. Thus, we are effectively extracting classical assignments from the quantum assignment.

By the definition of the dimension-normalized trace, we have 
\begin{align*}
    \tr\Bigparen{\alpha_\bu(x)\alpha_\bu(x\mu)
                        + 2\alpha_\bu(x)\alpha_{\bv}((x\circ\pi_{\bv,\bu})\mu)}= \expect_{j}\Bigbrac{\alpha_{\bu,j}(x)\alpha_{\bu,j}(x\mu)
                        + 2\alpha_{\bu,j}(x)\alpha_{\bv,j}((x\circ\pi_{\bv,\bu})\mu)}
\end{align*}
where $\expect_j$ is to be taken with respect to the uniform distribution on $[d]$. Thus,
\begin{align*}
    \expect_{j}\Bigbrac{\expect_{x,\mu}\Bigbrac{\alpha_{\bu,j}(x)\alpha_{\bu,j}(x\mu)
                        + 2\alpha_{\bu,j}(x)\alpha_{\bv,j}((x\circ\pi_{\bv,\bu})\mu)}} \geq 3 - \frac{1}{2}b_t\varepsilon^t.
\end{align*}
Again, by a simple averaging argument, for at least half of the indices $j \in [d]$, we have
\begin{align}\label{eq:good-j}
    \expect_{x,\mu}\Bigbrac{\alpha_{\bu,j}(x)\alpha_{\bu,j}(x\mu)
                        + 2\alpha_{\bu,j}(x)\alpha_{\bv,j}((x\circ\pi_{\bv,\bu})\mu)} \geq 3 - b_t\varepsilon^t.
\end{align}
We call such indices good.  

\,

\textbf{Two Inequalities.} Fix a good index $\bj$. Since both $\alpha_{\bu,\bj}(x)\alpha_{\bu,\bj}(x\mu)$ and $\alpha_{\bu,\bj}(x)\alpha_{\bv,\bj}((x\circ\pi_{\bv,\bu})\mu)$ are bounded above by $1$, from \eqref{eq:good-j}, we obtain the following two inequalities:
\begin{align*}
    \expect_{x,\mu}\Bigbrac{\alpha_{\bu,\bj}(x)\alpha_{\bu,\bj}(x\mu)} &\geq 1 - b_t\varepsilon^t,\\
    \expect_{x,\mu}\Bigbrac{\alpha_{\bu,\bj}(x)\alpha_{\bv,\bj}((x\circ\pi_{\bv,\bu})\mu)} &\geq 1 - \frac{1}{2}b_t\varepsilon^t.
\end{align*}
For ease of notation, let $\sigma = \pi_{\bu,\bv}$, $\beta = \alpha_{\bu,\bj}$, and $\gamma = \alpha_{\bv,\bj}$. Rewriting the inequalities with these shorthand notations gives:
\begin{align*}
    \expect_{x,\mu}\bigbrac{\beta(x)\beta(x\mu)} &\geq 1 - b_t\varepsilon^t,\\
    \expect_{x,\mu}\bigbrac{\beta(x)\gamma((x\circ\sigma^{-1})\mu)} &\geq 1 - \frac{1}{2}b_t\varepsilon^t.
\end{align*}
The next part of the proof derives the consequences of these two inequalities using Fourier analysis, following the template established by Khot and H\r{a}stad. 

First, let us express these inequalities in terms of the Fourier expansions:
\begin{align}
    \sum_{S}\hat{\beta}(S)^2(1-2\varepsilon)^{|S|} &\geq 1 - b_t\varepsilon^t,\label{eq:first-implication}\\
    \sum_{S}\hat{\beta}(S)\hat{\gamma}(\sigma(S))(1-2\varepsilon)^{|S|} &\geq 1 - \frac{1}{2}b_t\varepsilon^t.\label{eq:second-implication}
\end{align}
These Fourier calculations are standard.

\,

\textbf{Consequences of the First Inequality.} We now derive the consequences of \eqref{eq:first-implication}. Define two sets $\S_1,\S_2$ of subsets of $\{\pm1\}^m$:
\begin{align*}
    \S_1 = \Big\{S \subseteq \{\pm1\}^m : |S| < \varepsilon^{-1} \Big\}, \quad \S_2 = \Big\{S \subseteq \{\pm1\}^m : |\hat{\beta}(S)| > \frac{1}{10}4^{-\varepsilon^{-2}}\Big\}.
\end{align*}
Using \eqref{eq:first-implication}, we can write
\begin{align*}
    1 - b_t\varepsilon^t &\leq \sum_{S}\hat{\beta}(S)^2(1-2\varepsilon)^{|S|}\\
    &= \sum_{S \in \S_1} \hat{\beta}(S)^2(1-2\varepsilon)^{|S|} + \sum_{S \notin \S_1}  \hat{\beta}(S)^2(1-2\varepsilon)^{|S|}\\
    &\leq \sum_{S \in \S_1} \hat{\beta}(S)^2 + e^{-2}\sum_{S \notin \S_1}  \hat{\beta}(S)^2,
\end{align*}
where in the last step we used the inequality $(1-2\varepsilon)^{|S|} \leq e^{-2\varepsilon |S|}$. Applying Parseval's identity, we obtain
\begin{align}\label{eq:bourgain-premise}
    \sum_{S \notin \S_1}  \hat{\beta}(S)^2 < \frac{b_t}{1-e^{-2}}\varepsilon^t.
\end{align}
We now choose $b_t = c_t(1-e^{-2})$, where $c_t$ is the constant in Bourgain's theorem. We apply Bourgain's theorem (Theorem \ref{thm:Bourgain}) to \eqref{eq:bourgain-premise}, with the parameter $k$ in the statement of the theorem set to $\varepsilon^{-1}$. The conclusion is that 
\begin{align}\label{eq:bourgain-conclusion}
    \sum_{S \notin \S_2} \hat{\beta}(S)^2 < \frac{1}{100}.
\end{align}

\textbf{Consequence of the Second Inequality.} We now focus on \eqref{eq:second-implication} and aim to show that the contribution of $S \in \S_1 \cap \S_2$ to the left hand side is significant. Specifically, we want to prove
\begin{align}
\Big|\sum_{S \in \S_1\cap \S_2}\hat{\beta}(S)\hat{\gamma}(\sigma(S))(1-2\varepsilon)^{|S|}\Big| > \frac{1}{4}.\label{eq:counclusion-of-everything}
\end{align}

First, note that by Cauchy-Schwarz, Parseval's identity, and \eqref{eq:bourgain-premise}, we have
\begin{align}
    \Bigabs{\sum_{S\notin \S_1} \hat{\beta}(S) \hat{\gamma}(\sigma(S)) (1-2\varepsilon)^{\abs{S}} }^2&\leq \sum_{S\notin \S_1} \hat{\beta}(S)^2\sum_{S\notin \S_1} \hat{\gamma}(\sigma(S))^2 (1-2\varepsilon)^{2\abs{S}}\nonumber\\
    &\leq \sum_{S\notin \S_1} \hat{\beta}(S)^2 < c_t \varepsilon^t,\label{eq:conclusion-of-everything-part1}
\end{align}
and similarly, using \eqref{eq:bourgain-conclusion}, we find
\begin{align}
    \Bigabs{\sum_{S\notin \S_2} \hat{\beta}(S) \hat{\gamma}(\sigma(S)) (1-2\varepsilon)^{\abs{S}} }^2&\leq \frac{1}{100}.\label{eq:conclusion-of-everything-part2}
\end{align}
Thus, by combining \eqref{eq:second-implication},\eqref{eq:conclusion-of-everything-part1}, and \eqref{eq:conclusion-of-everything-part2}, we conclude that \eqref{eq:counclusion-of-everything} holds for sufficiently small $\varepsilon$. 

For our application, we need a slightly refined version of \eqref{eq:counclusion-of-everything}. Since $\beta$ and $\gamma$ are odd functions, we can rewrite \eqref{eq:counclusion-of-everything} as
\begin{align*}
\Big|\sum_{\emptyset \neq S \in \S_1\cap \S_2}\hat{\beta}(S)\hat{\gamma}(\sigma(S))(1-2\varepsilon)^{|S|}\Big| > \frac{1}{4}.
\end{align*}
Squaring both sides and then applying Cauchy-Schwarz yields
\begin{align*}
\frac{1}{16} &< \Big|\sum_{\emptyset \neq S \in \S_1\cap \S_2}\hat{\beta}(S)\hat{\gamma}(\sigma(S))(1-2\varepsilon)^{|S|}\Big|^2\\
& \leq \sum_{\emptyset \neq S \in \S_1\cap \S_2}\hat{\beta}(S)^2 \sum_{\emptyset \neq S \in \S_1\cap \S_2}\hat{\gamma}(\sigma(S))^2(1-2\varepsilon)^{2|S|}\\
& \leq \sum_{\emptyset \neq S \in \S_1\cap \S_2}\hat{\gamma}(\sigma(S))^2(1-2\varepsilon)^{2|S|}
\end{align*}
where in the last step we used Parseval's identity. Since $\frac{1}{4\varepsilon|S|} \geq e^{-4\varepsilon|S|} \geq (1-2\varepsilon)^{2|S|}$, we can rewrite the above as 
\begin{align}\label{eq:intermediate-step}
    \sum_{\emptyset \neq S \in \S_1\cap \S_2}\frac{1}{|S|}\hat{\gamma}(\sigma(S))^2 > \frac{\varepsilon}{4}.
\end{align}
By definition, for all $S \in \S_2$, we have $|\hat{\beta}(S)|^2 > \frac{1}{100}4^{-2\varepsilon^{-2}}$. Combining this with \eqref{eq:intermediate-step}, we get that \[\sum_{\emptyset \neq S \in \S_1\cap \S_2}\frac{1}{|S|}\hat{\beta}(S)^2\hat{\gamma}(\sigma(S))^2\]
can never be smaller than $\frac{\varepsilon}{400}4^{-2\varepsilon^{-2}}$. Thus, we have shown
\begin{align}
    \sum_{\emptyset \neq S \in \S_1\cap \S_2}\frac{1}{|S|}\hat{\beta}(S)^2\hat{\gamma}(\sigma(S))^2 > \frac{\varepsilon}{400}4^{-2\varepsilon^{-2}}.\label{eq:final-equation}
\end{align}

\,

\textbf{Quantum Assignment to $\boldsymbol{\phi}$.} We now construct the quantum assignment $\Pi:U\cup V \to \PVM_m(\hilb)$ to $\phi$ and analyse its value. For every $u \in U\cup V$ and $a \in [m]$, define the operators 
\[P_u^a = \sum_{S:a\in S} \frac{1}{|S|}\hat{\alpha}_u(S)^2,\]
and let $P_u = \{P_u^a\}_{a\in [m]}.$ We may also use the notation $\sum_{S \ni a}$ to mean $\sum_{S:a\in S}$. 

By Lemma \ref{lem:povm_from_obs}, the $P_u$ are self-commuting POVMs. Furthermore, $P_u$ and $P_v$ are simultaneously measurable when $(u,v)\in E$. So by the Projectivization Lemma in Section \ref{sec:projectivization}, there exists a PVM assignment $\Pi$ attaining the same value as $P$. So we just need to calculate the value of $P$.

\,

\textbf{Value of the Assignment $\boldsymbol{P}$.} Substituting into \eqref{eq:2lin-reduction-value-of-ulc}, the value of this assignment is
\begin{align*}
    \omega(\phi,P) = \expect_{u,v} \Brac{ \sum_a \tr\Bigparen{P_u^a P_v^{\pi_{u,v}(a)}}} &= \expect_{u,v} \Brac{ \sum_a \sum_{S\ni a} \, \sum_{T\ni \pi_{u,v}(a)} \frac{1}{\abs{S}\abs{T}}\tr\Bigparen{\hat{\alpha}_u(S)^2 \hat{\alpha}_v(T)^2}}.
\end{align*}
Our first goal is to eliminate the expectation. At the beginning of this section, we established that with probability at least half, an edge $(u,v)\in E$ is a good edge. Among all good edges, let $(\bu,\bv)$ be the one that minimizes the expression inside the expectation:
\[\sum_a \sum_{S\ni a} \, \sum_{T\ni \pi_{u,v}(a)} \frac{1}{\abs{S}\abs{T}}\tr\Bigparen{\hat{\alpha}_u(S)^2 \hat{\alpha}_v(T)^2}.\]
Thus it follows that
\begin{align*}
    \omega(\phi,P) \geq \frac{1}{2} \sum_a \sum_{S\ni a} \, \sum_{T\ni \pi_{\bu,\bv}(a)} \frac{1}{\abs{S}\abs{T}}\tr\Bigparen{\hat{\alpha}_\bu(S)^2 \hat{\alpha}_\bv(T)^2}.
\end{align*}
Next, we aim to eliminate the trace. With probability at least half, an index $j \in [d]$ is good. Among all good indices, let $\bj$ be the one that minimizes
\[ \sum_a \sum_{S\ni a} \, \sum_{T\ni \pi_{\bu,\bv}(a)} \frac{1}{\abs{S}\abs{T}}\hat{\alpha}_{\bu,j}(S)^2 \hat{\alpha}_{\bv,j}(T)^2.\]
Thus, we can write
\begin{align*}
    \omega(\phi,P) \geq \frac{1}{4} \sum_a \sum_{S\ni a} \, \sum_{T\ni \pi_{\bu,\bv}(a)} \frac{1}{\abs{S}\abs{T}}\hat{\alpha}_{\bu,\bj}(S)^2 \hat{\alpha}_{\bv,\bj}(T)^2.
\end{align*}
Ignoring the terms where $T \neq \pi_{\bu,\bv}(S)$, we have
\begin{align*}
    \omega(\phi,P) &\geq \frac{1}{4} \sum_a \sum_{S\ni a} \frac{1}{\abs{S}^2}\hat{\alpha}_{\bu,\bj}(S)^2 \hat{\alpha}_{\bv,\bj}(\pi_{\bu,\bv}(S))^2\\
                &= \frac{1}{4} \sum_{S\neq \emptyset} \frac{1}{\abs{S}}\hat{\alpha}_{\bu,\bj}(S)^2 \hat{\alpha}_{\bv,\bj}(\pi_{\bu,\bv}(S))^2.
\end{align*}
Finally, using \eqref{eq:final-equation}, for every good edge and index, we know that 
\[\sum_{S\neq \emptyset} \frac{1}{\abs{S}}\hat{\alpha}_{\bu,\bj}(S)^2 \hat{\alpha}_{\bv,\bj}(\pi_{\bu,\bv}(S))^2 \geq \frac{\varepsilon}{400}4^{-2\varepsilon^{-2}}.\]
Therefore we conclude that $\omega(\phi,P) \geq \frac{\varepsilon}{1600}4^{-2\varepsilon^{-2}}$, as desired.

\section{MaxCut}\label{sec:maxcut}
In this section, we follow the conventions established in Section \ref{sec:2lin}.
\subsection{Statements}

In this section, we present a proof of the following theorem:

\begin{theorem}\label{thm:maxcuthardness}
    Assuming wqUGC, for every $-1<\rho<0$ and $\varepsilon>0$, $\cut_q(\frac{1-\rho}{2}- \varepsilon, \frac{\arccos \rho}{\pi}+\varepsilon)$ is $\RE$-hard. 
\end{theorem}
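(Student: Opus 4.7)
The plan is to establish Theorem \ref{thm:maxcuthardness} by exhibiting a gap-preserving reduction $\ULC_{\wq}(1-\zeta,\delta) \longrightarrow \cut_q\paren{\tfrac{1-\rho}{2}-\varepsilon,\tfrac{\arccos \rho}{\pi}+\varepsilon}$, with $\zeta,\delta>0$ chosen depending on $\rho$ and $\varepsilon$; the theorem then follows from wqUGC. The construction is the operator lift of the KKMO long-code test. Given a Unique-Label-Cover instance $\phi=((U,V),E,\pi,p)$ over alphabet $[m]$, the MaxCut instance $\psi=(\oV,\oE,\opp)$ has vertex set $\oV = V \times \{\pm1\}^m$, and edges sampled as follows: pick $u\in U$ with probability $\Pr(u)$, pick two neighbors $v,v'$ of $u$ independently from the conditional distribution, sample $x\in\{\pm1\}^m$ uniformly and $\mu\in\{\pm1\}^m$ from the $\rho$-correlated noise (i.e., $\mu_a=1$ with probability $\tfrac{1+\rho}{2}$), and add the edge $\bigl((v, x\circ\pi_{v,u}),(v',(x\circ\pi_{v',u})\mu)\bigr)$. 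As in Section~\ref{sec:2lin} we apply the folding trick of Lemma~\ref{lem:folding}, so all assignments may be assumed odd.

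For completeness, starting from a weak-quantum assignment $\Pi:U\cup V\to\PVM_m(\hilb)$ to $\phi$ with $\omega_\wq(\phi,\Pi)\geq 1-\zeta$, I would define $\alpha:\oV\to\Obs(\hilb)$ by $\alpha_v(x) = \sum_a x_a\Pi_v^a$. The key point is that for any MaxCut edge, the two endpoints lie at vertices $v,v'\in V$ sharing a common neighbor $u\in U$, so the weak-quantum condition guarantees $[\Pi_v^a,\Pi_{v'}^b]=0$, and hence $[\alpha_v(\cdot),\alpha_{v'}(\cdot)]=0$. This is precisely where wqUGC rather than qUGC is needed. A direct Fourier computation, using $\expect_\mu[\mu_a]=\rho$, gives the edge value $\frac{1}{2}-\frac{\rho}{2}\expect\brac{\tr(\alpha_v(x\circ\pi_{v,u})\alpha_{v'}((x\circ\pi_{v',u})\mu))}$, which after simplification equals $\tfrac{1-\rho}{2}\cdot \bigparen{\tr\paren{\sum_a \Pi_v^a\Pi_{v'}^a}}$ along edges respecting the ULC projections; summing over edges of $\phi$ yields a value at least $\tfrac{1-\rho}{2}(1-O(\zeta)) \geq \tfrac{1-\rho}{2}-\varepsilon$ for small enough $\zeta$.

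For soundness, the template is the same as in Section~\ref{sec:2lin}: suppose $\alpha:\oV\to\Obs(\hilb)$ is a quantum MaxCut assignment with $\omega_q(\psi,\alpha)\geq \tfrac{\arccos\rho}{\pi}+\varepsilon$. I would first average over the randomness to isolate a large fraction of \emph{good} left vertices $u\in U$ for which
$\expect_{v,v',x,\mu}\brac{\tfrac{1-\tr(\alpha_v(x\circ\pi_{v,u})\alpha_{v'}((x\circ\pi_{v',u})\mu))}{2}}\geq \tfrac{\arccos\rho}{\pi}+\tfrac{\varepsilon}{2}$. Fixing such a $u$ and any pair $v,v'$ of its neighbors, the operators $\alpha_v(\cdot),\alpha_{v'}(\cdot)$ commute on the MaxCut edge, so one can simultaneously diagonalize them and write the inequality coordinatewise as $\expect_{v,v',j,x,\mu}\bigbrac{\alpha_{v,j}(x\circ\pi_{v,u})\alpha_{v',j}((x\circ\pi_{v',u})\mu)}\leq \rho + O(\varepsilon)$ for an $\Omega(1)$-fraction of indices $j$. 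Expanding in the Fourier basis, this becomes a statement about the noise stability $S_\rho$ of the Boolean functions $\alpha_{v,j}\circ \pi_{v,u}^{-1}$. Applying the contrapositive of Majority-Is-Stablest (Theorem~\ref{thm:MIS}) to each such function yields a coordinate $a^*\in[m]$ whose low-degree influence is at least $\delta(\varepsilon,\rho)$. Following~\cite{odonnell-yuen}, I define the POVM $P_v^a = \sum_{S\ni a}\tfrac{1}{|S|}\hat{\alpha}_v(S)^2$ at each $v\in V$ (and a matching measurement at each $u\in U$ by averaging over $\pi_{u,v}^{-1}$-pushed versions for neighbors); Lemma~\ref{lem:povm_from_obs} confirms these are self-commuting POVMs, the weak-quantum commutation among neighbors-of-neighbors in $\phi$ is inherited from the MaxCut commutations, and the Projectivization Lemma of Section~\ref{sec:projectivization} upgrades them to the PVMs required by a weak-quantum assignment.

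The main obstacle, as highlighted already in Section~\ref{sec:proof-ideas}, is that a given vertex $v\in V$ participates in many different MaxCut edges, each providing its own commuting basis $\{\alpha_{v,j}\}_j$, and these bases need not be mutually compatible; in the classical proof one simply declares a probability distribution over labels per vertex, but here the operator-valued Fourier data must be organized around individual edges. The resolution, as in Section~\ref{sec:2lin}, is to perform the MIS-based decoding per edge (and per diagonal index $j$) and to show that the \emph{edge-level} contribution $\sum_a \tr(P_u^{\pi_{u,v}(a)} P_v^a)$ already receives an $\Omega_{\varepsilon,\rho}(1)$ contribution from any good $(u,v,v')$-triple; a short averaging argument over the neighbor $v'$ then gives the required lower bound on $\omega_\wq(\phi,P)\geq \delta$. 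Choosing $\delta$ below this threshold closes the soundness step and completes the reduction.
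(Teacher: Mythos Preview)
Your overall plan matches the paper's, but there is a genuine gap in the soundness step. You propose to apply Majority-Is-Stablest to ``the Boolean functions $\alpha_{v,j}\circ\pi_{v,u}^{-1}$'', but the quantity $\expect_{v,v',x,\mu}\bigbrac{\alpha_{v,j}(x\circ\pi_{v,u})\,\alpha_{v',j}((x\circ\pi_{v',u})\mu)}$ is \emph{not} the noise stability of any individual $\alpha_{v,j}$ --- it is a cross-correlation between two independently sampled neighbors $v,v'$ of $u$. The paper's resolution is to observe that for a fixed good $\bu$, \emph{all} operators $\{\alpha_v(y):v\in N_\bu,\,y\in\{\pm1\}^m\}$ pairwise commute (every such pair is joined by a MaxCut edge), so they admit one common diagonalizing basis; in that basis the paper introduces the \emph{averaged} function $\beta_\bu(x)=\expect_{v\mid\bu}\bigbrac{\alpha_v(x\circ\pi_{v,\bu})}$, whose diagonal entries $\beta_{\bu,j}$ are $[-1,1]$-valued. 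Because $v,v'$ are sampled independently given $\bu$, the cross-term collapses exactly to $S_\rho(\beta_{\bu,j})$, and Theorem~\ref{thm:MIS} is applied to $\beta_{\bu,j}$ --- not to any $\alpha_{v,j}$ --- yielding an influential coordinate $\bc$. A Jensen step then transfers that influence to a $\tfrac{\delta_2}{2}$-fraction of the individual $\alpha_{v,j}$'s. Without this ``average first, then MIS'' structure, the contrapositive of MIS simply does not follow from your hypothesis. Note also that the organization is around left vertices $\bu$ (with all of $N_\bu$ diagonalized jointly), not around individual edges as your last paragraph suggests; the per-edge organization of Section~\ref{sec:2lin} is not what is used here.

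Two smaller points. First, folding is not used in the MaxCut reduction: it was needed in Section~\ref{sec:2lin} only because all parity bits there were $+1$, whereas for MaxCut the parities are $-1$ and MIS does not require oddness. Second, your soundness bound should read $\leq 1-\tfrac{2}{\pi}\arccos\rho-\Omega(\varepsilon)$, not $\leq \rho+O(\varepsilon)$. On completeness your sketch is correct in spirit; the step you gloss over --- passing from $\sum_c\tr\bigparen{\Pi_v^{\pi_{u,v}(c)}\Pi_{v'}^{\pi_{u,v'}(c)}}$ back to $\omega(\phi,\Pi)$ --- is handled in the paper by a triangle-type inequality, Lemma~\ref{lem:maxcut_completeness_technical}.
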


This is comparable to Khot et al. \cite{kkmo}:

\begin{theorem}
    Assuming UGC, for every $-1<\rho<0$ and $\varepsilon>0$, $\cut_c(\frac{1-\rho}{2}- \varepsilon, \frac{\arccos \rho}{\pi}+\varepsilon)$ is $\NP$-hard.
\end{theorem}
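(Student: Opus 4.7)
The natural approach is to reduce $\ULC_c(1-\zeta,\delta)$ to $\cut_c(\frac{1-\rho}{2}-\varepsilon,\frac{\arccos\rho}{\pi}+\varepsilon)$ for suitable $\zeta=\zeta(\varepsilon,\rho)$ and $\delta=\delta(\varepsilon,\rho)$, in direct analogy with the construction described for $\Lin{2}$ in Section~\ref{sec:2lin}, but now with a two-query test tuned to noise parameter $\rho$. Given a ULC instance $\phi=((U,V),E,\pi,p)$ with alphabet $[m]$, I would build $\psi\in\cut$ on vertex set $\oV=V\times\{\pm1\}^m$ whose edge distribution samples $u\in U$ (from the marginal of $p$), then two independent neighbors $v,v'$ of $u$, then $x\in\{\pm1\}^m$ uniformly, and finally $\mu\in\{\pm1\}^m$ with each $\mu_a$ independently equal to $-1$ with probability $(1-\rho)/2$; the resulting edge is $\{(v,x\circ\pi_{v,u}),\,(v',(x\circ\pi_{v',u})\mu)\}$. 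I would then impose folding via Lemma~\ref{lem:folding} so that admissible assignments $f:\oV\to\{\pm1\}$ satisfy $f_v(-x)=-f_v(x)$; this kills $\hat f_v(\emptyset)$ and makes the Long Code the natural honest strategy.

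\textbf{Completeness.} Given a classical labelling $\ell:U\cup V\to[m]$ satisfying a $1-\zeta$ fraction of edges of $\phi$, take $f_v(x)=x_{\ell(v)}$. Whenever both $(u,v)$ and $(u,v')$ are satisfied, $\pi_{v,u}(\ell(v))=\ell(u)=\pi_{v',u}(\ell(v'))$, so
\[
f_v(x\circ\pi_{v,u})\,f_{v'}((x\circ\pi_{v',u})\mu)=x_{\ell(u)}\cdot x_{\ell(u)}\mu_{\ell(u)}=\mu_{\ell(u)},
\]
which equals $-1$ (edge cut) with probability $(1-\rho)/2$. A union bound over the two sampled $u$-edges shows that a $(1-2\zeta)$ fraction of $u$'s are simultaneously satisfied, so $\omega_c(\psi)\geq \frac{1-\rho}{2}-O(\zeta)$, which exceeds $\frac{1-\rho}{2}-\varepsilon$ once $\zeta$ is taken small enough.

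\textbf{Soundness.} Contrapositively, assume a folded $f$ attains $\omega_c(\psi)\geq \frac{\arccos\rho}{\pi}+\varepsilon$, equivalently
\[
\expect_{u,v,v',x,\mu}\!\Brac{f_v(x\circ\pi_{v,u})\,f_{v'}((x\circ\pi_{v',u})\mu)}\;\leq\;1-\tfrac{2}{\pi}\arccos\rho-2\varepsilon.
\]
Expanding via the Fourier transform and using the independence of $v,v'$ given $u$, the left-hand side equals $\expect_u[S_\rho(g_u)]$ where $g_u(x):=\expect_{v\mid u}[f_v(x\circ\pi_{v,u})]$ is a $[-1,1]$-valued function on $\{\pm1\}^m$. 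By averaging, for an $\Omega(\varepsilon)$ fraction of $u$ we have $S_\rho(g_u)<1-\tfrac{2}{\pi}\arccos\rho-\varepsilon$. The contrapositive of Majority-Is-Stablest (Theorem~\ref{thm:MIS}) then supplies $\delta_0,k$ depending only on $(\varepsilon,\rho)$ such that for each such $u$ some index $a\in[m]$ satisfies $\Infl_a^{\leq k}(g_u)>\delta_0$. Expanding $\Infl_a^{\leq k}(g_u)=\sum_{S\ni a,|S|\leq k}\bigl(\expect_{v\mid u}\hat f_v(\pi_{v,u}^{-1}(S))\bigr)^2\leq \expect_{v\mid u}\Infl_{\pi_{v,u}^{-1}(a)}^{\leq k}(f_v)$ by Jensen, so for $\Omega(\delta_0)$ of the neighbors $v$ of $u$ there is an index $b\in\pi_{v,u}^{-1}(\text{influential})$ with $\Infl_b^{\leq k}(f_v)=\Omega(\delta_0/k)$. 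The decoding is standard: for every vertex $w\in U\cup V$, output a uniformly random label from its $O(k/\delta_0)$ coordinates of largest $k$-degree influence (using that $\sum_b\Infl_b^{\leq k}\leq k$). This yields $\omega_c(\phi)\geq \Omega(\delta_0^{3}/k^2)$ in expectation, contradicting $\omega_c(\phi)<\delta$ once $\delta$ is chosen below this threshold.

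\textbf{Main obstacle.} The technical crux is the soundness analysis: (i) unpacking the Fourier identity that identifies the cut value with $\expect_u[S_\rho(g_u)]$ for the $\rho$-regime $\rho<0$ (where MIS gives an \emph{upper} bound on noise stability that matches the Goemans--Williamson curve); (ii) invoking MIS correctly so that the $(\delta,k)$ produced depend only on $(\varepsilon,\rho)$ and not on $m$; and (iii) the Jensen step that pushes influence from the averaged function $g_u$ down to the individual $f_v$'s, which is what makes the random-labelling decoder work in expectation. The folding step is a routine but indispensable prerequisite, since otherwise $g_u\equiv\text{const}$ is a trivial cheat; all remaining pieces (union bound, averaging, influence decoding) are standard once the MIS reduction is set up.
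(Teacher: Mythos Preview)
Your proposal is correct and follows essentially the same route as the paper's Section~\ref{sec:maxcut} (which proves the quantum analog; the classical statement is the $d=1$ specialization of that argument): same two-neighbor noisy Long-Code test, completeness via dictators, soundness by forming $g_u=\expect_{v\mid u}[f_v(\cdot\circ\pi_{v,u})]$, identifying the cut value with $\expect_u[S_\rho(g_u)]$, applying MIS, and pushing influence down to the $f_v$'s by Jensen before decoding.

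One correction worth noting: your claim that folding is ``indispensable'' is wrong here, and the paper does \emph{not} fold in the MaxCut reduction (only in the $\Lin{2}$ reduction of Section~\ref{sec:2lin}). For MaxCut the parity bits are all $-1$, so a constant assignment achieves cut value $\leq \tfrac12 < \tfrac{\arccos\rho}{\pi}$ for $\rho<0$ and is not a cheat; moreover the MIS statement used (Theorem~\ref{thm:MIS}) applies to arbitrary $[-1,1]$-valued functions with no mean-zero hypothesis, so the soundness analysis goes through without folding. Including folding is harmless, just unnecessary.
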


Our proof is an extension of Khot et al.'s proof to the operator setting.

Theorem \ref{thm:maxcuthardness} follows directly as a corollary of the next theorem:

\begin{theorem}\label{thm:maxcutreduction}
    For every $-1<\rho<0$ and $\varepsilon>0$, there exists $\zeta,\delta>0$ such that
    \begin{align*}
        \ULC_{\wq}(1-\zeta,\delta) \longrightarrow \cut_q\Paren{\frac{1-\rho}{2}-\varepsilon, \frac{\arccos \rho}{\pi} + \varepsilon}.
    \end{align*}
\end{theorem}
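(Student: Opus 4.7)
The plan is to adapt the classical KKMO reduction from Unique Games to MaxCut to the operator setting, running parallel to the $2$-Lin reduction of Section \ref{sec:2lin}. Given $\phi=((U,V),E,\pi,p)\in\ULC(m)$, construct $\psi=(\oV,\oE,\opp)$ with $\oV=(U\cup V)\times\{\pm 1\}^m$ and, for each $(u,v)\in E$ and $x,\mu\in\{\pm 1\}^m$, an edge $((u,x),(v,(x\circ\pi_{v,u})\mu))$ weighted by $p_{u,v}\cdot 2^{-m}\cdot \bigparen{\tfrac{1+\rho}{2}}^{m-|\mu|}\bigparen{\tfrac{1-\rho}{2}}^{|\mu|}$, so that $\mu$ realizes a $\rho$-correlated noise distribution. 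Apply the folding lemma (Lemma \ref{lem:folding}) so every admissible assignment is odd in $x$; the value of a folded quantum assignment is
\[\omega(\psi,\alpha)=\tfrac{1}{2}-\tfrac{1}{2}\,\expect_{(u,v),x,\mu}\Bigbrac{\tr\bigparen{\alpha_u(x)\,\alpha_v((x\circ\pi_{v,u})\mu)}}.\]

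For completeness, follow Section \ref{sec:2lin} verbatim: given a wq-ULC assignment $\Pi$ of value $\geq 1-\zeta$, define $\alpha_w(x)\coloneqq\sum_a x_a\Pi_w^a$. This is odd and observable-valued, and ULC-edge simultaneous-measurability of $\Pi_u,\Pi_v$ (automatic for wq) makes $\alpha_u(x),\alpha_v(y)$ commute, so $\alpha$ is a folded quantum MaxCut assignment. Fourier expansion with $\expect_x[x_a x_b]=\delta_{a,b}$ and $\expect_\mu[\mu_a]=\rho$ yields
\[\expect_{(u,v),x,\mu}\Bigbrac{\tr\bigparen{\alpha_u(x)\,\alpha_v((x\circ\pi_{v,u})\mu)}}=\rho\cdot\expect_{(u,v)}\Brac{\sum_a\tr\bigparen{\Pi_u^a\Pi_v^{\pi_{u,v}(a)}}}\geq \rho(1-\zeta);\]
since $\rho<0$, we obtain $\omega(\psi,\alpha)\geq (1-\rho)/2-\varepsilon$ for $\zeta$ sufficiently small.

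For soundness, argue the contrapositive. Starting from a folded quantum MaxCut assignment $\alpha$ with $\omega(\psi,\alpha)\geq \arccos(\rho)/\pi+\varepsilon$, it suffices to produce a probabilistic classical labeling of $\phi$ of expected satisfied fraction $\geq\delta$, since a classical labeling is a fortiori weak-quantum and therefore witnesses $\omega_\wq(\phi)\geq\delta$. A Markov averaging step isolates a good ULC edge $(\bu,\bv)$ on which $\expect_{x,\mu}[\tr(\alpha_\bu(x)\alpha_\bv((x\circ\pi_{\bv,\bu})\mu))]\leq 1-2\arccos(\rho)/\pi-\varepsilon'$. Because $\alpha_\bu,\alpha_\bv$ commute, diagonalize them in a common eigenbasis $\{|j\rangle\}_{j\in[d]}$; the trace decomposes as $\expect_j[\alpha_{\bu,j}(x)\alpha_{\bv,j}((x\circ\pi_{\bv,\bu})\mu)]$, and a second averaging isolates a good index $\bj$ at which the scalar $\pm 1$-valued functions $\alpha_{\bu,\bj},\alpha_{\bv,\bj}$ inherit this same low noise stability. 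The contrapositive of Majority-is-Stablest (Theorem \ref{thm:MIS}), possibly in its two-function Cauchy--Schwarz variant, then delivers a label $a$ of large low-degree influence on $\alpha_{\bu,\bj}$ whose image $\pi_{\bu,\bv}(a)$ has large influence on $\alpha_{\bv,\bj}$. For every vertex $w\in U\cup V$, set $L_w=\{a:\tr(\Infl_a^{\leq k}(\alpha_w))>\tau\}$, where $\Infl_a^{\leq k}(\alpha_w)=\sum_{S\ni a,|S|\leq k}\hat\alpha_w(S)^2$ is the operator-valued influence, and label $w$ uniformly at random from $L_w$. Averaging the good-index conclusion over $j$ places $a\in L_\bu$ and $\pi_{\bu,\bv}(a)\in L_\bv$ up to a constant-factor loss, and a direct edge-count parallel to the final Fourier bound in the $2$-Lin soundness yields expected satisfied fraction at least $\delta=\delta(\varepsilon,\rho)>0$.

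The main obstacle, flagged already in Section \ref{sec:proof-ideas}, is that the MaxCut instance enforces $[\alpha_u,\alpha_v]=0$ only across ULC edges and never between right-neighbors $v,v'\in N(u)$ of the same $u$. Consequently the na\"ive operator POVMs $P_v^a=\sum_{S\ni a}\tfrac{1}{|S|}\hat\alpha_v(S)^2$ from Lemma \ref{lem:povm_from_obs} -- which sufficed for the $\omega_q$-target in Section \ref{sec:2lin} -- need not commute on shared neighborhoods, and so do not on their own constitute a valid wq-ULC assignment. The classical KKMO argument, which organizes the Fourier analysis around the full neighborhood of $u$, cannot be transcribed verbatim to the operator setting for this reason. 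The resolution is exactly the ``organize every averaging and MIS invocation around an individual edge'' principle previewed in Section \ref{sec:proof-ideas}: all Fourier analysis is carried out scalar-by-scalar in the joint eigenbasis of a single good edge, and the output is extracted as a probabilistic classical labeling whose weak-quantum status is automatic. Making this edge-local scalar Fourier analysis rigorous, and absorbing the resulting constant-factor loss into $\delta$, constitutes the bulk of the work of Section \ref{sec:maxcut}.
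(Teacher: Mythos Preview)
Your construction is not the paper's, and as written it cannot work: you place MaxCut edges only between $(u,x)$ and $(v,(x\circ\pi_{v,u})\mu)$, which makes $\psi$ a bipartite graph between $U\times\{\pm1\}^m$ and $V\times\{\pm1\}^m$, so $\omega_c(\psi)=1$ and soundness is vacuous. Invoking Lemma~\ref{lem:folding} does not repair this, because the folded instance it produces is a general $\Lin{2}$ instance with parity bits of both signs, not a $\cut$ instance; you would then be reducing to $\Lin{2}_q$, not $\cut_q$.

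The paper instead runs the KKMO construction on $\oV=V\times\{\pm1\}^m$ only: for each $u\in U$ sample two neighbors $v,v'\in N_u$ and place a MaxCut edge between $(v,x\circ\pi_{v,u})$ and $(v',(x\circ\pi_{v',u})\mu)$. This is precisely why weak-quantum is the right hypothesis: completeness needs $[\alpha_v(x),\alpha_{v'}(y)]=0$ for co-neighbors $v,v'$, which the weak-quantum commutation supplies and which your bipartite construction never uses. No folding is needed.

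The same choice is what makes soundness go through without any two-function MIS and without a classical extraction step. A quantum MaxCut assignment on the paper's $\psi$ forces all of $\{\alpha_v:v\in N_u\}$ to pairwise commute, so they are simultaneously diagonal in a basis depending only on $u$. One forms $\beta_u(x)=\expect_{v\mid u}[\alpha_v(x\circ\pi_{v,u})]$; the conditional MaxCut value at $u$ is then the single-function noise stability $S_\rho(\beta_{u,j})$ at each diagonal entry $j$, ordinary MIS gives a label $c$ with $\Infl_c^{\leq k}(\beta_{u,j})>\delta_2$, and Jensen propagates it to a positive fraction of neighbors via $\Infl_c^{\leq k}(\beta_{u,j})\leq\expect_{v\mid u}\bigl[\Infl_{\pi_{u,v}(c)}^{\leq k}(\alpha_{v,j})\bigr]$. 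The output is the operator POVM $P_u^a=\sum_{S\ni a}|S|^{-1}\hat\beta_u(S)^2$, $P_v^b=\sum_{S\ni b}|S|^{-1}\hat\alpha_v(S)^2$, whose weak-quantum commutations are inherited directly from the MaxCut-side commutations among the $\alpha_v$'s, and the Projectivization Lemma upgrades this to a PVM assignment. Your route, by contrast, would need a two-function MIS yielding a \emph{common} influential coordinate (not what the standard statement gives), and even then the per-$(\be,\bj)$ scalar conclusions cannot be aggregated into the global trace-defined sets $L_w$ you propose, since both the influential label and the diagonalizing basis vary with the edge and with $j$.
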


To prove Theorem \ref{thm:maxcutreduction}, we will first describe the reduction and then establish its completeness and soundness in the following sections. 

\,

\noindent\textbf{Reduction.}
Fix $-1<\rho<0$ and $\varepsilon>0$. Let $\zeta,\delta>0$ be constants to be determined later. Given an instance $\phi=((U,V),E, \pi, p)$ in $\ULC$, we now describe how to construct an instance $\psi=(\oV,\oE,\opp)$ in $\cut$. Let $m$ be the alphabet size of $\phi$. 

    \,
    
\textbf{Distribution.} We first define a probability distribution. Let $\Pr: U \times V \times V \times \{\pm 1\}^m \times \{\pm 1\}^m \to [0,1]$ be given by \[\Pr(u,v,v',x,\mu) = \frac{p_{u,v} p_{u,v'} }{2^m p_u}\Paren{\frac{1-\rho}{2}}^{|\mu|}\Paren{\frac{1+\rho}{2}}^{m-|\mu|}\] where $p_u$ denotes the marginal distribution $p_u= \sum_{v: (u,v) \in E} p_{u,v}$. This clearly defines a valid distribution. Indeed $\Pr(u,v,v',x,\mu)$ is the probability of sampling
    \begin{itemize}
        \item $u$ from the marginal distribution $p_u$, 
        \item $v$ and $v'$ from the conditional distributions $\frac{p_{u,v}}{p_u}$ and $\frac{p_{u,v'}}{p_u}$, respectively,
        \item $x$ from the uniform distribution on $\{\pm1\}^m$, and 
        \item $\mu$ from the noise distribution $\Paren{\frac{1-\rho}{2}}^{|\mu|}\Paren{\frac{1+\rho}{2}}^{m-|\mu|}$.
    \end{itemize}
We also use $\Pr$ to denote the marginal distributions when no ambiguity arises. For example, for every $(u,v)\in E$, we write $\Pr(u,v)$ to mean $$\sum_{v',x,\mu} \Pr(u,v,v',x,\mu)$$ which is $p_{u,v}$. Similarly, the conditional probability $\Pr(v \lvert u)$ is $$\frac{1}{p_u}\sum_{v',x,\mu}\Pr(u,v,v',x,\mu) = \frac{p_{u,v}}{p_u}.$$

Lastly, whenever we write $\mathbb{E}_{u, v, v', x, \mu}$ (or with any other subset of the indices $u, v, v', x, \mu$), the expectation is taken with respect to the distribution $\Pr$. For example, given a weak-quantum assignment $\Pi:U\cup V \to \Obs(\mathcal{H})$ to $\phi$, its value is given by the following expectation
    \begin{equation}
        \omega(\phi,\Pi) = \sum_{(u,v) \in E} p_{u,v} \sum_{a\in [m]} \tr\Bigparen{\Pi_u^a \Pi_v^{\pi_{u,v}(a)}} = \expect_{u,v}\Brac{\sum_{a\in [m]} \tr\Bigparen{\Pi_u^a \Pi_v^{\pi_{u,v}(a)}}}.\label{eq:value-of-phi}
    \end{equation}

    \,

\textbf{Construction of $\boldsymbol{\psi}$.} We now construct $\psi = (\oV,\oE,\opp)$. The vertex set is defined as $\oV = V \times \{\pm 1\}^m$.  Let $N_u\subseteq V$ denote the set of neighbors of vertex $u\in U$ in the graph $((U,V),E)$. The constraints $\oE$ consists of $((v,x\circ \pi_{v,u}), (v',(x\circ \pi_{v',u}) \mu))$ for all $u\in U$, $v,v'\in N_u$ and $x,\mu \in \{\pm 1\}^m$. The corresponding edge weight is given by $\Pr(u,v,v',x,\mu).$

    \,
    
\textbf{Formula for the Value of $\boldsymbol{\psi.}$} Let $\alpha: \oV \to \Obs(\hilb)$ be a quantum assignment to $\psi$. Using the general formula from Definition \ref{def:quantum-observable-assignment} and the fact that for $\cut$ the parity bits are all equal to $-1$ we obtain $$\omega(\psi,\alpha) = \frac{1}{2} - \frac{1}{2} \sum_{e=(w_1,w_2) \in \oE}\opp_e \tr\Bigparen{\alpha(w_1)\alpha(w_2)}.$$
Writing down the edges and edge weights explicitly, we obtain
\begin{align*}
    \sum_{e=(w_1,w_2) \in \oE}\opp_e \tr\Bigparen{\alpha(w_1)\alpha(w_2)} = \sum_{u,v,v',x,\mu} \Pr(u,v,v',x,\mu) \tr \Bigparen{\alpha_v(x\circ \pi_{v,u}) \alpha_{v'}((x\circ \pi_{v',u}) \mu)}
\end{align*}
which can be rewritten as
\begin{align*}
    \expect_{u,v,v',x,\mu} \Brac{ \tr \Bigparen{\alpha_v(x\circ \pi_{v,u}) \alpha_{v'}((x\circ \pi_{v',u}) \mu)}}.
\end{align*}
Therefore, the value of the assignment is 
\begin{align}\label{eq:maxcut_reduction_value}
    \omega(\psi,\alpha) = \frac{1}{2} - \frac{1}{2} \expect_{u,v,v',x,\mu} \Brac{ \tr \Bigparen{\alpha_v(x\circ \pi_{v,u}) \alpha_{v'}((x\circ \pi_{v',u}) \mu)}}.
\end{align}

\subsection{Completeness}
We first need a technical lemma. 

\begin{lemma}\label{lem:maxcut_completeness_technical}
    Let $\Pi_1, \Pi_2,\Pi_3$ be $m$-outcome PVMs. Then, we have
    \begin{align*}
        \sum_{a\in [m]} \tr\bigparen{\Pi_1^a \Pi_2^a} \geq 2 \sum_a \tr\Bigparen{\Pi_1^a\Pi_3^a + \Pi_2^a \Pi_3^a}-3.
    \end{align*}
\end{lemma}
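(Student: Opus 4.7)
My plan is to recast the quantity $\sum_a \tr(\Pi_i^a \Pi_j^a)$ as an inner product and then apply the triangle inequality in the resulting Hilbert space. Specifically, I would work in the direct-sum space $\M(\hilb)^m$ equipped with the (dimension-normalized) Hilbert--Schmidt inner product $\langle (A_1,\dots,A_m),(B_1,\dots,B_m)\rangle = \sum_a \tr(A_a^* B_a)$. To each PVM $\Pi_i$ I associate the vector $\mathbf{v}_i = (\Pi_i^1,\dots,\Pi_i^m)$ in this space.

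The first key observation is that each $\mathbf{v}_i$ is a unit vector: since $\Pi_i^a$ is a projection, $(\Pi_i^a)^2 = \Pi_i^a$, so $\|\mathbf{v}_i\|^2 = \sum_a \tr(\Pi_i^a) = \tr(I) = 1$ by the normalization of the trace and the PVM identity $\sum_a \Pi_i^a = I$. The second observation is that $\langle \mathbf{v}_i,\mathbf{v}_j\rangle = \sum_a \tr(\Pi_i^a \Pi_j^a)$, and this inner product is real because each $\Pi_i^a$ is Hermitian. Writing $p_{ij} \coloneqq \sum_a \tr(\Pi_i^a \Pi_j^a)$, this gives
\[ \|\mathbf{v}_i - \mathbf{v}_j\|^2 = 2 - 2p_{ij}. \]

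Now I would apply the ordinary triangle inequality in this Hilbert space, $\|\mathbf{v}_1 - \mathbf{v}_2\| \leq \|\mathbf{v}_1 - \mathbf{v}_3\| + \|\mathbf{v}_3 - \mathbf{v}_2\|$, square both sides, and then use the elementary bound $(a+b)^2 \leq 2(a^2 + b^2)$ to get
\[ \|\mathbf{v}_1 - \mathbf{v}_2\|^2 \;\leq\; 2\|\mathbf{v}_1 - \mathbf{v}_3\|^2 + 2\|\mathbf{v}_3 - \mathbf{v}_2\|^2. \]
Substituting $\|\mathbf{v}_i - \mathbf{v}_j\|^2 = 2 - 2p_{ij}$ yields $2 - 2p_{12} \leq 8 - 4p_{13} - 4p_{23}$, which rearranges to exactly $p_{12} \geq 2p_{13} + 2p_{23} - 3$.

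I do not expect a significant obstacle here: the only things to check are that $\mathbf{v}_i$ is a unit vector (uses both $(\Pi_i^a)^2 = \Pi_i^a$ and $\sum_a \Pi_i^a = I$) and that the inner product is real (uses Hermiticity), both of which are immediate. If the lemma needed to be extended to self-commuting POVMs later in the paper, the unit-norm step would fail and one would have to replace the triangle-inequality argument with a more careful estimate; but for PVMs the argument above is the cleanest route.
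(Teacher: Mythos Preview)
Your proposal is correct and essentially the same as the paper's proof: both compute $\|\mathbf{v}_i-\mathbf{v}_j\|^2 = \sum_a \|\Pi_i^a - \Pi_j^a\|_2^2 = 2-2p_{ij}$, apply the triangle inequality, and then use $(a+b)^2 \le 2a^2+2b^2$. The only cosmetic difference is that the paper applies the triangle inequality coordinate-wise (for each $a$ separately) rather than once in the direct-sum space $\M(\hilb)^m$; after summing over $a$ the two routes coincide.
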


\begin{proof}
    We calculate
    \begin{align*}
        2-2\sum_a \tr\bigparen{\Pi_1^a \Pi_2^a} &= \sum_a \tr\Bigparen{\Pi_1^a + \Pi_2^a - \Pi_1^a \Pi_2^a - \Pi_2^a \Pi_1^a}\\
        &= \sum_a \Big\| \Pi_1^a - \Pi_2^a\Big\|_2^2\\
        &\leq \sum_a \Bigparen{\Big\|\Pi_1^a - \Pi_3^a\Big\|_2 + \Big\|\Pi_3^a - \Pi_2^a\Big\|_2}^2\\
        &\leq \sum_a 2\Big\|\Pi_1^a - \Pi_3^a\Big\|_2^2 + 2\Big\|\Pi_3^a - \Pi_2^a\Big\|_2^2\\
        &= 8 - 4\sum_a \tr\bigparen{\Pi_1^a \Pi_3^a} - 4\sum_a \tr\bigparen{\Pi_2^a \Pi_3^a}.
    \end{align*}
\end{proof}

Let $\Pi:U\cup V \to \PVM_m(\mathcal{H})$ be a weak-quantum assignment to $\phi$ such that $\omega(\phi,\Pi)\geq 1-\zeta$. We construct a quantum assignment $\alpha: \oV \to \Obs(\mathcal{H})$ to $\psi$ such that $\omega(\psi,\alpha) \geq \frac{1}{2}(1-\rho)(1-4\zeta)$. Choosing $\zeta$ such that $4\zeta<\varepsilon$, we can ensure $\omega(\psi,\alpha) \geq \frac{1-\rho}{2} - \varepsilon$. This will complete the proof of completeness.

For every $v\in V$ and $x\in \{\pm 1\}^m$, define
\begin{align*}
    \alpha_v(x) = \sum_{a\in [m]} x_a \Pi_v^a.
\end{align*}
These operators are Hermitian and unitary. 

Furthermore, for every vertex $u\in U$, vertices $v, v'\in N_u$ and $x,y\in \{ \pm 1\}^m$ we have
    \begin{align*}
    \Brac{\alpha_v(x), \alpha_{v'}(y)} &= \sum_{a,b} x_a y_b \Brac{\Pi_v^a,\Pi_{v'}^b}=0,
    \end{align*}
since $\Pi$ is a weak-quantum assignment, and thus measurements $\Pi_v$ and $\Pi_{v'}$ are simultaneously measurable when $v,v'\in V$ share a neighbor in $U$. We conclude that $\alpha: \oV \to \Obs(\mathcal{H})$ is a quantum assignment to $\psi$. 

We now calculate $\omega(\psi,\alpha)$ using Equation \eqref{eq:maxcut_reduction_value}. We have
\begin{align*}
    \omega(\psi,\alpha) &= \frac{1}{2} - \frac{1}{2} \expect_{u,v,v',x,\mu} \Brac{ \tr\Bigparen{\alpha_v(x\circ \pi_{v,u}) \alpha_{v'}( (x\circ \pi_{v',u}) \mu)} }\\ &= \frac{1}{2} - \frac{1}{2} \expect_{u,v,v',x,\mu} \Brac{\sum_{a,b} x_{\pi_{v,u}(a)} x_{\pi_{v',u}(b)} \mu_b \tr\Bigparen{\Pi_v^a \Pi_{v'}^b}} \\
    &= \frac{1}{2} - \frac{\rho}{2} \expect_{u,v,v'} \Brac{\sum_{c}  \tr\Bigparen{\Pi_v^{\pi_{u,v}(c)} \Pi_{v'}^{\pi_{u,v'}(c)}}},
\end{align*}
where in the third line we used $\expect_x \Brac{x_a x_b} =\delta_{a,b}$, $\expect_\mu \Brac{\mu_b} = \rho$ and we redefined the running index of the sum by $c=\pi_{v,u}(a)=\pi_{v',u}(b)$. Using Lemma \ref{lem:maxcut_completeness_technical}, we have
\begin{align*}
    \expect_{u,v,v'} \Brac{\sum_{c}  \tr\Bigparen{\Pi_v^{\pi_{u,v}(c)} \Pi_{v'}^{\pi_{u,v'}(c)}}} &\geq \expect_{u,v,v'}\Brac{  2 \sum_c \tr\Bigparen{\Pi_u^c \Pi_v^{\pi_{u,v}(c)}} + 2 \sum_c \tr\Bigparen{\Pi_u^c \Pi_{v'}^{\pi_{u,v'}(c)}}-3}\\
    &\geq 4(1-\zeta)-3 = 1-4\zeta,
\end{align*}
where we used the assumption $\omega(\phi,\Pi)\geq 1-\zeta$.
Putting everything together, we obtain
\begin{align*}
    \omega(\psi,\alpha) \geq \frac{1}{2}-\frac{\rho}{2}(1-4\zeta) \geq \frac{1}{2}(1-\rho)(1-4\zeta),
\end{align*}
as desired.

\subsection{Soundness}
Let $\alpha: \oV \to \Obs(\mathcal{H})$ be a quantum assignment to $\psi$ such that $\omega(\psi,\alpha) \geq \frac{\arccos \rho}{\pi} + \varepsilon$. We construct a weak-quantum assignment $\Pi:U\cup V \to \PVM_m(\mathcal{H})$ to $\phi$ such that $\omega(\phi,\Pi) \geq \Omega(\varepsilon^2 \delta_2^3)$ where $\delta_2 = \delta_2(\varepsilon,\rho) > 0$ is a constant to be determined later. This will complete the proof of soundness, as we can choose $\delta$ smaller than $\Omega(\varepsilon^2\delta_2^3)$.

\,

\textbf{Good Vertices in $\boldsymbol{U}$.} By the assumption on the value of $\alpha$ and Equation \eqref{eq:maxcut_reduction_value}, we have
\begin{align*}
    \expect_{u,v,v',x,\mu} \Brac{ \tr \Bigparen{\alpha_v(x\circ \pi_{v,u}) \alpha_{v'}((x\circ \pi_{v',u}) \mu)}} \leq 1-\frac{2}{\pi}\arccos{\rho} - 2\varepsilon.
\end{align*}
Therefore, by a simple averaging argument, with probability at least $\frac{\varepsilon}{2}$ over the vertices $u\in U$, we have
\begin{align}\label{eq:good-u}
    \expect_{v,v',x,\mu} \Brac{ \tr \Bigparen{\alpha_v(x\circ \pi_{v,u}) \alpha_{v'}((x\circ \pi_{v',u}) \mu)}} \leq 1-\frac{2}{\pi}\arccos{\rho} - \varepsilon.
\end{align}
We refer to vertices $u \in U$ satisfying \eqref{eq:good-u} as good. 

Fix a good vertex $\bu$. For every $x\in \{\pm 1\}^m$, define the operator $\beta_\bu(x) = \expect_{v\vert \bu} \Brac{\alpha_v(x\circ \pi_{v,\bu})}$. Rewriting \eqref{eq:good-u} in terms of $\beta$'s, we get
\begin{align}\label{eq:good-u-in-terms-of-beta}
    \expect_{x,\mu} \Brac{ \tr \Bigparen{\beta_\bu(x) \beta_{\bu}(x \mu)}} \leq 1-\frac{2}{\pi}\arccos{\rho} - \varepsilon.
\end{align}

\,

\textbf{Good Classical Assignments.} Due to the commutation relations in the quantum assignment $\alpha$, the operators $\alpha_v(\{\pm1\}^m)$, for all $v \in N_\bu$, are diagonalizable in the same basis. Fix such a basis and express these operators in that basis. For each $j \in [d]$, let $\alpha_{\bu,v,j}(x)$ be the $j$th diagonal entry of $\alpha_v(x)$. For simplicity, we refer to $\alpha_{\bu,v,j}$ as $\alpha_{v,j}$ when the choice of $\bu$ is clear from the context. Similarly, the operators $\beta_\bu(\{\pm1\}^m)$ are also diagonal in this basis. Let $\beta_{\bu,j}(x)$ be the $j$th diagonal entry of $\beta_\bu(x)$.

Note that the range of $\alpha_{v,j}$ is $\{\pm 1\}$. The range of $\beta_{\bu,j}$ on the other hand is the interval $[-1,1]$. 

By the definition of the dimension-normalized trace, we have
\begin{align*}
    \tr \Bigparen{\beta_\bu(x) \beta_{\bu}(x \mu)}= \expect_{j} \Bigbrac{\beta_{\bu,j}(x) \beta_{\bu,j}(x \mu)},
\end{align*}
where the expectation is with respect to the uniform distribution over $[d]$. Thus \eqref{eq:good-u-in-terms-of-beta} becomes
\begin{align}
     \expect_{j} \Brac{\expect_{x,\mu} \Brac{ \beta_{\bu,j}(x) \beta_{\bu,j}(x\mu)}} \leq 1-\frac{2}{\pi}\arccos{\rho} - \varepsilon.
\end{align}
Again, it follows from a simple averaging argument that for at least a $\frac{\varepsilon}{4}$-fraction of the indices $j\in [d]$, we have
\begin{align}\label{eq:maxcut_noise_stab}
    \expect_{x,\mu} \Brac{ \beta_{\bu,j}(x) \beta_{\bu,j}(x\mu)} \leq 1-\frac{2}{\pi}\arccos{\rho} - \frac{\varepsilon}{2}.
\end{align}
We call such indices good.

\,

\textbf{Consequences of the Inequality.} Fix a good index $\bj \in [d]$. We now derive the consequences of \eqref{eq:maxcut_noise_stab}: applying the MIS theorem (Theorem \ref{thm:MIS}) to $\beta_{\bu,\bj}$, we conclude that there exist constants $\delta_2=\delta_2(\frac{\varepsilon}{2},\rho)>0$ and $k=k(\frac{\varepsilon}{2},\rho)$ and a label $\bc \in [m]$ such that the $k$-degree influence of $\bc$ on $\beta_{\bu,\bj}$ is large, that is 
\begin{align}\label{eq:conseq_MIS}
    \Infl_\bc^{\leq k}(\beta_{\bu,\bj}) > \delta_2.
\end{align}

 \,

\textbf{Good Neighbors.} Expanding \eqref{eq:conseq_MIS} in terms of Fourier coefficients
\begin{align*}
    \delta_2&<\Infl_\bc^{\leq k}(\beta_{\bu,\bj}) = \sum_{S: \bc \in S, \abs{S}\leq k} \hat{\beta}_{\bu,\bj}(S)^2 = \sum_{S: \bc \in S, \abs{S}\leq k} \bigparen{\expect_{v\lvert \bu} \Brac{ \hat{\alpha}_{v,\bj}(\pi_{\bu,v}(S))}}^2\\ &\leq \expect_{v\lvert \bu} \Brac{ \sum_{S: \bc \in S, \abs{S}\leq k} \hat{\alpha}_{v,\bj}(\pi_{\bu,v}(S))^2} = \expect_{v\lvert \bu} \Brac{\Infl^{\leq k}_{\pi_{\bu,v}(\bc)}(\alpha_{v,\bj})},
\end{align*}
where we used Jensen's inequality. Again, by a simple averaging argument, with probability at least $\frac{\delta_2}{2}$, a vertex $v \in N_\bu$ satisfies
\begin{align}\label{eq:maxcut_goodv_guarantee}
    \Infl^{\leq k}_{\pi_{\bu,v}(\bc)}(\alpha_{v,\bj})> \frac{\delta_2}{2}.
\end{align} 
We call such $v$ good. 

\,

\textbf{Quantum assignment to $\boldsymbol{\phi}$.}
We now construct the weak-quantum assignment $\Pi: U\cup V \to \Obs(\hilb)$ to $\phi$ and analyse its value. For every $u\in U$, $v\in V$ and $a,b\in [m]$, define the operators
\begin{align*}
    P_u^a = \sum_{S: a \in S} \frac{1}{\abs{S}} \hat{\beta}_u(S)^2 \quad \text{ and } \quad P_v^b = \sum_{S: b\in S} \frac{1}{\abs{S}} \hat{\alpha}_v(S)^2.
\end{align*}
Let $Q_u^m = I -\sum_{a=1}^{m-1} P_u^a$ and $Q_v^m = I -\sum_{b=1}^{m-1} P_v^b$. 
Now let $P_u$ denote the set of operators $P_u^1,\ldots,P_u^m + Q_u^m$. Similarly for $P_v$. We show that $P_u$ and $P_v$ are self-commuting POVMs.

By Lemma \ref{lem:povm_from_obs}, $P_v$'s are self-commuting POVMs. Since $\beta_u(x)$ are not observables, we cannot directly apply Lemma \ref{lem:povm_from_obs} in the case of $P_u$. However, a similar argument as in the proof of Lemma \ref{lem:povm_from_obs} shows that $P_u$ are also $m$-outcome self-commuting POVMs. First observe that 
\begin{align*}
    \sum_a P_u^a = \sum_a \sum_{S: a\in S} \frac{1}{\abs{S}} \hat{\beta}_u(S)^2 = \sum_{S: S\neq \emptyset}\hat{\beta}_u(S)^2
\end{align*}
and expanding in terms of $\alpha_v$'s, we get 
\begin{align*}
    \sum_{S: S\neq \emptyset}\hat{\beta}_u(S)^2 = \sum_{S: S\neq \emptyset} \bigparen{ \expect_v \Brac{\hat{\alpha}_v(\pi_{u,v}(S))}}^2 \leq \expect_v \Brac{ \sum_{S: S\neq \emptyset} \hat{\alpha}_v(\pi_{u,v}(S))^2}  \leq I,
\end{align*}
where we used Jensen's inequality and Parseval's identity. Therefore $Q_u^m$ is a PSD operator and $P_u$ is a POVM. 

POVMs $P_u$ are self-commuting because ultimately $\alpha_v(\{\pm1\}^m)$ are commuting operators for all $v \in N_u$. This also implies that $P_u$ and $P_v$ are simultaneously measurable when $v \in N_u$. Similarly $P_v$ and $P_{v'}$ are simultaneously measurable when $v,v' \in N_u$. Thus $P$ is a weak-quantum POVM assignment to $\phi$. So by the Projectivization Lemma in Section \ref{sec:projectivization}, there exists a weak-quantum PVM assignment $\Pi$ attaining the same value as $P$. So we just need to calculate the value of $P$.

\,

\textbf{Value of the Assignment $\boldsymbol{P}$.} By \eqref{eq:value-of-phi}, the value of the assignment satisfies
\begin{align*}
    \omega(\phi,P) &\geq \expect_{u,v} \Brac{ \sum_a \tr\Bigparen{P_u^a P_v^{\pi_{u,v}(a)}}}
\end{align*}

where the inequality is the result of ignoring the contributions of $Q_u^m$ and $Q_v^m$. Expanding in terms of Fourier coefficients
\[\omega(\phi,P) \geq \expect_{u,v} \Brac{ \sum_a \sum_{S\ni a} \, \sum_{T\ni \pi_{u,v}(a)} \frac{1}{\abs{S}\abs{T}}\tr\Bigparen{\hat{\beta}_u(S)^2 \hat{\alpha}_v(T)^2}}.\]

Our first goal is to eliminate the expectation over $u$. We have established that with probability at least $\frac{\varepsilon}{2}$, a vertex $u\in U$ is a good vertex. Among all good vertices, let $\bu$ be the one that minimizes the expression inside the expectation over $u$:
\begin{align*}
    \expect_{v\lvert u} \Brac{ \sum_a \sum_{S\ni a} \, \sum_{T\ni \pi_{u,v}(a)} \frac{1}{\abs{S}\abs{T}}\tr\Bigparen{\hat{\beta}_u(S)^2 \hat{\alpha}_v(T)^2}}.
\end{align*}
Thus it follows that
\begin{align*}
    \omega(\phi,P) \geq \frac{\varepsilon}{2} \expect_{v\lvert \bu} \Brac{ \sum_a \sum_{S\ni a} \, \sum_{T\ni \pi_{\bu,v}(a)} \frac{1}{\abs{S}\abs{T}}\tr\Bigparen{\hat{\beta}_\bu(S)^2 \hat{\alpha}_v(T)^2}}.
\end{align*}
Next, we aim to eliminate the trace. With probability at least $\frac{\varepsilon}{4}$, an index $j\in [d]$ is good. Among all good indices, let $\bj$ be the one that minimizes
\begin{align*}
    \expect_{v\lvert \bu} \Brac{\sum_a \sum_{S\ni a} \, \sum_{T\ni \pi_{\bu,v}(a)} \frac{1}{\abs{S}\abs{T}}\hat{\beta}_{\bu,j}(S)^2 \hat{\alpha}_{v,j}(T)^2}.
\end{align*}
Thus we have
\begin{align}\label{eq:maxcut_intermediate}
    \omega(\phi,P) \geq \frac{\varepsilon^2 }{8} \expect_{v\lvert \bu} \Brac{\sum_a \sum_{S\ni a} \, \sum_{T\ni \pi_{\bu,v}(a)} \frac{1}{\abs{S}\abs{T}}\hat{\beta}_{\bu,\bj}(S)^2 \hat{\alpha}_{v,\bj}(T)^2}.
\end{align}
We saw that using MIS, for fixed $\bu$ and $\bj$, we get a label $\bc \in [m]$, for which \eqref{eq:conseq_MIS} holds. Also with probability at least $\frac{\delta_2}{2}$, we get a neighbor $\bv$ of $\bu$ that satisfies \eqref{eq:maxcut_goodv_guarantee}. Applying a similar argument to the one we have been using to eliminate expectations, we have
\begin{align*}
    \omega(\phi,P) \geq \frac{\varepsilon^2 \delta_2}{16} \sum_a \sum_{S\ni a} \, \sum_{T\ni \pi_{\bu,\bv}(a)} \frac{1}{\abs{S}\abs{T}}\hat{\beta}_{\bu,\bj}(S)^2 \hat{\alpha}_{\bv,\bj}(T)^2,
\end{align*}
for some good $\bv$. Ignoring the terms where $a$ is not equal to the label $\bc$, we have 
\begin{align*}
    \omega(\phi,P) \geq \frac{\varepsilon^2 \delta_2}{16} \sum_{S\ni \bc} \, \sum_{T\ni \pi_{\bu,\bv}(\bc)} \frac{1}{\abs{S}\abs{T}}\hat{\beta}_{\bu,\bj}(S)^2 \hat{\alpha}_{\bv,\bj}(T)^2.
\end{align*}
Ignoring the terms where $\abs{S}, \abs{T}> k$, we have
\begin{align*}
    \omega(\phi,P) &\geq \frac{\varepsilon^2 \delta_2}{16k^2} \sum_{S\ni \bc, \abs{S}\leq k} \,  \hat{\beta}_{\bu,\bj}(S)^2 \sum_{T\ni \pi_{\bu,\bv}(\bc), \abs{T}\leq k} \hat{\alpha}_{\bv,\bj}(T)^2.
\end{align*}
Finally, using \eqref{eq:conseq_MIS} and \eqref{eq:maxcut_goodv_guarantee}, we have
\begin{align*}
    \omega(\phi,P) \geq \frac{\varepsilon^2 \delta_2}{16k^2}\, \Infl_\bc^{\leq k}(\beta_{\bu,\bj})\, \Infl^{\leq k}_{\pi_{\bu,\bv}(\bc)}(\alpha_{\bv,\bj}) \geq \frac{\varepsilon^2 \delta_2^3}{32k^2},
\end{align*}
as desired.

\subsection{Integrality Gap}
In Section \ref{sec:result-on-quantum-maxcut} and Figure \ref{fig:extremes}, we suggested that, assuming wqUGC, there are instances $G$ of MaxCut for which $\omega_c(G) \approx 0.878\,\omega_{\mathrm{sdp}}(G)$ and $\omega_q(G) \approx \omega_{\mathrm{sdp}}(G)$. This is a direct consequence of the following theorem:
\begin{theorem}\label{thm:integrality-gap}
Assuming wqUGC, there exists an instance $G$ of MaxCut for which $\omega_c(G) = \alpha_{\mathrm{gw}}\,\omega_q(G)$, where $\alpha_{\mathrm{gw}} = \frac{2}{\pi} \min_{0\leq \theta\leq \pi}\frac{\theta}{1-\cos(\theta)}$ is the Goemans-Williamson constant.
\end{theorem}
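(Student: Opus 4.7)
The plan is to argue by contradiction using the hardness statement of Theorem \ref{thm:maxcuthardness}, the chain of inequalities \eqref{eq:chain-of-inequalities}, and the brute-force decidability of the classical value. Note first that from \eqref{eq:chain-of-inequalities}, which in turn combines Goemans--Williamson with Tsirelson's identification $\omega_\nc(G) = \omega_{\mathrm{sdp}}(G)$, one has $\omega_c(G) \geq \alpha_{\mathrm{gw}}\,\omega_q(G)$ for \emph{every} MaxCut instance $G$. So the claimed equality $\omega_c(G) = \alpha_{\mathrm{gw}}\,\omega_q(G)$ amounts to a matching upper bound, and it suffices to rule out the ratio $\omega_c(G)/\omega_q(G)$ being bounded away from $\alpha_{\mathrm{gw}}$.

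First I would fix the critical angle $\theta^* = \arg\min_{\theta \in [0,\pi]} \theta/(1-\cos\theta)$ and set $\rho = \cos\theta^* \in (-1,0)$. By the definition of $\alpha_{\mathrm{gw}}$, this choice equates $\alpha_{\mathrm{gw}}(1-\rho)/2$ with $\arccos(\rho)/\pi$, so the completeness and soundness thresholds appearing in Theorem \ref{thm:maxcuthardness} are in exactly the Goemans--Williamson ratio. This alignment is what allows a hypothetical uniform gap in $\omega_c - \alpha_{\mathrm{gw}}\omega_q$ to separate the two thresholds.

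Next, suppose for contradiction that there exists $\eta > 0$ such that $\omega_c(G) \geq \alpha_{\mathrm{gw}}\,\omega_q(G) + \eta$ for every MaxCut instance $G$. Pick $\varepsilon > 0$ with $(1+\alpha_{\mathrm{gw}})\varepsilon < \eta$, and consider the algorithm that, on input $G = (V,E,p)$, enumerates all $2^{|V|}$ cuts to compute $\omega_c(G)$ exactly (which terminates in finite time) and outputs ``yes'' iff $\omega_c(G) \geq \alpha_{\mathrm{gw}}((1-\rho)/2 - \varepsilon) + \eta$. On a yes instance of $\cut_q((1-\rho)/2-\varepsilon,\arccos(\rho)/\pi+\varepsilon)$, the contradiction hypothesis yields $\omega_c(G) \geq \alpha_{\mathrm{gw}}((1-\rho)/2-\varepsilon) + \eta$, so the algorithm answers yes. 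On a no instance, the trivial inequality $\omega_c(G) \leq \omega_q(G)$ gives $\omega_c(G) < \arccos(\rho)/\pi + \varepsilon = \alpha_{\mathrm{gw}}(1-\rho)/2 + \varepsilon$, which by the choice of $\varepsilon$ is strictly below $\alpha_{\mathrm{gw}}((1-\rho)/2-\varepsilon) + \eta$, so the algorithm answers no. This yields a (halting) decision procedure for the $\RE$-hard promise problem of Theorem \ref{thm:maxcuthardness}, a contradiction.

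Hence for every $\eta > 0$ there is an instance with $\omega_c(G) - \alpha_{\mathrm{gw}}\,\omega_q(G) < \eta$, which together with the universal lower bound shows $\inf_G \omega_c(G)/\omega_q(G) = \alpha_{\mathrm{gw}}$. The main obstacle I anticipate is purely interpretive: the argument sketched above delivers a \emph{sequence} of instances whose ratios converge to $\alpha_{\mathrm{gw}}$ rather than a single finite graph attaining exact equality, and the comment below Figure \ref{fig:extremes} already acknowledges that producing such an explicit witness is open. I would therefore read the theorem as asserting that the infimum is $\alpha_{\mathrm{gw}}$; obtaining a single finite-instance witness would seem to require a different technique (e.g.\ an explicit Feige--Schechtman-type construction combined with an ad hoc quantum strategy certifying $\omega_q(G) = \omega_{\mathrm{sdp}}(G)$) that goes beyond the brute-force-versus-undecidability tension exploited here.
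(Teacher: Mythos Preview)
Your proposal is correct and follows essentially the same argument as the paper: both derive the universal lower bound $\omega_c(G)\ge\alpha_{\mathrm{gw}}\,\omega_q(G)$ from \eqref{eq:chain-of-inequalities}, and both argue that a uniform slack would let the (decidable) classical value solve the $\RE$-hard promise problem of Theorem~\ref{thm:maxcuthardness}, contradicting $\RE\neq\NP$. Your interpretive caveat is also apt: the paper's own proof concludes only that for every $\varepsilon>0$ there is an instance with $\omega_c(G)\le(\alpha_{\mathrm{gw}}+\varepsilon)\,\omega_q(G)$, so the theorem is best read as asserting that the infimum of the ratio equals $\alpha_{\mathrm{gw}}$ rather than that a single finite instance attains it.
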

\begin{proof}
First consider the chain of inequalities $\alpha_{\mathrm{gw}}\,\omega_q(G) \leq \omega_c(G) \leq \omega_q(G)$ derived from \eqref{eq:chain-of-inequalities}. The following three straightforward facts show that the inequality on the left, $\alpha_{\mathrm{gw}}\, \omega_q(G) \leq \omega_c(G)$, must be tight:
\begin{enumerate}
    \item the classical value can be exactly computed in $\NP$,
    \item it is $\RE$-hard to approximate the quantum value better than $\alpha_{\mathrm{gw}}$, and
    \item $\RE \neq \NP$.
\end{enumerate}
So for every $\varepsilon>0$, there must exist an instance $G$ for which $\omega_c(G) \leq (\alpha_{\mathrm{gw}}+ \varepsilon)\, \omega_q(G) $.
\end{proof}
In Section \ref{sec:future}, we raised the question of whether such an instance $G$ can be explicitly constructed. Compare this question to the celebrated integrality gap paper of Feige and Schechtman~\cite{feige_integrality}, where they constructed an example showing the tightness of the inequality $\alpha_{\mathrm{gw}}\,\omega_{\mathrm{sdp}}(G)\leq \omega_c(G)$.

\section{Projectivization Lemma}\label{sec:projectivization}
A (weak-quantum, quantum, or noncommutative) \emph{POVM assignment} is defined similarly to a (weak-quantum, quantum, or noncommutative) assignment, except that we use self-commuting POVMs instead of PVMs in the definitions. In this section, we aim to show that the choice between PVMs and self-commuting POVMs does not affect the value. Specifically, we prove that if a POVM assignment achieves a value $\lambda$, then there also exists a PVM assignment that achieves a value of at least $\lambda$.

\textbf{Extremal POVMs.} The set of POVMs is convex and compact, so by the Krein-Milman theorem, every POVM can be expressed as a convex combination of extremal POVMs. According to Holevo~\cite{holevo}, PVMs are extremal POVMs. However it is important to note that not all extremal POVMs are PVMs. For more details, see Watrous~\cite{watrous-book}. 

The set of self-commuting POVMs is also convex and compact, with PVMs remaining extremal in this smaller subset. In fact, PVMs are the only extremal points in the set of self-commuting POVMs. Thus, we can express any self-commuting POVM as convex combinations of some PVMs. However, to achieve the goal of this section, we must proceed carefully: the POVMs in a POVM assignment satisfy certain commutation relations, which we must account for. 

\textbf{Objective Function.} Let $P = \{P_1,\ldots,P_n\} \subset \POVM_m(\hilb)$ be a quantum POVM assignment to a $\CSP{k}(m)$ instance $\phi = ([n],E,f,p)$. Recall that $\POVM_m$ denotes the set of self-commuting POVMs. The value of this assignment is given by
$$\omega(\phi,P) = \sum_{e=(i_1,\ldots,i_k)\in E} p_e\sum_{a \in [m]^k} f_e(a_1,\ldots,a_k)\tr\Paren{P_{i_1}^{a_1}\cdots P_{i_k}^{a_k}}.$$

Let us write $\omega_\phi(P_1,\ldots,P_n)$ to represent $\omega(\phi,P)$. We view $\omega_\phi$ as a function $\POVM_m^n \to \C$. On a quantum assignment, due to the commutation relations, the image is guaranteed to lie in the interval $[0,1]$. However, since the set of quantum assignments is not convex, we instead consider the domain to be the convex and compact set $\POVM_m^n$. 

This function behaves well coordinate-wise: for every $P_1,\ldots,P_n,Q \in \POVM_m(\hilb)$, $\lambda \in [0,1]$, and $i \in [n]$, we have the property
\begin{align}\label{eq:property-projectivization}
    \omega_\phi(P_1,\ldots,\lambda P_i + (1-\lambda)Q,\ldots,P_n) = \lambda\omega_\phi(P_1,\ldots,P_i,\ldots,P_n) + (1-\lambda)\omega_\phi(P_1,\ldots,Q,\ldots,P_n).
\end{align}
This property holds because each POVM $P_i$ contributes linearly to $\omega_\phi(P_1,\ldots,P_n)$. Specifically, at most one element $P_i^a \in P_i$, for some $a \in [m]$, appears in each term of the summation. 

Given \eqref{eq:property-projectivization} and the convexity of the domain, the function is coordinate-wise both convex and concave. However, strictly speaking, convexity is defined for real-valued functions, so we will not use this terminology here.

\begin{lemma}\label{lemma:povm-assignment-is-the-same-as-pvm} Let $\phi = (V,E,f,p)$ be a $\CSP{k}(m)$ and let $P:V\to \POVM_m(\hilb)$ be a quantum POVM assignment to $\phi$. There exists a quantum (PVM) assignment $\Pi:V \to \PVM_m(\hilb)$ to $\phi$ for which the value is at least the value of $P$.
\end{lemma}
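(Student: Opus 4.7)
The plan is to proceed by induction on the vertex set $V$, replacing one POVM at a time by a PVM without decreasing the value and maintaining the commutation structure required by a quantum POVM assignment. The engine of the argument is the coordinate-wise linearity property \eqref{eq:property-projectivization} together with the fact that, in a commutative algebra, every POVM is a convex combination of PVMs contained in that algebra.

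Fix a vertex $i \in V$ and consider the commutative von Neumann subalgebra $\mathcal{D} \subseteq \M(\hilb)$ generated by the operators of $P_i$; this is well defined because $P_i$ is self-commuting. Since $P$ is a quantum POVM assignment, every generator $P_i^a$ commutes with every $P_j^b$ for $j$ sharing a constraint with $i$, so all of $\mathcal{D}$ lies in the commutant of $\{P_j : j \text{ shares a constraint with } i\}$. Consequently, any PVM built inside $\mathcal{D}$ and substituted for $P_i$ automatically satisfies the commutation requirements of a quantum assignment.

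The key step is to show $P_i$ lies in the convex hull of the PVMs contained in $\mathcal{D}$. Identifying the finite-dimensional commutative algebra $\mathcal{D}$ with $\C^k$ via simultaneous diagonalization of $P_i^1,\ldots,P_i^m$, the POVM $P_i$ corresponds to nonnegative functions $P_i^a:[k]\to [0,1]$ summing pointwise to $1$, and PVMs in $\mathcal{D}$ correspond exactly to labelings $\ell:[k]\to [m]$ via $\Pi_\ell^a(x) = \mathbb{1}[\ell(x) = a]$. Assigning to each labeling the product probability $p(\ell) = \prod_{x\in [k]} P_i^{\ell(x)}(x)$ writes $P_i$ as a convex combination of these PVMs, as a direct calculation using $\sum_a P_i^a(x) = 1$ shows.

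By the coordinate-wise linearity \eqref{eq:property-projectivization}, $\omega_\phi(P_1,\ldots,P_i,\ldots,P_n)$ equals the same convex combination of $\omega_\phi(P_1,\ldots,\Pi_\ell,\ldots,P_n)$; hence some PVM $\Pi_i \in \mathcal{D}$ achieves value at least $\omega_\phi(P)$, and the modified tuple is still a quantum POVM assignment. Iterating over all vertices of $V$ produces the desired PVM assignment. The main subtlety, rather than a true obstacle, is ensuring that the replacement PVM lives inside the commutative algebra generated by $P_i$ \emph{alone}, not inside a larger and possibly noncommutative algebra involving non-adjacent POVMs $P_j, P_{j'}$; confining the convex decomposition to $\mathcal{D}$ is precisely what makes the commutation relations survive each inductive step.
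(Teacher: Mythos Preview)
Your proof is correct and follows essentially the same strategy as the paper: work one vertex at a time, decompose the self-commuting POVM $P_i$ as a convex combination of PVMs living in the commutative algebra $\mathcal{D}$ it generates, and use coordinate-wise linearity \eqref{eq:property-projectivization} to pick a PVM that does not decrease the value while preserving all commutation relations.

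The only notable difference is in how the convex decomposition is obtained. The paper invokes Krein--Milman on the compact convex set $\POVM_m(\mathcal{D})$ and then argues, via a perturbation, that the extremal points of this set are exactly the PVMs (this is the content of Lemma~\ref{lem:projlemma}). You instead give an explicit decomposition: after identifying $\mathcal{D}\cong\C^k$, you write $P_i$ as the product-measure mixture $p(\ell)=\prod_{x\in[k]} P_i^{\ell(x)}(x)$ of the PVMs $\Pi_\ell$. Your route is more elementary and constructive; the paper's route has the advantage of identifying the extremal structure of $\POVM_m(\mathcal{D})$ once and for all. Either way, the crucial point you correctly emphasize---that the replacement PVM must lie in the algebra generated by $P_i$ alone so that commutation with neighboring $P_j$'s is automatic---is exactly the content of the ``Furthermore'' clause in the paper's Lemma~\ref{lem:projlemma}.
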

Note that $\Pi$ is defined on the same Hilbert space as $P$. By Property \eqref{eq:property-projectivization} of the objective function, Lemma \ref{lemma:povm-assignment-is-the-same-as-pvm} is a direct corollary of the following lemma.
\begin{lemma}\label{lem:projlemma}
    For every $P \in \POVM_m(\mathcal{H})$ there exists an integer $N$ and $\Pi_1,\Pi_2,\ldots,\Pi_N \in \PVM_m(\mathcal{H})$ such that $P$ is in the convex hull of $\Pi_i$'s. Furthermore, if $M\in \mathcal{M}(\mathcal{H})$ is any matrix that commutes with $P$, it also commutes with every $\Pi_i$.
\end{lemma}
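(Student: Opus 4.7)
Since $P = \{P^1, \dots, P^m\}$ is self-commuting, the operators $P^a$ are simultaneously diagonalizable. Rather than working with an arbitrary eigenbasis, the plan is to decompose $\mathcal{H}$ into joint eigenspaces $\mathcal{H} = \bigoplus_{j \in J} V_j$, where on each $V_j$ every $P^a$ acts as a scalar $p_j^a \cdot I$. Since the $P^a$ are PSD and sum to $I$, the tuple $(p_j^1, \ldots, p_j^m)$ is a probability distribution on $[m]$ for each $j$. Let $E_j$ denote the orthogonal projection onto $V_j$; these are pairwise orthogonal and sum to $I$.

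\textbf{Construction and convex combination.} For each function $\omega : J \to [m]$, define a PVM $\Pi_\omega = \{\Pi_\omega^1, \dots, \Pi_\omega^m\}$ by
\[
\Pi_\omega^a \coloneqq \sum_{j \in J : \omega(j) = a} E_j.
\]
Each $\Pi_\omega^a$ is a sum of pairwise-orthogonal projections, hence a projection, and $\sum_a \Pi_\omega^a = \sum_j E_j = I$, so $\Pi_\omega \in \PVM_m(\mathcal{H})$. Put the product-measure weight $\lambda_\omega \coloneqq \prod_j p_j^{\omega(j)}$ on the (finite) set $[m]^J$. A direct calculation using $\sum_{\omega : \omega(j)=a} \lambda_\omega = p_j^a$ gives
\[
\sum_\omega \lambda_\omega \Pi_\omega^a = \sum_j E_j \sum_{\omega : \omega(j) = a} \lambda_\omega = \sum_j p_j^a E_j = P^a,
\]
so $P$ is the claimed convex combination. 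The total number of PVMs is $N = m^{|J|} \leq m^{\dim \mathcal{H}}$.

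\textbf{Commutation.} Suppose $M \in \mathcal{M}(\mathcal{H})$ commutes with every $P^a$. Then $M$ preserves each joint eigenspace $V_j$, since $V_j$ is defined by the joint eigenvalue conditions $P^a v = p_j^a v$ for all $a$. Consequently $[M, E_j] = 0$ for every $j$, and therefore $[M, \Pi_\omega^a] = 0$ for every $\omega$ and $a$. This is exactly why working with the joint-eigenspace projections $E_j$ (as opposed to an arbitrarily chosen rank-one eigenbasis) is essential: within a joint eigenspace $M$ may act nontrivially, so rank-one projections could fail to commute with $M$, whereas $E_j$ itself does not resolve this internal action.

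\textbf{Anticipated obstacles.} There is no real technical obstacle—the construction is elementary once one recognizes that the decomposition must be by joint eigenspaces rather than by individual eigenvectors. The only subtlety worth flagging is that the lemma relies on $P$ being self-commuting (so that a joint eigenspace decomposition exists at all); an analogous statement for general POVMs would require dilation arguments that, as the paper discusses earlier, are incompatible with the tracial/commutation setting here.
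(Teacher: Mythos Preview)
Your proof is correct and takes a genuinely different, more constructive route than the paper. Both arguments rest on the same building blocks: your joint-eigenspace projections $E_j$ are exactly the minimal projections $\Delta_j$ of the commutative algebra $\calA$ generated by $P^1,\ldots,P^m$ that the paper introduces. From there, however, the paper proceeds abstractly: it characterizes the extreme points of $\POVM_m(\calA)$ as precisely the PVMs (via a perturbation argument), and then invokes the Krein--Milman theorem to conclude that $P$ lies in the convex hull of some finite collection of PVMs in $\calA$; commutation follows because anything in $\calA$ is a polynomial in the $P^a$. Your argument bypasses Krein--Milman and the extremality characterization entirely by writing down an explicit product-measure convex combination $\sum_\omega \lambda_\omega \Pi_\omega$, and you obtain commutation by the direct observation that $M$ preserves each joint eigenspace. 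Your approach is more elementary and yields an explicit decomposition with $N \le m^{\dim\mathcal{H}}$; the paper's approach, while non-constructive, isolates the structural fact that PVMs are exactly the extreme self-commuting POVMs within $\calA$, which is of independent interest.
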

\begin{proof}
Let $P = \{P^1,\ldots,P^m\}$. Let $\calA$ be the commutative algebra generated by $P^1,\ldots,P^m$. Let $\Delta_1,\ldots,\Delta_N$ be a complete set of minimal projections in $\calA$: that is $\Delta_i$ are minimal projections and $\Delta_i\Delta_j = 0$ if $i\neq j$ and $\Delta_1 + \cdots + \Delta_N = I$. Since $\calA$ is a finite-dimensional commutative algebra, it is straightforward to show that such a complete set of minimal projections always exists. 

Let $\POVM_m(\calA)$ denote the set of $m$-outcome POVMs on $\calA$. For example $P \in \POVM_m(\calA)$. Similarly let $\PVM_m(\calA)$ denote the set of $m$-outcome PVMs on $\calA$. By Holevo's theorem, elements of $\PVM_m(\calA)$ are extremal in the larger set of all POVMs, and therefore, they must also be extremal in the smaller set $\POVM_m(\calA)$. Using a perturbative argument, we show that these are the only extremal measurements in $\POVM_m(\calA)$. 

Suppose $Q = \{Q^1,\ldots,Q^m\}$ is extremal in $\POVM_m(\calA)$. For each $a\in [m]$ we can express $Q^a = \sum_i \alpha_{a,i}\Delta_i$ for some $\alpha_{a,i} \in [0,1]$. If $Q$ is not a PVM, then there must exist some $a\in [m]$ and $i\in [N]$ such that $\alpha_{a,i} \in (0,1)$. Then there must also exist a $b\in [m]\setminus \{a\}$ for which $\alpha_{b,i} \in (0,1)$. For simplicity, assume $a = 1$ and $b = 2$. For sufficiently small $\varepsilon > 0$, consider two POVMs $$Q^1 + \varepsilon \Delta_i,Q^2 - \varepsilon\Delta_i,Q^3,\ldots,Q^m$$ and $$Q^1 - \varepsilon \Delta_i,Q^2 + \varepsilon\Delta_i,Q^3,\ldots,Q^m.$$ Clearly they are in $\calA$ and $Q$ is a convex combination of them. This is a contradiction. 

Since $\POVM_m(\hilb)$ is convex and compact, given $P \in \POVM_m(\hilb)$, by the Krein-Milman theorem, there exist extremal $\Pi_1,\ldots,\Pi_N \in \POVM_m(\hilb)$ such that $P$ is in their convex hull. Since extremal measurements in $\POVM_m(\hilb)$ are PVMs this completes the proof of the first part. 

The second part follows from the fact that every element in $\calA$ is a polynomial in commuting operators $P^1,\ldots,P^m$. So if $M$ is a matrix commuting with $P^1,\ldots,P^m$, it commutes with every element in $\calA$ including $\Pi_i$'s.
\end{proof}

\section{Folding Lemma}\label{sec:folding-lemma}
\begin{lemma*}
Given an instance $\psi$ as constructed in the proof of Theorem \ref{thm:lin2reduction}, there exists another efficiently constructed instance $\psi' \in \Lin{2}$ such that assignments to $\psi'$ one-to-one correspond to folded assignments to $\psi$ with the same value. Consequently, the folded value of $\psi$ is the same as the ordinary value of $\psi'$.
\end{lemma*}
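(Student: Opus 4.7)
The plan is to use the standard canonical-representative folding trick, adapted for the operator setting. Fix any set $R \subseteq \{\pm 1\}^m$ containing exactly one element of each pair $\{x,-x\}$; for concreteness, take $R=\{x:x_1=+1\}$. Define a folding map $\tau:\{\pm 1\}^m \to R$ by $\tau(x)=x$ if $x\in R$ and $\tau(x)=-x$ otherwise, and a sign $s:\{\pm 1\}^m\to\{\pm 1\}$ by $s(x)=+1$ if $x\in R$ and $s(x)=-1$ otherwise, so that $x=s(x)\,\tau(x)$.

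I would then construct $\psi'=(\oV',\oE',\orr',\opp')$ as follows. The vertex set is $\oV'=(U\cup V)\times R$. For every edge $e=((u_1,x_1),(u_2,x_2))\in\oE$ of $\psi$ with parity $\orr_e=1$ and weight $\opp_e$, introduce a corresponding edge $e'=((u_1,\tau(x_1)),(u_2,\tau(x_2)))\in\oE'$ with parity bit $\orr'_{e'}=s(x_1)s(x_2)$ and weight $\opp'_{e'}=\opp_e$; distinct edges of $\oE$ that collapse onto the same vertex pair with the same parity bit have their weights aggregated, while collapses onto the same vertex pair with opposite parity bits are kept as separate (multi-)edges, which is permitted since $\oE'$ is a multiset. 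The construction is clearly polynomial-time in the description of $\psi$.

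Next, I would establish the bijection between assignments $\alpha':\oV'\to\Obs(\mathcal{H})$ and folded assignments $\alpha:\oV\to\Obs(\mathcal{H})$ by the formula $\alpha_u(x)=s(x)\,\alpha'_u(\tau(x))$, with inverse given by restriction to $R$. The extension is automatically odd since $s(-x)=-s(x)$ and $\tau(-x)=\tau(x)$; conversely, for any folded $\alpha$, the identity $\alpha_u(x)=s(x)\alpha_u(\tau(x))$ (which follows directly from the folding condition) shows the restriction-then-extend map is the identity. Commutation relations required for quantum assignments transfer automatically between $\alpha$ and $\alpha'$: if $((u_1,x_1),(u_2,x_2))\in\oE$, the corresponding constraint edge in $\oE'$ is $((u_1,\tau(x_1)),(u_2,\tau(x_2)))$, and the operators on the two sides of the correspondence differ only by scalar factors $s(x_1),s(x_2)\in\{\pm 1\}$, which commute with everything.

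For value preservation, the contribution of an edge $e$ of $\psi$ to $\omega(\psi,\alpha)$ equals $\tfrac{1}{2}\opp_e\orr_e\tr(\alpha_{u_1}(x_1)\alpha_{u_2}(x_2))$ (modulo the common $\tfrac{1}{2}$ offset); substituting $\alpha_{u_i}(x_i)=s(x_i)\alpha'_{u_i}(\tau(x_i))$ pulls out the factor $s(x_1)s(x_2)=\orr'_{e'}$, yielding exactly $\tfrac{1}{2}\opp'_{e'}\orr'_{e'}\tr(\alpha'_{u_1}(\tau(x_1))\alpha'_{u_2}(\tau(x_2)))$, which is the contribution of $e'$ to $\omega(\psi',\alpha')$. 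Summing over $e$ gives $\omega(\psi,\alpha)=\omega(\psi',\alpha')$, completing the proof. I do not expect any serious obstacle here: the construction is a direct transcription of the classical folding trick from~\cite{hastad1}, and the only mild point is the bookkeeping noted above when several edges of $\oE$ collapse to the same underlying pair in $\oV'\times\oV'$; using the multiset structure of $\oE'$ handles this transparently.
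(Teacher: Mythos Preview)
Your proposal is correct and follows essentially the same approach as the paper: the paper's $\xi_m$, $\kappa$, and $\theta$ are exactly your $R$, $s$, and $\tau$, and the edge-by-edge construction, bijection $\alpha_u(x)=s(x)\alpha'_u(\tau(x))$, and value computation match. The only cosmetic differences are that the paper keeps all collapsed edges as separate multi-edges rather than aggregating weights, and does not spell out the commutation-preservation argument for quantum assignments as explicitly as you do.
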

\begin{proof}
    Let $\xi_m$ be a subset of $\{\pm 1\}^m $ such that for every $x\in \{\pm 1\}^m$ exactly one of $x$ or $-x$ is contained in $\xi_m$. For every $x\in \{\pm 1\}^m$, let $\kappa(x)=1$ if $x\in \xi_m$ and $\kappa(x)=-1$ otherwise. Finally let $\theta(x)=\kappa(x) x$. Clearly $\theta(x) \in \xi_m$ for all $x$. 
    
    We now define the instance $\psi' = (\oV',\oE',\orr',\opp')$. Let $\oV' = (U \cup V)\times \xi_m$. The edges $e \in \oE'$, their parity bits $\orr'_e \in \{\pm 1\}$, and their weights $\opp'_e$ are defined as follows:
    \begin{enumerate}
        \item For every $u \in U \cup V$ and $x,\mu \in \{\pm1\}^m$, let $((u,\theta(x)),(u,\theta(x\mu)))$ be an edge in $\oE'$. Set its parity bit to $\kappa(x)\kappa(x\mu)$. Set its weight to be the same as the weight of $((u,x),(u,x\mu))$ in $\psi$.\footnote{Note that $\oE'$ is allowed to be a multiset (see Definition \ref{def:general-csp}). For example, both $((u,\theta(x)),(u,\theta(x\mu)))$ and $((u,\theta(-x)),(u,\theta(-x\mu)))$ are edges in $\oE'$, each with their own parity bits and weights.}
        
        \item For every edge $(u,v) \in E$ and $x,\mu \in \{\pm 1\}^m$, let $((u,\theta(x)),(v,\theta((x\circ \pi_{v,u})\mu)))$ be an edge in $\oE'$. Set its parity bit to $\kappa(x)\kappa((x\circ \pi_{v,u})\mu)$. Set its weight to be the same as the weight of $((u,x),(v,(x\circ \pi_{v,u})\mu))$ in $\psi$.
    \end{enumerate}
    Let $\alpha': \oV' \to \Obs(\mathcal{H})$ be an arbitrary (classical, quantum, or noncommutative) assignment to $\psi'$. Define the assignment $\alpha: \oV\to\Obs(\mathcal{H})$ to $\psi$ such that $\alpha(u,x) = \kappa(x)\alpha'(u,\theta(x))$ for all $u \in U\cup V$ and $x \in \{\pm1\}^m$. It holds that $$\alpha(u,-x) = \kappa(-x)\alpha'(u,\theta(-x)) = -\kappa(x)\alpha'(u,\theta(x)) = -\alpha(u,x)$$
    and thus $\alpha$ is folded. Also given any folded assignment $\alpha$ to $\psi$, the assignment $\alpha'$ is uniquely determined using the identity $\alpha'(u,\theta(x)) = \kappa(x)\alpha(u,x)$. This is well-defined since $\alpha'(u,\theta(-x)) = \kappa(-x)\alpha(u,-x) = \kappa(x)\alpha(u,x) = \alpha'(u,\theta(x))$. This completes the proof of one-to-one correspondence. 
    
    Next we show that $\alpha$ satisfies an edge in $\oE$ if and only if $\alpha'$ satisfies the corresponding edge in $\oE'$. We do this for one of the two types of edges in the constructions of $\psi$ and $\psi'$. The other case can be proved similarly. 
    
    Fix an arbitrary edge $(u,v) \in E$ and $x,\mu \in \{\pm 1\}^m$. By the construction of $\psi$, assignment $\alpha$ satisfies the edge $((u,x),(v,(x\circ \pi_{v,u})\mu))$ if and only if $\alpha_u(x) = \alpha_v((x\circ \pi_{v,u})\mu)$. By the definition of $\alpha$, this is equivalent to $\kappa(x)\alpha'_u(\theta(x)) = \kappa((x\circ \pi_{v,u})\mu)\alpha'_v(\theta((x\circ \pi_{v,u})\mu))$. Finally by the construction of $\psi'$, this is equivalent to saying that $\alpha'$ satisfies the edge $((u,\theta(x)),(v,\theta((x\circ \pi_{v,u})\mu))).$
\end{proof}

\bibliographystyle{unsrt}
\bibliography{refs}
\end{document}